\newcommand{\floor}[1]{\lfloor #1 \rfloor}
\newcommand{\schur}{\mathrm{s}}
\newcommand{\Tr}{\mathrm{Tr}}
\newcommand{\vecx}{{\mathbf{x}}}
\newcommand{\veca}{{\mathbf{a}}}
\newcommand{\vece}{{\mathbf{e}}}
\newcommand{\perm}{{\mbox {perm}}}
\newcommand{\pf}{\mathrm{pf}}
\newcommand{\sgn}{\mathrm{sgn}}
\newcommand{\sg}{\mathrm{S}}
\newcommand{\coeff}{\mathrm{coeff}}
\newcommand{\npt}{\mathrm{NPT}}
\newcommand{\chartc}{\mathrm{char}}
\newcommand{\poly}{\mathsf{poly}}
\newcommand{\SL}{\mathrm{SL}}
\newcommand{\GL}{\mathrm{GL}}
\newcommand{\rank}{\mathrm{rank}}
\theoremstyle{plain}
\newtheorem{theorem}{Theorem}
\newtheorem{prop}{Proposition}[section]
\newtheorem{cor}[prop]{Corollary}
\newtheorem{lemma}[prop]{Lemma}
\theoremstyle{definition}
\newtheorem{conj}[prop]{Conjecture}
\newtheorem{question}[prop]{Question}
\newtheorem{ex}[prop]{Example}
\newtheorem*{remark}{Remark}
\newcommand{\bitlength}[1]{\langle #1 \rangle}
\newcommand{\vp}{\textrm{VP}}
\newcommand{\vpws}{\textrm{VP}_{\textrm{ws}}}
\newcommand{\vnp}{\textrm{VNP}}
\newcommand{\barclass}[1]{\overline{#1}}
\newcommand{\vnpbar}{\barclass{\vnp}}
\newcommand{\vpbar}{\barclass{\vp}}
\newcommand{\vpwsbar}{\barclass{\vpws}}
\newcommand{\Cbar}{\barclass{\mathcal{C}}}
\newcommand{\stable}[1]{\textrm{Stable-}#1}
\newcommand{\svp}{\stable{\vp}}
\newcommand{\svnp}{\stable{\vnp}}
\newcommand{\sC}{\stable{\mathcal{C}}}
\newcommand{\svpbar}{\stable{\vpbar}}
\newcommand{\svnpbar}{\stable{\vnpbar}}
\newcommand{\sCbar}{\stable{\Cbar}}
\newcommand{\newton}[1]{\textrm{Newton-}#1}
\newcommand{\nvp}{\newton{\vp}}
\newcommand{\nvnp}{\newton{\vnp}}
\newcommand{\nC}{\newton{\mathcal{C}}}
\newcommand{\nvpws}{\newton{\vpws}}
\newcommand{\nvpbar}{\newton{\vpbar}}
\newcommand{\nvnpbar}{\newton{\vnpbar}}
\newcommand{\nCbar}{\newton{\Cbar}}
\newcommand{\starclass}[1]{#1^{\textrm *}}
\newcommand{\vpstar}{\starclass{\vp}}
\newcommand{\vnpstar}{\starclass{\vnp}}
\newcommand{\Cstar}{\starclass{\mathcal{C}}}
\newcommand{\vpbarstar}{\starclass{\vpbar}}
\newcommand{\vnpbarstar}{\starclass{\vnpbar}}
\newcommand{\C}{\mathbb{C}}
\newcommand{\F}{\mathbb{F}}
\newcommand{\N}{\mathbb{N}}
\newcommand{\R}{\mathbb{R}}
\newcommand{\Z}{\mathbb{Z}}
\newcommand{\Q}{\mathbb{Q}}
\begin{document}
\title{Boundaries  of VP and VNP}
\author{Joshua A. Grochow\footnote{Santa Fe Institute, 
\texttt{jgrochow@santafe.edu}}, Ketan D. Mulmuley\footnote{University of Chicago, 
\texttt{mulmuley@uchicago.edu}}, and Youming Qiao\footnote{University of 
Technology Sydney, \texttt{Youming.Qiao@uts.edu.au}}}



\maketitle

\begin{abstract}
One fundamental question in the context of the geometric complexity theory 
approach to the VP  vs. VNP conjecture is whether VP = $\overline{\textrm{VP}}$, 
where VP is the class  of families of polynomials that are of polynomial degree
and can be 
computed by arithmetic circuits  of polynomial size, and 
$\overline{\textrm{VP}}$ is  the class of families of polynomials that are of 
polynomial degree and can be approximated infinitesimally closely 
by arithmetic circuits of polynomial  size.
The goal of this article\footnote{A preliminary version of this paper appears in
\cite{icalp}}
 is to study the conjecture in (Mulmuley, FOCS 2012) that $\overline{\textrm{VP}}$ 
 is not contained in VP.

Towards that end,  we introduce three degenerations of VP (i.e., sets of points in 
$\overline{\textrm{VP}}$), namely
the stable degeneration Stable-VP, the Newton degeneration Newton-VP, and the 
p-definable one-parameter degeneration VP*. We also introduce analogous  
degenerations of VNP.
We show that Stable-VP $\subseteq$ Newton-VP $\subseteq$ VP* $\subseteq$ VNP, and  
Stable-VNP = Newton-VNP = VNP* = VNP. The three notions of degenerations and the 
proof of this result 
shed light on the problem of separating $\overline{\textrm{VP}}$ from VP.

Although we do not yet construct explicit candidates for the polynomial families 
in $\overline{\textrm{VP}}$ \textbackslash VP, we prove results which tell us 
where not to look for such families.
Specifically,  we demonstrate that the 
families in Newton-VP \textbackslash VP based on semi-invariants  of quivers 
would have to be non-generic  by showing 
that, for many finite  quivers (including some wild ones),  any Newton degeneration of 
a generic semi-invariant can be computed by a circuit of polynomial size.
We also show that the Newton degenerations of  perfect 
matching Pfaffians, monotone arithmetic circuits over the reals, and Schur 
polynomials
have polynomial-size circuits. 
\end{abstract}

\section{Introduction}\label{sec:intro}
One fundamental question in the context of the geometric complexity theory (GCT) 
approach (cf. \cite{GCT1,GCT2}, \cite{landsbergmath}, and \cite{GCT5}) 
to the $\vp$  vs. $\vnp$  conjecture in Valiant \cite{Val79} 
is whether $\vp=\vpbar$, where 
$\vp$ is the class  of families of polynomials that are of polynomial degree and 
can be
computed by arithmetic 
circuits 
of polynomial size, $\vnp$ is the class of p-definable families of polynomials, 
and 
$\vpbar$ is the 
class of families of polynomials that are of polynomial degree and can be 
approximated infinitesimally closely by 
arithmetic circuits   of polynomial size. We assume in what follows 
that the circuits are over an algebraically closed field
$\F$.
We call $\vpbar$ the {\em closure} of $\vp$, and $\vpbar \setminus \vp$ the {\em 
boundary} of $\vp$. 
So the question is whether this boundary is  non-empty.
At present, it is not even known if $\vpbar$ is contained in $\vnp$.

The $\vp$ vs. $\vpbar$  question is important   for two reasons.
First, all known algebraic lower bounds for the exact computation of the permanent 
also hold for 
its infinitesimally close approximation.
For example, the known quadratic lower bound for the permanent \cite{Mignon} also 
holds for 
its infinitesimally closely approximation \cite{LR}, and so also the known lower 
bounds in the algebraic depth-three
circuit models \cite{Kayal}; cf. Appendix~B in \cite{GrochowGCTUnifies} for a  
survey 
of the known lower bounds which emphasizes this point.  These lower bounds hold 
because 
some algebraic, polynomial property that is satisfied by the coefficients of the 
polynomials computed by
the circuits in the restricted class under consideration is  not satisfied by the 
coefficients of the permanent.
Since a polynomial property is a closed condition,\footnote{It is defined by the 
vanishing of 
a continuous function, namely, a (meta) polynomial.} the same property is also 
satisfied by the
coefficients of the polynomials that can be approximated infinitesimally 
closely\footnote{This means
the polynomials are the limits of the polynomials computed by the circuits in the 
restricted class 
under consideration.} 
 by   circuits in the restricted 
class under consideration. This is why the same lower bound also holds for 
infinitesimally close approximation.
We expect the same phenomenon  to hold  in the unrestricted algebraic circuit 
model as well.
Hence, it is natural to expect 
that any realistic proof  of the $\vp \not = \vnp$ conjecture will also show
that $\vnp \not \subseteq \vpbar$, as conjectured in \cite{GCT1}.\footnote{Note that if 
$\vnp \not\subseteq \vpbar$ then there exists a polynomial property showing this 
lower bound.}
This is, in fact, the  underlying thesis   of geometric complexity theory stated in \cite{GCT5}.
But, if $\vpbar  \not = \vp$, as conjectured in \cite{GCT5},  this would mean that 
any realistic approach to the $\vp$ vs. $\vnp$ conjecture 
would even have to separate  the permanent from the families in $\vpbar \setminus 
\vp$ with high circuit complexity.\footnote{Although some lower bounds techniques 
in the restricted models do distinguish between different polynomials with high 
circuit complexity (e.g., \cite{razYehudayoff}), we need a better understanding of 
the families in $\vpbar \setminus \vp$ in order to know which techniques in this 
spirit
could even potentially be useful in the setting of the $\vnp$ versus $\vpbar$ 
problem.}

Second, it is shown in \cite{GCT5} that, assuming a stronger form of the $\vnp 
\not \subseteq \vpbar$ conjecture,
the problem NNL (short for Noether's Normalization Lemma) of 
computing the Noether normalization of  explicit varieties 
can be brought down from EXPSPACE, where it is currently, to P, ignoring a 
quasi-prefix. 
The existing EXPSPACE vs. P gap,\footnote{Or, the EXPH vs. P gap, assuming the 
Generalized Riemann 
Hypothesis.} called the geometric complexity theory (GCT) chasm \cite{GCT5}, in 
the complexity
of NNL may be viewed
as the common cause and measure of the difficulty of the fundamental problems in 
geometry (NNL) and complexity theory
(Hardness). If $\vpbar = \vp$, then it follows \cite{GCT5}  that NNL is in PSPACE.
Thus the conjectural inequality  between $\vpbar$ and $\vp$ is the main difficulty 
that needs to be overcome 
to bring NNL from EXPSPACE to PSPACE unconditionally, and 
is the main reason why the  standard techniques in 
complexity theory may  not be expected to work in the context of the 
$\vp \not = \vnp$ conjecture.

The goal of this article is to study
the conjecture in \cite{GCT5} that $\vpbar$ is not contained in $\vp$. 

\subsection{Degenerations of \texorpdfstring{$\vp$}{VP} and 
\texorpdfstring{$\vnp$}{VNP}} \label{sdegen}
Towards that end,  we introduce three  notions of 
degenerations of $\vp$ and $\vnp$; ``degeneration'' is the standard term in 
algebraic geometry for a limit point or infinitesimal approximation. These degenerations are  
subclasses of $\vpbar$ and $\vnpbar$,
respectively; cf. Section~\ref{sdefine} for
formal definitions.

The first notion is that of a stable degeneration. Recall \cite{mumford} that a 
polynomial
$f  \in \F[x_1,\ldots,x_m]$ is called 
{\em stable} with respect to the natural action of $G=\SL(m, \F)$ on 
$\F[x_1,\ldots,x_m]$
 if the $G$-orbit of $f$ is closed in the Zariski topology. If  $\F=\C$, we may 
equivalently say closed in the usual complex topology. Here  $G$ is $\SL(m,\F)$, and not
$\GL(m,\F)$, since the only polynomial in $\F[x_1,\ldots,x_m]$ with a closed orbit with respect 
to the action of $\GL(m,\F)$ is  identically zero. Hence, whenever
we study issues related to stability in this article, 
we  only consider orbits with respect to the $\SL$-action.

We say that a polynomial $f$ is a {\em stable degeneration} of $g \in 
\F[x_1,\ldots,x_m]$  if 
$f$ lies in a  closed $G$-orbit (which is unique \cite{mumford}) in the closure of 
the $G$-orbit of $g$. 
The degeneration is called stable since $f$ in this case is stable.
We say that a polynomial family $\{f_n\}$ is a {\em stable degeneration} of $\{g_n\}$ if
each $f_n$ is a stable degeneration of $g_n$, with respect to the action of $G=\SL(m_n, \F)$, where $m_n$  denotes the number of variables in $f_n$ and $g_n$.
For any class of polynomial families  $\mathcal{C}$, the class $\sC$ is  defined to be the 
class of families of polynomials that are either in $\mathcal{C}$ or are stable 
degenerations thereof.

The second notion is that of a Newton degeneration. 
We say that a polynomial 
$f$ is a {\em Newton degeneration} of $g$ if it is obtained from $g$ by keeping 
only those terms 
whose associated monomial-exponents lie in some  specified face of the Newton 
polytope of $g$. 
We say that a polynomial family $\{f_n\}$ is a {\em Newton  degeneration} of $\{g_n\}$ if
each $f_n$ is a Newton  degeneration of $g_n$. We say that 
$\{f_n\}$ is a {\em linear projection} of $\{g_n\}$ if each $f_n$ is a linear projection of 
$g_n$.\footnote{This means $f_n$ is obtained from $g_n$ by a 
linear (possibly non-homogeneous) change of variables.}
For any class of polynomial families $\mathcal{C}$, the class $\nC$ is  defined to 
be the class of families of polynomials that are 
Newton  degenerations of the polynomial families  in $\mathcal{C}$, or  are
linear projections of such Newton degenerations.\footnote{Taking a Newton 
degeneration and a linear projection need not commute, so the set of Newton 
degenerations alone will not in general be closed under linear projections. 
For example, any polynomial $f$ is a linear projection of a sufficiently large 
determinant, but the Newton degenerations of the determinant only consist of 
polynomials of the form $\det(X')$ where $X'$ is matrix consisting only of 
variables and 0s.}

The third notion, motivated by the notion of p-definability in Valiant 
\cite{Val79}, is that of a p-definable one-parameter 
degeneration.
We say that a family $\{f_n\}$ of 
polynomials  is a {\em p-definable one-parameter degeneration} of a family 
$\{g_n\}$ of polynomials,  if 
$f_n(x)=\lim_{t \rightarrow 0} g_n(x, t)$, where $g_n(x, t)$ is obtained from 
$g_n(x)$ by transforming its variables $x=(x_1,\ldots,x_i,\ldots)$ linearly 
 such that:
(1) the entries of the linear transformation matrix are Laurent polynomials in $t$ 
of possibly exponential degree
(in $n$), and
(2) there exists a small circuit $C_n$ over $\F$ of size polynomial in $n$
such that any coefficient of the Laurent polynomial in any entry of the 
transformation matrix can 
be obtained by evaluating $C_n$ at the indices of that entry and the index of the 
coefficient. It is assumed here that the indices are encoded as lists of 
$0$-$1$ variables, treating $0$ and $1$ as elements of  $\F$.
Thus a p-definable one-parameter degeneration is a one-parameter 
degeneration of 
exponential degree 
that can be encoded by a small circuit.
For any class of polynomial families $\mathcal{C}$, the class $\Cstar$ is  defined to be the class of 
families of polynomials that are 
p-definable one-parameter  degenerations of the  families in $\mathcal{C}$.

The classes $\vpbar$ and  $\vnpbar$ 
are closed under these three types of degenerations 
(cf. Propositions~\ref{psvpbar}, \ref{pnvpbar}, \ref{pbarstar}).
Since we want to compare $\vpbar$ with $\vp$,
and $\vnpbar$ with $\vnp$, we ask how $\vp$ and $\vnp$ behave under these 
degenerations. This is addressed in the following result.

\begin{theorem} \label{tintro1}
\noindent (a)  $\svnp = \nvnp = \vnpstar = \vnp$, and

\noindent (b)  $\svp \subseteq \nvp \subseteq \vpstar \subseteq \vnp$.
\end{theorem}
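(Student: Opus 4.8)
The plan is to prove every inclusion in (a) and (b) by assembling four ingredients, so that everything collapses into the two chains $\vnp\subseteq\svnp\subseteq\nvnp\subseteq\vnpstar\subseteq\vnp$ and $\svp\subseteq\nvp\subseteq\vpstar\subseteq\vnp$. The ingredients are: (i) the trivial inclusion of a class in each of its three degeneration closures; (ii) the Hilbert--Mumford criterion, to pass from stable degenerations to Newton degenerations; (iii) homogenization together with a single diagonal one-parameter subgroup, to realize a Newton degeneration as a p-definable one-parameter degeneration; and (iv) Valiant's criterion \cite{Val79}, to show that p-definable one-parameter degenerations of $\vnp$ stay in $\vnp$. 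Ingredient (i) uses that a polytope is a face of itself and that the identity transformation is admissible, giving $\vnp\subseteq\svnp,\nvnp,\vnpstar$ (and likewise $\vp\subseteq\svp,\nvp,\vpstar$). For (ii), let $f$ be a stable degeneration of $g$ in $m$ variables; by Hilbert--Mumford there is a one-parameter subgroup $\lambda$ of $\SL(m,\F)$ carrying $g$ into the unique closed orbit inside $\overline{\SL(m,\F)\cdot g}$, and after replacing $g$ by an invertible linear image (under which $\vp$ and $\vnp$ are closed) and $\lambda$ by a conjugate we may assume $f=\lim_{t\to 0}\lambda(t)\cdot g$. Diagonalizing, $\lambda(t)=h\cdot\mathrm{diag}(t^{a_1},\dots,t^{a_m})\cdot h^{-1}$ with $\sum_i a_i=0$; if $g'=h^{-1}\cdot g\in\mathcal C$ then existence of the limit forces $\langle a,\alpha\rangle$ to have one sign on the support of $g'$, and $\lim_{t\to 0}\mathrm{diag}(t^{a})\cdot g'$ is precisely $g'$ restricted to the face of $\mathrm{Newt}(g')$ on which $\langle a,\cdot\rangle$ is extremal, i.e.\ a Newton degeneration of $g'$; applying $h$ exhibits $f$ as a linear projection of it, so $f\in\nC$. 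This gives $\svp\subseteq\nvp$ and $\svnp\subseteq\nvnp$.

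For ingredient (iii): $\vp$ and $\vnp$ are closed under homogenization, and a Newton degeneration of $g$ along a face $F$ equals the linear projection ``$x_0\mapsto 1$'' of the Newton degeneration of the homogenization $\widehat g$ along the corresponding face of $\mathrm{Newt}(\widehat g)$; since p-definable one-parameter degenerations are routinely closed under linear projections, it suffices to realize a Newton degeneration of a \emph{homogeneous} $\widehat g\in\mathcal C$ of degree $d$. If that face is exposed by an integer functional $c$ that is maximal on it with value $\gamma$, put $e_i=\gamma-d\,c_i$; then for every monomial exponent $\beta$ of $\widehat g$ one has $\langle e,\beta\rangle=d(\gamma-\langle c,\beta\rangle)$, which vanishes on the face and is strictly positive off it, so $\lim_{t\to 0}\widehat g(\mathrm{diag}(t^{e_1},\dots,t^{e_m})x)$ is exactly the desired Newton degeneration. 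Since $\widehat g$ has polynomial degree, its monomial exponents are polynomially bounded, hence some face-exposing integer functional $c$ has entries of at most exponential magnitude; each diagonal entry $t^{e_i}$ is then a single $t$-monomial of exponentially bounded exponent, and a \emph{non-uniform} polynomial-size circuit $C_n$ can simply hard-code the $\mathsf{poly}(n)$ integers $e^{(n)}_1,\dots,e^{(n)}_{m_n}$ and test equality against the coefficient index. This yields $\nvp\subseteq\vpstar$ and $\nvnp\subseteq\vnpstar$.

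Ingredient (iv) is the crux: $\vnpstar\subseteq\vnp$ (which also gives $\vpstar\subseteq\vnpstar\subseteq\vnp$). Let $f_n(x)=\lim_{t\to 0}g_n(M_n(t)x)$ with $g_n\in\vnp$ and $M_n(t)$ a matrix of Laurent polynomials in $t$ of at most exponential degree whose coefficients are produced by a polynomial-size circuit. Since $g_n(M_n(t)x)$ is a Laurent polynomial in $t$ whose coefficients are polynomial-degree polynomials in $x$, and the limit exists, the coefficient of $x^\alpha$ in $f_n$ equals the coefficient of $t^0$ in the coefficient of $x^\alpha$ in $g_n(M_n(t)x)$. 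Using the standard combinatorial formula expressing the coefficients of $g_n(Ax)$ in terms of the coefficients of $g_n$ and the entries of $A$, and then expanding powers of the Laurent-polynomial entries of $M_n(t)$, this $t^0$-coefficient becomes a sum over exponentially many indices --- each encodable by $\mathsf{poly}(n)$ Boolean variables --- of a product of efficiently computable quantities: coefficients of $g_n$ (themselves a polynomial-size circuit with a single exponential summation, since $g_n\in\vnp$), coefficients of $M_n(t)$ (given by the hypothesized circuit), and small multinomial coefficients, subject to linear constraints forcing the chosen $x$-degrees to sum to $\alpha$ and the chosen $t$-degrees to sum to $0$. By Valiant's criterion for membership in $\vnp$ --- together with the fact that a single exponential summation nested inside another collapses to one --- we conclude $f_n\in\vnp$.

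Assembling the pieces, $\vnp\subseteq\svnp\subseteq\nvnp\subseteq\vnpstar\subseteq\vnp$ forces all four classes equal, proving (a), and $\svp\subseteq\nvp\subseteq\vpstar\subseteq\vnp$ proves (b). The main obstacle is ingredient (iv): one must organize the coefficient extraction so that the a priori doubly-exponential-looking expression is genuinely a single exponential sum of polynomial-size-circuit-computable terms --- carefully tracking the $t^0$ constraint and the exponential $t$-degrees --- and it is precisely here that the polynomial-size circuit encoding built into the definitions of $\vpstar$ and $\vnpstar$ is indispensable; dropping it would only land the limit in $\vnpbar$.
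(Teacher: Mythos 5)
Your four-ingredient decomposition is exactly the circular chain the paper uses---$\vnp\subseteq\svnp\subseteq\nvnp\subseteq\vnpstar\subseteq\vnp$, via the Hilbert--Mumford--Kempf criterion for $\sC\subseteq\nC$, a diagonal one-parameter substitution for $\nC\subseteq\Cstar$, and Valiant-style coefficient extraction for $\vnpstar\subseteq\vnp$---so at the structural level the two proofs coincide. The differences are in how two of the lemmas are established, and both of your variants are sound. For ingredient (iii), the paper simply refers back to its argument that a Newton degeneration along a face cut out by a functional minimizing at height $b$ is obtained as the coefficient of $t^b$ after substituting $x_i\mapsto t^{a_i}x_i$; but since $\Cstar$ is defined by an honest limit $\lim_{t\to 0}g_n(t)$ rather than by extracting a lowest-order $t$-coefficient, one needs $b=0$, and your homogenization-plus-reindexing $e_i=\gamma-d\,c_i$ (which annihilates the face and is strictly positive off it, because every exponent of $\widehat g$ has total degree exactly $d$) handles this normalization cleanly; this is a genuine, if small, piece of care that the paper elides. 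For ingredient (iv), the paper factors through Valiant's $\vnp$-completeness: it writes $g_n$ as a projection of $\perm$ (or $\mathrm{HC}$ in characteristic $2$), so that the outer sum is neatly indexed by permutations and each permanent term becomes a product of $k$ Laurent-polynomial entries whose $t^0$-coefficient is extracted by an inner Boolean sum. You instead work directly with the multinomial expansion of $g_n(M_n(t)x)$, using that the coefficient function of a $\vnp$ family is itself an exponential sum of a small circuit and that the nested exponential sums collapse. Both routes lean on the same Valiant criterion---that exponentially many poly-size-indexed terms, each a small-circuit value, sum to a $\vnp$ family---and both must carry the $t^0$ constraint on a sum of $\mathrm{poly}(n)$ many exponentially-large $t$-exponents, which is a $\mathrm{poly}(n)$-size addition check. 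The permanent route buys a cleaner indexing set; your direct route buys independence from the completeness theorem and is closer in spirit to how $\vnp$ is axiomatized.
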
 

An analogue of this result  also holds for $\vpws$, the class of families 
of polynomials that can be
computed by symbolic determinants of polynomial size. 

\subsection{On $\vpstar$ vs. $\vpbar$ and  $\vp$ vs. $\svp$} \label{squestion}
The statement of Theorem~\ref{tintro1}  tells us nothing as to whether any of the inclusions 
in the sequence  $\vp \subseteq \svp \subseteq \nvp \subseteq \vpstar \subseteq \vpbar$ can be 
expected 
to be strict or not.  But its proof, as discussed below,   does shed   light on 
this subject.

Theorem~\ref{tintro1} is proved by combining the Hilbert-Mumford-Kempf criterion 
for stability \cite{kempf} with
the ideas and results in Valiant \cite{Val79}. The  Hilbert-Mumford-Kempf 
criterion \cite{kempf} shows 
that, for  any polynomial  $f$ in the unique closed $G$-orbit in
the $G$-orbit-closure of any $g \in \F[x_1,\ldots,x_m]$, with $G=\SL(m,\F)$, 
there exists a one-parameter subgroup of $G$ that  drives $g$ to $f$.
Furthermore, by Kempf \cite{kempf}, such a subgroup can  be chosen in a canonical 
manner.
As a byproduct of the  proof of Theorem~\ref{tintro1}, we  get a 
complexity-theoretic  form of this
criterion (cf. Theorem~\ref{thmk1}), which shows that such a  one-parameter group 
can be chosen so that
the resulting one-parameter degeneration of any $\{g_n\} \in \vp$ to $\{f_n\} \in 
\svp$  is   p-definable. Here $f_n$ is a stable degeneration of $g_n$ with respect to the action
of $\SL(m_n,\F)$, where $m_n=\poly(n)$ denotes the number of variables in $f_n$ and $g_n$.
Thus the inclusion of $\svp$ in $\vnp$
ultimately depends on the existence of  a $p$-definable 
one parameter degeneration of $\{g_n\}$ to $\{f_n\}$, 
as provided by the Hilbert-Mumford-Kempf criterion.

However, no such $p$-definable  one parameter degeneration scheme 
is known  if $f_n$ is  allowed to be any polynomial   with a  non-closed 
$\SL(m_n,\F)$-orbit in the $\SL(m_n, \F)$-orbit-closure  of $g_n$, or any polynomial in the $\GL(m_n,\F)$-orbit closure of $g_n$,
regardless of whether the  $\SL(m_n,\F)$-orbit of $f_n$ is closed or not.
Here we consider closedness of the orbits in  the $\GL(m_n,\F)$-orbit-closure of $g_n$ with respect to the action of 
$\SL(m_n,\F)$, not $\GL(m_n,\F)$, since, as pointed out in
Section~\ref{sdegen}, closedness with respect to the $\GL$-action is not interesting.
In other words, we consider the  $\SL(m_n,\F)$- as well as  the $\GL(m_n,\F)$-orbit-closure
of $g_n$ as an  affine $G$-variety, with $G=\SL(m_n,\F)$.

In the context of the $\vp$ vs. $\vpbar$ problem,
one has to consider the $\GL(m_n,\F)$-orbit closure of $g_n$,
since  infinitesimally close approximation involves  $\GL$-transformations.
The $\GL(m_n,\F)$-orbit closures can be much harder than the $\SL(m_n,\F)$-orbit-closures.
For example, if $g_n$ is the determinant, its $\SL(m_n,\F)$-orbit  is already closed \cite{GCT1}, and hence, one 
really needs to understand its $\GL(m_n,\F)$-orbit closure.

If  a  $p$-definable one parameter degeneration 
scheme, akin to the Hilbert-Mumford-Kempf 
criterion for stability, exists 
when  $f_n$ is allowed to be any polynomial  in the $\GL(m_n, \F)$-orbit-closure of $g_n$, 
$\{g_n\} \in \vp$, then it  would follow 
that $\vpbar \subseteq \vpstar$,
and in conjunction with Theorem~\ref{tintro1},  that $\vpbar \subseteq \vnp$. 
This is  one  plausible approach to show that $\vpbar \subseteq \vnp$,
if this is true.\footnote{In this case, separating $\vp$ from 
$\vpbar$ would be stronger than separating $\vp$ from $\vnp$.}
If, on the other hand,  no such $p$-definable one parameter degeneration
scheme  exists when $f_n$ is allowed to be any polynomial 
in the $\GL(m_n, \F)$-orbit-closure of $g_n$, $\{g_n\} \in \vp$, 
then it would be a strong  indication that
$\vpbar$ is not contained  in $\vpstar$, and hence, also not in $\vp$. This would open one possible 
route to formally separate $\vpbar$ from $\vp$.

All the evidence at hand  does, in fact, suggest that such a general scheme may not
exist for the following reasons. 
First, as explained in \cite{mumford} in detail, the Hilbert-Mumford-Kempf criterion 
for stability is intimately related 
to, and in fact, goes hand in hand with  another fundamental result in geometric
invariant theory that,
given any finite dimensional $G$-representation, or more generally, an affine $G$-variety $X$,  $G=\SL(m,\F)$, 
the closed $G$-orbits in $X$
are in one-to-one correspondence with the points of the algebraic variety 
$X/G=\mbox{spec}(\F[X]^G)$,
called the categorical quotient.\footnote{By $\mbox{spec}$ here, we really mean, by abuse of notation, $\mbox{max-spec}$.}
By definition, this is the algebraic variety whose coordinate ring
is $\F[X]^G$, the subring of $G$-invariants in the coordinate ring $\F[X]$ of $X$. But the set of all $G$-orbits in $X$ does   not, in general,  have such a 
natural structure of an algebraic variety \cite{mumford}.  This is why
the book \cite{mumford} focuses on closed $G$-orbits in the construction of the various 
moduli spaces in algebraic geometry.

Second,   from the
complexity-theoretic perspective, the algebraic structure of the set of all $G$-orbits in $X$ seems  much 
harder, in general,
than that of the set of  closed $G$-orbits. For example, it is shown in \cite{GCT5} 
that, if $X$ is a  finite dimensional representation of $G$, then
the set of  closed $G$-orbits in $X$ has a (quasi) explicit system of parametrization
(by a small number of algebraic circuits of small size), assuming that
(a) the categorical quotient $X/G$ is explicit (as conjectured in \cite{GCT5} on the basis
of the algorithmic results therein), and (b) the permanent is hard. In contrast,
it may be conjectured that the set of all $G$-orbits in $X$ does not, in general,  have
an explicit or even a small system of parametrization (by algebraic circuits with $+,-,*,/$, 
and  equality-test  gates), 
since Noether's Normalization Lemma, which plays a crucial role in the parametrization of closed $G$-orbits,
applies only to algebraic varieties.
(The division and equality-test gates are needed here, 
since without them, the outputs of the circuits, being constant on all $G$-orbits, will be $G$-invariant polynomials that cannot distinguish a non-closed $G$-orbit from a $G$-orbit in its closure. By a general result in \cite{rosen}, all $G$-orbits in $X$ can be parametrized, in principle, by a finite number of algebraic circuits of finite size
over the coordinates of $X$,  with $+,-,*,/$, and equality-test gates.)
Formally,  the conjecture is that 
there do not exist for every finite dimensional representation $X$ of $G=\SL(m,\F)$,
$\poly(l,m)$ algebraic circuits of $\poly(l,m)$ size
\footnote{Here the size means the total number of nodes in the circuit. There is
no restriction on the bit-lengths of the constants.}, $l=\dim(X)$,  over the coordinates
$x_1,\ldots,x_l$ of $X$, with constants in $\F$  and gates for $+,-,*,/$, and equality-test, 
such that the outputs of these circuits at the coordinates of any two points $v,w \in X$
are identical iff $v$ and $w$ are in the same $G$-orbit. 
(The gates for division and equality-test  
are not needed for parametrization of closed $G$-orbits in \cite{GCT5}.)

A concrete case that illustrates well the difference between closed $G$-orbits and all $G$-orbits 
is when  $X=M_m(\F)^r$, the space of $r$-tuples of $m\times m$ matrices, with the conjugate (adjoint)
 action
of $G=\SL(m,\F)$. In this case it is known unconditionally
that the set of closed $G$-orbits 
in $X$ has a quasi-explicit (i.e., quasi-$\poly(m,r)$-time computable) parametrization when
the characteristic $p$ of $\F$ is not in $[2,\floor{m/2}]$; cf.  \cite{GCT5} and \cite{fs} for charactetristic
zero, and \cite{GCT5} for positive characteristic. 
In contrast, the best known parametrization \cite{friedland} of all $G$-orbits in $M_m(\F)^r$ (allowing
division and equality-test gates in the algebraic circuits) has exponential complexity.
The known algorithm \cite{sergei} for constructing a canonical normal form of a matrix tuple in $M_m(\F)^r$ 
with respect to
the $G$-action also has exponential complexity,
\footnote{This is because the algorithm in \cite{sergei} needs factorization
of univariate polynomials over extension fields of possibly exponential rank over the base field of definition of
the input.} (though the problem of deciding 
if two points in $M_m(\F)^r$ are in the same $G$-orbit is in $P$ \cite{CIK97,BL08,IKS10}).
The exponential complexity of parametrization of all $G$-orbits in $M_m(\F)^r$ 
may be inherent, since the problem of clasifying all $G$-orbits in 
$M_m(\F)^r$ is  wild \cite{DW05}, when $r \ge 2$. Wildness \cite{drozd,BS03}
 is a universality property in representation theory, analogous to \textrm{NP}-completeness.
 The situation gets even wilder 
when $X$ is a general $G$-representation or an affine $G$-variety. For example, it is known  \cite{belitskii} 
that the problem of classifying all $G$-orbits in $\F^m\otimes \F^m \otimes \F^m$ contains, but is not contained
in the wild problem of classifying  all $G$-orbits in  $M^m(\F)^r$. 

In view of such a fundamental difference
between the algebraic structures of the set of closed $G$-orbits
and the set of all $G$-orbits, from the mathematical as well as the complexity-theoretic perspectives, 
it may be conjectured that a $p$-definable one-parameter degeneration from $\{g_n\} \in \vp$ to $\{f_n\}$, with $f_n$ in 
the $\GL(m_n,\F)$- or $\SL(m_n,\F)$-orbit closure of $g_n$, does not always exist 
if the $G$-orbit of $f_n$, with $G=\SL(m_n,\F)$, is not required to be closed. 
If so, this  would be a strong  indication, as pointed out above,  that
$\vpbar$ is not contained  in $\vpstar$, and hence, also not in $\vp$.

The complexity-theoretic form of the Hilbert-Mumford-Kempf criterion proved in this article
(Theorem~\ref{thmk1}) also
provides an exponential (in $n$) upper bound on the degree of the canonical 
Kempf-one-parameter subgroup 
that drives $g_n$ to $f_n$, with $\{g_n\} \in \vp$ and
$\{f_n\} \in \svp$, where $f_n$ is a stable degeneration of $g_n$.
This canonical Kempf-one-parameter subgroup is known to be the 
fastest way to approach a closed orbit
\cite{Kirwan}. If one could prove a polynomial upper bound on this degree, 
then it would  follow that $\svp=\vp$ (cf. Lemma~\ref{lstrassen}).
On the other hand, if a worst-case super-polynomial lower bound on this degree can 
be  
proved, then it would be a strong  indication that $\svp$, and hence $\vpbar$, are 
different from $\vp$. This would open another possible route 
to formally separate $\vpbar$ from  $\vp$.

\subsection{On the problem of explicit construction}
Next we ask if one can construct an {\em explicit} family in $\nvpws$ that can 
reasonably be conjectured to be not in  $\vpws$ or even $\vp$. With this mind,  we 
first  construct an explicit family $\{f_n\}$ of polynomials
that  can be approximated infinitesimally closely 
by  symbolic determinants of  size $\le n$, but conjecturally cannot be
computed exactly by  symbolic determinants of $\Omega(n^{1+\delta})$ size,
for a small enough positive constant $\delta <1$; cf. Section~\ref{stutte}. This 
construction 
follows a suggestion made in \cite[Section 4.2]{GCT1}.
The family $\{f_n\}$ is a Newton degeneration of the family of perfect matching  
Pfaffians of 
graphs.
However, this family $\{f_n\}$ turns out to be in $\vpws$. So this 
idea needs to be extended much further to construct an explicit family in $\nvpws$ that can 
be conjectured to be not in $\vp$.

To see how this may be possible, note that the  perfect matching  
Pfaffians are derived from a semi-invariant of the  
symmetric quiver with two vertices and one arrow.
This suggests that to upgrade the conjectural
$\Omega(n^{1+\delta})$ lower bound to obtain a candidate for a super-polynomial 
lower bound one could 
replace perfect matching  Pfaffians by  appropriate  representation-theoretic 
invariants (but we do not have to confine ourselves to representation-theoretic invariants;
cf. the remark at the end of Section~\ref{snewton}). 
This leads to the second line of investigation, which we now discuss.

\subsection{On Newton degeneration of generic semi-invariants} \label{snewton}
Our next result suggests that these invariants should  be non-generic 
by showing that, for many finite quivers, including some wild ones, 
Newton degeneration of any generic semi-invariant can be computed by
a symbolic determinant of polynomial size.

A quiver $Q=(Q_0,Q_1)$ \cite{DW00,DZ01} is a directed graph (allowing multiple 
edges) with the set 
of vertices $Q_0$ and the set of arrows $Q_1$.
A linear representation $V$ of a quiver associates to each vertex $x \in Q_0$ a 
vector 
space $V^x$, and to each arrow $\alpha \in Q_1$ a linear map $V^\alpha$ 
from $V^{s \alpha}$ to $V^{t \alpha}$, where $s \alpha$ denotes the start (tail) of
$\alpha$ and $t \alpha$ its target (head). The dimension vector of $V$ is 
the tuple of non-negative integers that associates $\dim(V^x)$ to each vertex $x 
\in Q_0$.
Given a dimension vector $\beta \in \N^{|Q_0|}$, let $\mbox{Rep}(Q,\beta)$ denote 
the space of 
all representations of $Q$ with the dimension vector $\beta$. We have the  natural 
action of
$\SL(\beta):=\prod_{x \in Q_0} \SL(\beta(x), \F)$ on $\mbox{Rep}(Q,\beta)$ by 
change 
of basis.
Let $\mbox{SI}(Q,\beta)=\mbox{Rep}(Q,\beta)^{\SL(\beta)}$ denote the ring of 
semi-invariants.
The {\em generic} semi-invariants in this ring (see \cite{DW00}) will be
recalled  in  Section~\ref{sspecial}.

We will be specifically interested in the following well-known types of quivers, 
cf.  \cite{DW05}.
The $m$-Kronecker quiver is the quiver with two vertices and $m$ arrows between 
the 
two vertices with the same direction. It is
wild if $m \ge 3$.  The $k$-subspace 
quiver is the quiver with $k+1$ vertices $\{x_1,\ldots,x_k,y\}$ and
$k$ arrows $(x_1,y), \ldots, (x_k,y)$. It is wild if $k \ge 5$. The A-D-E Dynkin 
quivers (see
Section~\ref{sgenquiver}) 
are the only quivers of finite representation type---this means they have only
finitely
many indecomposable representations. 

The following  result tells us where {\em not} to look 
for explicit candidate families in $\vpbar \setminus \vp$.

\begin{theorem} \label{tintro2}
Let $Q$ be an $m$-Kronecker quiver, or a $k$-subspace quiver, or an A-D-E Dynkin 
quiver.
Then any Newton degeneration of a generic semi-invariant of $Q$ with dimension 
vector $\beta$ and
degree $d$ can be computed by a weakly skew circuit (or equivalently a symbolic 
determinant)
of $\poly(|\beta|,d)$ size, where $|\beta|=\sum_{x \in Q_0} \beta(x)$.
\end{theorem}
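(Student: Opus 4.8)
The plan is to reduce the theorem to two ingredients. \emph{(i)} For each of the three families of quivers, a generic semi-invariant $c^{W}$ of dimension vector $\beta$ and degree $d$ equals $\det(M)$ for a square matrix $M$ of size at most $d$ whose entries are polynomials of degree at most $|\beta|$ in the coordinates of $\mathrm{Rep}(Q,\beta)$ --- in fact linear forms in the Kronecker and subspace cases --- with coefficients that are fixed scalars read off from the generic representation $W$. \emph{(ii)} Every Newton degeneration of such a $\det(M)$ is again $\det(M^{\circ})$ for a matrix $M^{\circ}$ of the same size and with entries of no larger degree. Granting both, $\det(M^{\circ})$ is computed by a weakly skew circuit --- equivalently, a symbolic determinant --- of size $\poly(|\beta|,d)$, since a determinant of a matrix whose entries are bounded-degree polynomials (each therefore a small formula) can be turned into a weakly skew circuit with only a polynomial blow-up; this is exactly the assertion of the theorem, and the argument also survives linear projections, which replace the coordinates by affine forms without changing the matrix size.

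For \emph{(i)} I would invoke the Schofield / Derksen--Weyman description: the path algebras of these quivers are hereditary, every module $W$ with $\langle\dim W,\beta\rangle=0$ admits a length-one projective resolution $0\to P_{1}\to P_{0}\to W\to 0$, and the determinantal semi-invariant $c^{W}(V)=\det\bigl(\mathrm{Hom}(P_{0},V)\to\mathrm{Hom}(P_{1},V)\bigr)$ is well defined; for a fixed weight the $c^{W}$ with $W$ generic of the prescribed dimension vector span the corresponding graded piece of $\mathrm{SI}(Q,\beta)$, and by definition a generic semi-invariant is such a $c^{W}$. The matrix has size $\sum_{x\in Q_{0}}(\text{multiplicity of }P_{x}\text{ in }P_{0})\,\beta(x)\le d$, and its entries are $\F$-linear combinations of the maps $V^{\alpha}$ composed along paths of $Q$: for the $m$-Kronecker and $k$-subspace quivers every path has length one, giving linear entries, and for an A-D-E Dynkin quiver every path has length at most $|Q_{0}|\le|\beta|$, giving entries of degree at most $|\beta|$. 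I would use the explicit minimal projective presentations available for these quivers to make $M$ fully explicit, in particular to take its coefficients as generic as the construction allows (for the $m$-Kronecker quiver one can take $M=\sum_{i}A_{i}\otimes B_{i}$, where $A_{1},\dots,A_{m}$ are the variable matrices of $V$ and $B_{1},\dots,B_{m}$ is a generic tuple of matrices of the shape forced by orthogonality).

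For \emph{(ii)}, a Newton degeneration of $f=\det(M)$ is an initial form $\mathrm{in}_{w}(f)$: the part of $f$ supported on the face of $\mathrm{Newt}(f)$ minimising $\langle\,\cdot\,,w\rangle$, for some weight $w$ on the coordinates. Let $\mu_{jk}$ be the least $w$-weight of a monomial of $M_{jk}$ and let $M^{\mathrm{in}}_{jk}$ collect the monomials of $M_{jk}$ of that weight. Linear-programming duality for the assignment problem with cost matrix $(\mu_{jk})$ gives potentials $u_{j},v_{k}$ with $u_{j}+v_{k}\le\mu_{jk}$ for all $j,k$ and $\sum_{j}u_{j}+\sum_{k}v_{k}$ equal to the optimal assignment value $W^{\ast}$. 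Scaling the $j$-th row of $M(t^{w}\vecx)$ by $t^{-u_{j}}$ and the $k$-th column by $t^{-v_{k}}$ produces a matrix with a finite limit $M^{\circ}$ as $t\to 0$, where $M^{\circ}_{jk}=M^{\mathrm{in}}_{jk}$ when $u_{j}+v_{k}=\mu_{jk}$ and $M^{\circ}_{jk}=0$ otherwise; hence $\det\!\bigl(M(t^{w}\vecx)\bigr)=t^{-W^{\ast}}\bigl(\det(M^{\circ})+O(t)\bigr)$, so as long as $\det(M^{\circ})\not\equiv 0$ we get $\mathrm{in}_{w}(f)=\det(M^{\circ})$, a matrix of the same size and entry degrees as $M$.

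The one step that needs genuine work, and the expected main obstacle, is the non-vanishing $\det(M^{\circ})\not\equiv 0$ for the selected face. Were the entries of $M$ independent generic forms this would be automatic --- the support bipartite graph of $M^{\circ}$ always carries a perfect matching, since the assignment LP has an integral optimum --- but in $c^{W}$ the entries share the coordinates of $V$ and their coefficients satisfy the relations imposed by the construction, so one must check that the genericity of $W$ still prevents accidental cancellation, uniformly over all faces. I plan to verify this quiver by quiver: for the $m$-Kronecker and $k$-subspace quivers the explicit form of $M$ (such as $\sum_{i}A_{i}\otimes B_{i}$ with $B_{i}$ fully generic) makes $M^{\circ}$ a generic matrix supported on a bipartite graph with a perfect matching, forcing $\det(M^{\circ})\not\equiv 0$; and for the A-D-E Dynkin quivers, finite representation type makes the generic semi-invariants --- hence the finitely many combinatorial shapes of their Newton degenerations --- explicit enough to inspect directly. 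In a residual degenerate case $\det(M^{\circ})\equiv 0$ one iterates the construction on a Schur complement of a maximal nonsingular block of $M^{\circ}$: the matrix size does not grow and the face weight strictly increases, so the process terminates, and the output is in every case a symbolic determinant of size $\poly(|\beta|,d)$.
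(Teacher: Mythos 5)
Your plan diverges from the paper's and has a genuine gap in its central step~(ii).

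The paper never tries to realize a Newton degeneration of a generic semi-invariant as a determinant of a matrix of the \emph{same} shape. Instead it first proves a clean general criterion (Theorem~\ref{thm:small_coeff_complexity}): if $f$ has a weakly skew circuit of size $s$ and a face $Q$ of $\npt(f)$ can be cut out by integral equations of polynomially bounded ``coefficient complexity,'' then $f|_Q$ has a weakly skew circuit of size $\poly(s,n)$, by substituting $x_j \mapsto y^{\sum_i a_{ij}} x_j$, dividing out the minimal $y$-power (Strassen, and Kaltofen--Koiran for weakly skew circuits), and setting $y=0$. It then verifies quiver by quiver that the Newton polytope of the relevant generic semi-invariant has all faces of low coefficient complexity: for the $m$-Kronecker and $k$-subspace quivers this follows from unimodularity of bipartite transportation/matching polytopes, and for A--D--E Dynkin quivers from an explicit analysis of the block structure of the Schofield matrix $d^V_W$ and an induction along the Dynkin diagrams. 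Division is allowed, so the resulting determinant is larger than the original one; nothing forces the Newton degeneration itself to be a determinant of an $M^{\circ}$ of the original size.

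Your step~(ii) is the obstruction, and it is not a side issue you can safely defer. The assignment-LP / row-and-column scaling device only recovers $\mathrm{in}_w(\det M)$ when $\det(M^{\circ}) \not\equiv 0$. But the minimum $w$-weight of a monomial of $\det(M)$ can be \emph{strictly larger} than the optimal assignment value $W^{\ast}$ computed from the entrywise initial weights $\mu_{jk}$; this is exactly the well-known fact that tropicalization does not commute with taking determinants when entries are algebraically dependent --- which they are here, since the Schofield matrix has many repeated blocks ($V^{\alpha}$ and $W^{\alpha}$ occur in multiple positions). Your proposed rescue by Schur complements doesn't close the gap: the Schur complement of a polynomial matrix has rational-function entries, so iterating the construction takes you out of the class of determinants of bounded-degree polynomial matrices, and you give no argument that the process terminates or that the output has the claimed shape and size. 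The ``verify quiver by quiver'' plan is the right instinct, but what needs verifying is precisely the face structure of the Newton polytope --- and once you are doing that, the paper's route via Theorem~\ref{thm:small_coeff_complexity} is both more robust (it tolerates division) and avoids the non-vanishing issue entirely. Your step~(i), invoking Schofield/Derksen--Weyman to write the semi-invariant as a determinant of a small matrix of low-degree entries, is correct, shared with the paper, and the right starting point.
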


The proof strategy for Theorem~\ref{tintro2} is as follows. Define the coefficient 
complexity $\coeff(E)$ of a  set $E$ of integral linear equalities in $\R^m$ as 
the sum of the absolute values of the coefficients of the equalities. Define 
the coefficient complexity of a face of a polytope in $\R^m$ as 
the minimum of $\coeff(E)$, where $E$ ranges over all integral linear equality 
sets that define the face, in conjunction with the
description of the polytope; cf. Section~\ref{scoeff}.

Theorem~\ref{tintro2}  is proved by showing that  the coefficient complexity of 
every face of the
Newton polytope of a generic semi-invariant of any quiver as above is polynomial 
in $|\beta|$ and $d$, though
the number of vertices on a face can be exponential.

In view of this result and its proof, to construct an explicit  family in $\nvpws 
\setminus \vpws$, we should look for
appropriate  {\em non-generic} invariants of representations of finitely generated 
algebras 
whose Newton polytopes have faces with  {\em super-polynomial coefficient 
complexity} and 
{\em super-polynomial number of vertices}.\footnote{Super-polynomial coefficient complexity 
and super-polynomial number of vertices 
do not ensure high circuit complexity of Newton degeneration. These are necessary conditions
that should only be taken as guiding signs.}

Finally, we emphasize that 
we do not have to confine ourselves to $\nvp$ in the search of a specific 
candidate family in $\vpbar \setminus \vp$. We may search within $\vpstar$, or even outside 
$\vpstar$. Indeed, 
it may be easier to identify specific candidate families in 
$\vpbar \setminus \vp$ outside $\vpstar$ than inside $\vpstar$.

\subsection{Organization.}
The rest of this article is organized as 
follows. In 
Section~\ref{sprelim} we cover the preliminaries. 
In Section~\ref{sdefine}, we formally define the three degenerations of $\vp$ and 
$\vnp$.
In Section~\ref{sproofthm1}, we prove Theorem~\ref{tintro1}. 
In Section~\ref{stutte} we construct an  explicit family $\{f_n\}$ that  can be 
approximated infinitesimally closely 
by  symbolic determinants of  size $\le n$, but conjecturally cannot be
computed exactly by  symbolic determinants of $\Omega(n^{1+\delta})$ size, 
for a small enough positive constant $\delta <1$.
In Section~\ref{sspecial}, we prove Theorem~\ref{tintro2}. 
In Section~\ref{sadditional}, we give  additional 
examples of  representation-theoretic symbolic determinants whose
Newton degenerations have small circuits.
All these examples suggest that  explicit families  in $\nvpws \setminus \vpws$
would have to be rather delicate.

\section{Preliminaries} \label{sprelim}
For $n\in \N$, let $[n]:=\{1, \dots, n\}$.
We denote by  $\vecx=(x_1, \dots, x_n)$  a tuple of variables; $\vecx$ may also denote 
$\{x_1, \dots, x_n\}$. Let $\vece=(e_1, \dots, e_n)$ be a 
tuple of nonnegative integers. We usually use $\vece$  as the exponent vector of 
a monomial in $\F[x_1, \dots, x_n]$. Thus, $\vecx^\vece$ denotes the monomial 
with the exponent vector $\vece$. Let $|\vece|:=\sum_{i=1}^n e_i$. 

For a field $\F$, $\chartc(\F)$ denotes the characteristic of $\F$.
Throughout this paper, we assume that $\F$ is algebraically closed. $\sg_n$ denotes the symmetric group consisting of 
permutations of $n$ objects. 

We say that a polynomial $g=g(x_1,\ldots, x_n)$ is a {\em linear projection} of $f=f(y_1,\ldots,y_m)$
if $g$ can be obtained from $f$ by letting $y_j$'s be some (possibly non-homogeneous) linear combinations of $x_i$'s with
coefficients in the base field $\F$.

A family of polynomials $\{f_n\}_{n \in \N}$ is p-bounded if $f_n$ is a polynomial in $\poly(n)$ variables of $\poly(n)$ degree. The  class $\vp$ \cite{Val79} consists of p-bounded polynomial families $\{f_n\}_{n\in\N}$ over $\F$ such that $f_n$ can be computed by an arithmetic circuit over $\F$ of  $\poly(n)$ size. 

\textbf{Convention:} We call a class $\mathcal{C}$ of families of polynomials \emph{standard} if it contains only p-bounded families, and is closed under linear projections.

By a {\em symbolic determinant} of size $m$ over the variables $x_1,\ldots,x_n$, 
we mean 
the determinant of an $m\times m$ matrix, whose each entry is a possibly non-homogeneous linear function of $x_1,\ldots,x_n$ with coefficients in 
the base field $\F$. The class $\vpws$ is the 
class of families of polynomials  that can be computed by weakly skew circuits of polynomial size, or equivalently,
by symbolic determinants of polynomial size \cite{MP08}.

The class $\vnp$ is the class of p-definable families of polynomials \cite{Val79}, that is, those families $(f_n)$ such that $f_n$ has $\poly(n)$ variables and $\poly(n)$ degree, and there exists a family $(g_n(x,y)) \in \vp$ such that $f_n(x) = \sum_{e \in \{0,1\}^{\poly(n)}} g_n(x,e)$. 

The class $\vpbar$ is defined as follows \cite{GCT1,landsbergmath}. 
Over $\F=\C$, we say that  a polynomial family $\{f_n\}_{n\in\N}$ is in $\vpbar$, if there exists 
a family of sequences of polynomials $\{f^{(i)}_n\}_{n\in\N}$ in $\vp$, $i=1, 2, 
\dots$, such that for every $n$, the sequence of polynomials $f^{(i)}_n$, $i=1, 2, 
\dots$, goes infinitesimally close to $f_n$, in the usual complex topology. Here, polynomials are viewed as points in the linear space of polynomials. 
There is a more general definition that works over arbitrary algebraically closed fields---including in positive characteristic---using the Zariski topology. For a direct treatment, see, e.g. \cite[App.~20.6]{BCS97}. The operational version of this definition we use is as follows: $\{f_n(x_1,\dotsc,x_m)\} \in \vpbar$ if there exist polynomials $f_{n,t}(x_1,\dotsc,x_m) \in \vp_{\C((t))}$---$f_{n,t}$ is a polynomial in the $x_i$ whose coefficients are Laurent series in $t$---such that $f_n(x)$ is the coefficient of the term in $f_{n,t}(x)$ of lowest degree in $t$.

The classes  $\vpwsbar$, $\vnpbar$, and $\Cbar$, for any standard class $\mathcal{C}$,
are  defined similarly.

By the {\em determinantal complexity} $\mbox{dc}(f)$  of a polynomial 
$f(x_1,\ldots,x_n)$, we mean the smallest integer $m$ such that
$f$ can be expressed as a symbolic determinant of size $m$ over $x_1,\ldots,x_n$.
By the {\em approximative determinantal complexity} $\overline{\mbox{dc}}(f)$, we 
mean 
the smallest integer $m$ such that
$f$ can be approximated infinitesimally closely by symbolic determinants of size $m$.

Thus the $\vpws \not = \vnp$  conjecture in Valiant \cite{Val79} is equivalent to saying that $\mbox{dc}(\perm_n)$ is not $\poly(n)$,
where $\perm_n$ denotes the permanent of an $n\times n$ variable matrix.
The $\vnp \not \subseteq \vpwsbar$ conjecture in \cite{GCT1} is equivalent to saying that 
$\overline{\mbox{dc}}(\perm_n)$ is not $\poly(n)$.

A priori, it is not at all obvious that $\mbox{dc}$ and $\overline{\mbox{dc}}$ are different complexity measures.
The following two examples should make this clear.

\begin{ex}[Example 9 in \cite{landsberg2}] 
Let $f= x_1^3 + x_2^2 x_3 + x_2 x_4^2$. Then $\mbox{dc}(f) \ge 5$, but $\overline{\mbox{dc}}(f)=3$.
\end{ex}

\begin{ex}[Proposition 3.5.1 in \cite{LR}]
Let $n$ be odd. Given an $n \times n$ complex matrix $M$, let $M_{ss}$ and $M_{s}$ denote its skew-symmetric  and symmetric parts.
Since $n$ is odd, $\det(M_{ss})$=0. Hence, for a variable $t$,  $\det(M_{ss} + t M_s)= t f(M) + O(t^2)$, for some polynomial function
 $f(M)$. Clearly,  $\overline{\mbox{dc}}(f)=n$, since $\det(M_{ss}+t M_s)/t$ goes 
 infinitesimally close to $f(M)$ when $t$ goes to $0$. But $\mbox{dc}(f) > n$.
 \end{ex}

The $\vpws \not = \vpwsbar$ conjecture in \cite{GCT5} is equivalent to saying that there exists a polynomial family $\{f_n\}$ 
such that $\overline{\mbox{dc}}(f_n)=\poly(n)$, but $\mbox{dc}(f_n)$ is not $\poly(n)$.
Instead of this conjecture, we will focus on the $\vp \not = \vpbar$ conjecture in \cite{GCT5}, since 
the considerations for the former conjecture are entirely similar.

A (convex---we will only consider convex ones here) \emph{polytope} is the convex hull in $\R^n$ of a finite set of points. A \emph{face} of a polytope $P$ is the intersection of $P$ with linear halfspace $H = \{v \in \R^n | \ell(v) \geq c\}$ for some linear function $\ell$ and constant $c$ such that $H$ contains no points of the (topological) interior of $P$. Equivalently, a polytope is the intersection of finitely many half-spaces, a half-space $H_{\ell,c} = \{v | \ell(v) \geq c\}$ is tight for $P$ if $P \subseteq H_{\ell,c}$ and $P \nsubseteq H_{\ell,c'}$ for any $c' > c$, and a face of $P$ is the intersection of $P$ with a half-space of the form $H_{-\ell,-c}$ where $H_{\ell,c}$ is tight for $P$.

\section{Degenerations of \texorpdfstring{$\vp$}{VP} and \texorpdfstring{$\vnp$}{VNP}} \label{sdefine}
To understand the relationship between $\vp, \vnp$, and their closures 
$\vpbar$ and $\vnpbar$, 
we now introduce three  degenerations of $\vp$ and $\vnp$.
The considerations for $\vpws$ and $\vpwsbar$ are entirely similar.

\subsection{Stable degeneration}
First we define stable degenerations of $\vp$ and $\vnp$.

Consider the natural action of $G=\SL(n, \F)$ on   $\F[\vecx]=\F[x_1,\ldots,x_n]$ 
that maps 
$f(\vecx)$ to $f(\sigma^{-1} \vecx)$ for any $\sigma \in G$. Following Mumford et 
al. \cite{mumford},  call  $f=f(\vecx) \in \F[\vecx]$ {\em stable} (with respect 
to the $G$-action) 
if the $G$-orbit of $f$ is Zariski-closed. It is known \cite{mumford} that
the closure of the $G$-orbit of any $g \in \F[\vecx]$ contains a unique closed $G$-orbit.
We say that $f$ is a {\em stable degeneration} of $g$ if $f$ lies in the unique 
closed $G$-orbit in the $G$-orbit-closure of
$g$. (If the $G$-orbit of $g$ is already closed then this just means that $f$ lies 
in the $G$-orbit of $g$.) 

We now define the class $\sC$, the stable degeneration of any standard class $\mathcal{C}$,  as follows.
We say that $\{f_n\}_{n \in \N}$ is in $\sC$ if (1) $\{f_n\} \in \mathcal{C}$, or (2)
there exists $\{g_n\}_{n\in\N}$ in $\mathcal{C}$ such that  each $f_n$ is a stable degeneration 
 of $g_n$ with respect to the action of $G=\SL(m_n, \F)$, where $m_n=\poly(n)$ 
 denotes the 
number of variables in $f_n$ and $g_n$.

\begin{prop} 
For any standard class $\mathcal{C}$ (cf. Section~\ref{sprelim}), $\sC \subseteq \Cbar$. In particular, $\svp \subseteq \vpbar$ and  $\svnp \subseteq \vnpbar$. 
\end{prop}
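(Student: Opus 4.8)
\textit{Proof plan.} The plan is to reduce the definition of $\sC$ to the operational characterization of $\Cbar$ recalled in Section~\ref{sprelim}. Case (1) of the definition of $\sC$ is immediate: if $\{f_n\} \in \mathcal{C}$, then the constant-in-$t$ family $f_{n,t} := f_n$ witnesses $\{f_n\} \in \Cbar$. So fix $\{f_n\} \in \sC$ arising from case (2), witnessed by $\{g_n\} \in \mathcal{C}$ with each $f_n$ a stable degeneration of $g_n$ with respect to $G_n := \SL(m_n,\F)$, where $m_n = \poly(n)$ is the common number of variables of $f_n$ and $g_n$. By definition $f_n$ lies in the unique closed $G_n$-orbit inside $\overline{G_n\cdot g_n}$; in particular $f_n \in \overline{G_n\cdot g_n}$.

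Since $G_n$ is reductive and $f_n$ lies in this closed orbit, the Hilbert--Mumford--Kempf criterion \cite{kempf} supplies a one-parameter subgroup $\lambda_n\colon \mathbb{G}_m \to G_n$ driving $g_n$ to $f_n$. Writing $\lambda_n(t) = P_n\,\mathrm{diag}(t^{a_1},\dots,t^{a_{m_n}})\,P_n^{-1}$ with $P_n \in \GL(m_n,\F)$ and $\sum_i a_i = 0$, the entries of $\lambda_n(t)$ are Laurent polynomials in $t$ and $\det \lambda_n(t) = 1$, so $\lambda_n(t) \in \SL(m_n,\F((t)))$; and ``drives $g_n$ to $f_n$'' means precisely that the polynomial $g_n(\lambda_n(t)^{-1}\vecx)$ has all its $\vecx$-coefficients in $\F[[t]]$ and reduces, at $t=0$, to $f_n$. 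Set $f_{n,t}(\vecx) := g_n(\lambda_n(t)^{-1}\vecx)$. This is a linear projection of $\{g_n\}$ over $\F((t))$, via the change of variables $\vecx \mapsto \lambda_n(t)^{-1}\vecx$; since $g_n$, viewed over $\F((t))$, still lies in $\mathcal{C}_{\F((t))}$ and $\mathcal{C}$ is standard, we get $\{f_{n,t}\} \in \mathcal{C}_{\F((t))}$, with the number of variables unchanged and the degree not increased, so p-boundedness persists. By construction $f_n$ is the coefficient of the lowest-degree-in-$t$ term of $f_{n,t}$, which is exactly the operational condition for membership in $\Cbar$. Taking $\mathcal{C}=\vp$ and $\mathcal{C}=\vnp$ (both p-bounded and closed under linear projections, hence standard) gives $\svp\subseteq\vpbar$ and $\svnp\subseteq\vnpbar$.

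I do not expect a serious obstacle. The one place where the hypothesis is genuinely used is the claim that $\lim_{t\to 0}\lambda_n(t)\cdot g_n$ exists \emph{literally} at $t=0$ and equals $f_n$ (equivalently, that $g_n(\lambda_n(t)^{-1}\vecx)$ has no poles in $t$); this is exactly what being a stable degeneration---i.e.\ landing in a closed orbit---buys us through the Hilbert--Mumford--Kempf criterion, and it would fail in general for a point lying in a non-closed orbit of the closure. If one prefers not to invoke reductivity, the curve-selection lemma produces the same $\lambda_n(t)\in\SL(m_n,\F((t)))$ directly from $f_n\in\overline{G_n\cdot g_n}$, possibly after replacing $t$ by a root $t^{1/e}$, which is harmless since $\F((t^{1/e}))$ is again a Laurent-series field. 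Finally, the only degenerate case, $f_n=0$ with $g_n\neq 0$, is handled trivially, the zero family being in $\Cbar$.
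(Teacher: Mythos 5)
Your proof is correct, but it deploys a heavier tool than the paper needs for this particular proposition. The paper's own argument (Section~\ref{sdefine}) is a two-line observation: $f_n$ lying in the $\SL(m_n,\F)$-orbit-closure of $g_n$ means, by the very definition of closure, that $f_n$ is an infinitesimal approximation of polynomials $\sigma\cdot g_n$ with $\sigma\in\SL(m_n,\F)$; since $\mathcal{C}$ is standard (closed under linear projections, and the $\SL$-action preserves degree and variable count), each $\sigma\cdot g_n\in\mathcal{C}$, so $\{f_n\}\in\Cbar$ directly. Note that this argument never uses that the orbit of $f_n$ is \emph{closed}; it works for any point of the orbit closure.

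What you do is invoke the Hilbert--Mumford--Kempf criterion to produce an explicit one-parameter subgroup $\lambda_n(t)$ realizing the limit, and then unwind that into the operational Laurent-series characterization of $\Cbar$. That is valid, and in fact it is precisely the argument the paper runs later to prove the \emph{stronger} inclusion $\sC\subseteq\nC$ (Theorem~\ref{tsvnp}); so your route effectively proves the proposition as a corollary of a sharper result. The trade-off: your argument genuinely exploits the closedness of the orbit of $f_n$ (without which Hilbert--Mumford--Kempf is unavailable), which the weaker statement does not require, and it is somewhat longer. On the other hand, if one insists on being careful with the paper's fully general, arbitrary-characteristic, Laurent-series definition of $\Cbar$, your explicit one-parameter construction is the cleanest way to check that the formal-limit witness actually exists, where the paper's proof simply appeals to the closure definition. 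One minor quibble: your aside about passing to $t^{1/e}$ is not actually needed in the case at hand, since Hilbert--Mumford--Kempf already hands you a genuine one-parameter subgroup, avoiding ramification; the curve-selection alternative you mention would need it, but you don't invoke that route.
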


\begin{proof} 
Suppose $\{f_n(x_1,\ldots,x_{m_n})\}$ is in $\sC$. This means there exists a 
family $\{g_n(x_1,\ldots,x_{m_n})\}$ in $\mathcal{C}$ such that, for each $n$, $f_n$ is in 
the $\SL(m_n, \F)$-orbit closure of $g_n$. This means $f_n$ can be approximated 
infinitesimally
closely by polynomials in $\mathcal{C}$, hence $\{f_n\}$ is in $\Cbar$. 
\end{proof} 

\begin{prop} \label{psvpbar}
$\sCbar = \Cbar$, in particular $\svpbar = \vpbar$, and $\svnpbar = \vnpbar$.
\end{prop}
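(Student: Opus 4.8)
The plan is to prove the two inclusions $\sCbar \subseteq \Cbar$ and $\Cbar \subseteq \sCbar$ separately, the second being essentially trivial and the first being the substance. For $\Cbar \subseteq \sCbar$: since $\mathcal{C} \subseteq \sC$ by definition (clause (1)), taking closures gives $\Cbar \subseteq \sCbar$ immediately. So the work is in showing $\sCbar \subseteq \Cbar$, i.e., that taking stable degenerations and then closures does not take us outside of $\Cbar$.

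For the main inclusion, I would first recall from the previous proposition that $\sC \subseteq \Cbar$. The naive hope is then to apply closure-monotonicity and idempotence: $\sCbar \subseteq \overline{\Cbar} = \Cbar$, using that $\Cbar$ is closed, i.e. $\overline{\overline{\mathcal{D}}} = \overline{\mathcal{D}}$ for any standard class $\mathcal{D}$. The one thing to check is that $\Cbar$ is itself a standard class (p-bounded, closed under linear projections) so that closure is idempotent in the relevant sense; p-boundedness of $\Cbar$ holds because degree and number of variables are preserved in the limit (this is built into the operational definition of closure via $\vp_{\C((t))}$ and taking the lowest-degree coefficient in $t$), and closure under linear projection of $\Cbar$ follows from closure of $\mathcal{C}$ under linear projection together with the fact that a linear substitution commutes with taking limits / lowest-$t$-degree coefficients. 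Then the chain $\sC \subseteq \Cbar$ gives $\sCbar \subseteq \overline{\Cbar} = \Cbar$ by monotonicity and idempotence of closure.

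The step I expect to be the genuine obstacle — or at least the one requiring care rather than a one-liner — is the idempotence $\overline{\overline{\mathcal{D}}} = \overline{\mathcal{D}}$, i.e. that a family which is a limit of limits-of-$\mathcal{C}$-families is itself a limit of $\mathcal{C}$-families. Over $\C$ with the classical topology one is tempted to invoke a diagonalization argument (pick the $i$-th approximant of the $i$-th approximating sequence), but one must be careful that the two successive limit processes are uniform enough in $n$; the cleaner route is to use the operational $\C((t))$-definition from Section~\ref{sprelim}: if $\{h_n\}$ is the lowest-$t$-degree-coefficient of some $h_{n,t} \in \vp_{\C((t))}$, and each coefficient-family of $h_{n,t}$ is itself in $\Cbar$, one composes the two formal-Laurent-series parameters (say $t$ and $s$) into a single one by a substitution like $s \mapsto t^N$ for $N$ large enough (depending only on the degrees involved, which are polynomially bounded), reducing a two-parameter degeneration to a one-parameter one living in $\mathcal{C}_{\C((t))}$. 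This is the standard argument that $\overline{\overline{\mathcal{C}}} = \overline{\mathcal{C}}$ for Zariski closure of orbit-type constructions, and I would cite or adapt the treatment in \cite[App.~20.6]{BCS97} rather than redo it in detail. Once idempotence and the standardness of $\Cbar$ are in hand, the proposition follows, and specializing $\mathcal{C} = \vp$ and $\mathcal{C} = \vnp$ gives $\svpbar = \vpbar$ and $\svnpbar = \vnpbar$.
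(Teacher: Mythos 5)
Your proof addresses a statement that is not quite the one in the proposition. The macro $\sCbar$ in the paper denotes $\textrm{Stable-}\overline{\mathcal{C}}$, i.e., the class of stable degenerations \emph{of families in $\Cbar$}; it does \emph{not} denote $\overline{\textrm{Stable-}\mathcal{C}}$, the closure of $\sC$. Your argument --- ``$\sC \subseteq \Cbar$, take closures, use idempotence'' --- proves the (also true, but different) statement $\overline{\textrm{Stable-}\mathcal{C}} = \Cbar$. Similarly, your justification of $\Cbar \subseteq \sCbar$ by ``taking closures of $\mathcal{C} \subseteq \sC$'' is reasoning about $\overline{\sC}$, not about $\textrm{Stable-}\Cbar$.

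That said, the underlying ideas carry over with a small reorganization. The correct route for the intended statement is to apply the \emph{previous} proposition ($\textrm{Stable-}\mathcal{D} \subseteq \overline{\mathcal{D}}$ for any standard class $\mathcal{D}$) with $\mathcal{D} = \Cbar$ directly: this gives $\textrm{Stable-}\Cbar \subseteq \overline{\Cbar}$, and then idempotence of the closure gives $\overline{\Cbar} = \Cbar$. No ``taking closures of both sides'' is needed. Your observations that (i) $\Cbar$ must itself be verified to be a standard class so the previous proposition applies, and (ii) idempotence $\overline{\Cbar} = \Cbar$ is the only nonformal ingredient, together with your sketch via the $\C((t))$ operational definition (two Laurent parameters merged into one by $s \mapsto t^N$), are all sound and are indeed what makes the one-line ``direct consequence of the definitions'' in the paper rigorous. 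The reverse inclusion $\Cbar \subseteq \textrm{Stable-}\Cbar$ is by clause (1) of the definition applied to $\Cbar$ in place of $\mathcal{C}$, not by ``taking closures.'' So: the ingredients are right, but the statement you prove and the way you assemble them reflect a misreading of where the bar sits.
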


This is a direct consequence of the definitions.

\subsection{Newton degeneration}
Next we define Newton degenerations of $\vp$ and $\vnp$.

Given a polynomial $f\in\F[x_1, \dots, x_n]$, suppose 
$f=\sum_{\vece}\alpha_\vece\vecx^\vece$. We collect the exponent vectors of $f$ 
and form the convex hull of these exponent vectors in $\R^n$. The resulting polytope is 
called 
the \emph{Newton polytope} of $f$, denoted $\npt(f)$. Given  an arbitrary face 
$Q$ of $\npt(f)$, the Newton degeneration of $f$ to $Q$, denoted $f|_Q$, is 
the 
polynomial $\sum_{\vece\in Q}\alpha_\vece\vecx^\vece$.

We now define the class $\nC$, the Newton degeneration of any class $\mathcal{C}$,  as follows: 
$\{f_n\}_{n\in\N}$ is in 
$\nC$, if there exists $\{g_n\}_{n\in\N}$ in $\mathcal{C}$ such that  each $f_n$ is the Newton 
degeneration of $g_n$ to some face of $\npt(g_n)$, or a linear projection of such a Newton degeneration. 

\begin{theorem}  \label{tvpbar}
Let $\mathcal{C}$ be any standard class (cf.  Section~\ref{sprelim}). Then $\nC \subseteq \Cbar$. In particular, $\nvp\subseteq \vpbar$ and $\nvnp \subseteq \vnpbar$.
\end{theorem}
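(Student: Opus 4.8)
The plan is to realize each Newton degeneration as the lowest-order-in-$t$ coefficient of a one-parameter family obtained from $g$ by a diagonal rescaling of the variables by powers of a formal parameter $t$, and then to observe that this operation keeps us inside $\mathcal{C}$ over the Laurent series field $\F((t))$. First I would record the polyhedral fact that every face $Q$ of $\npt(g)$, for $g=\sum_{\vece}\alpha_\vece\vecx^\vece$, is \emph{exposed} by an integral linear functional: there is $w=(w_1,\dots,w_m)\in\Z^m$ with $Q=\{v\in\npt(g) : \langle w,v\rangle = c\}$, where $c:=\min_{v\in\npt(g)}\langle w,v\rangle$. (The whole polytope is exposed by $w=0$ and the empty face by any $w$ whose minimum over $\npt(g)$ is not attained; a proper nonempty face is exposed by a rational functional, which one scales to be integral.) Since the vertices of $\npt(g)$ lie in the support $S=\{\vece : \alpha_\vece\neq 0\}$, the minimum of $\langle w,\cdot\rangle$ over $\npt(g)$ equals its minimum over $S$, and the exponent vectors attaining it are exactly those of $S$ lying on $Q$. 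Substituting $x_i\mapsto t^{w_i}x_i$ in $g$ sends the monomial $\alpha_\vece\vecx^\vece$ to $\alpha_\vece\,t^{\langle w,\vece\rangle}\,\vecx^\vece$, so $g_t(\vecx):= t^{-c}\,g(t^{w_1}x_1,\dots,t^{w_m}x_m)$ is a polynomial in $\vecx$ with coefficients in $\F[t,t^{-1}]\subset\F((t))$, its term of lowest degree in $t$ is its $t^0$-term, and that term equals $\sum_{\vece\in S,\ \langle w,\vece\rangle = c}\alpha_\vece\vecx^\vece = g|_Q$.

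Second, I would check that $g\mapsto g_t$ takes $\mathcal{C}$ into $\mathcal{C}$ over $\F((t))$. The map is the composition of the invertible diagonal change of variables $x_i\mapsto t^{w_i}x_i$ with scaling by the constant $t^{-c}$, i.e.\ a linear projection over $\F((t))$; at the level of computation it amounts to replacing each input gate $x_i$ by the product of $x_i$ with the constant $t^{w_i}$ and multiplying the output by $t^{-c}$. This preserves the number of variables and the degree, and changes the circuit size (arithmetic, or weakly skew, or the $\vp$-witness for a $\vnp$-family) by an additive $O(m)$. Hence for $\{g_n\}\in\mathcal{C}$, with $m_n=\poly(n)$ variables, the family $\{g_{n,t}\}$ lies in $\mathcal{C}_{\F((t))}$, using that $\mathcal{C}$ is standard (p-bounded and closed under linear projections). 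By the operational definition of $\Cbar$, $\{g_n|_{Q_n}\}\in\Cbar$ for any choice of faces $Q_n$ of $\npt(g_n)$.

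Finally, the definition of $\nC$ also allows a linear projection $\pi_n$ applied after the Newton degeneration, so I would invoke that $\Cbar$ is itself closed under linear projections: if $f_n^{(i)}\to f_n$ with $f_n^{(i)}\in\mathcal{C}$, then $\pi_n(f_n^{(i)})\to\pi_n(f_n)$ with $\pi_n(f_n^{(i)})\in\mathcal{C}$ (equivalently, at the Laurent series level, $\pi_n$ composed with the rescaling $g_{n,t}$ above is again a linear projection over $\F((t))$, so $\pi_n(g_n|_{Q_n})$ is still the lowest-$t$-order term of a member of $\mathcal{C}_{\F((t))}$). Therefore $\{\pi_n(g_n|_{Q_n})\}\in\Cbar$, which together with the previous paragraph gives $\nC\subseteq\Cbar$, and in particular $\nvp\subseteq\vpbar$ and $\nvnp\subseteq\vnpbar$.

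I expect the only step needing care — though it is standard polyhedral combinatorics — to be the first one: choosing the integral functional $w$ that exposes $Q$ and confirming that $\min_{v\in\npt(g)}\langle w,v\rangle$ is attained on \emph{exactly} the exponent vectors of $g$ lying on $Q$, so that the $t^0$-term of $g_t$ is precisely $g|_Q$ and nothing more or less. Everything after that is routine circuit surgery and unwinding the definition of $\Cbar$.
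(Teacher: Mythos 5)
Your proposal is correct and follows essentially the same approach as the paper: choose an integral (rational, then scaled) linear functional exposing the face, substitute $x_i\mapsto t^{w_i}x_i$, and read off the Newton degeneration as the lowest-order-in-$t$ coefficient, which is exactly the paper's construction. You are somewhat more careful than the paper in two minor respects — you explicitly discharge the ``linear projection of a Newton degeneration'' case allowed in the definition of $\nC$ (the paper tacitly absorbs this via the closure of $\Cbar$ under linear projections), and you normalize by $t^{-c}$ (the paper leaves the $t^b$ factor in place, which the operational definition of $\Cbar$ already accommodates) — but the argument is the same.
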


\begin{proof}
Let $\{f_n\}_{n\in\N}$ be in $\nC$, and suppose $f_n\in\F[x_1, \dots, x_{m(n)}]$. Then there exists $\{g_n\}_{n\in\N}\in\mathcal{C}$, such that  $g_n\in\F[x_1, 
\dots, x_m]$, $m=m(n)$, and $f_n=g_n|_Q$,
where $Q$ is a face of $\npt(g_n)$. Suppose the supporting hyperplane of $Q$ is 
defined by $\langle\veca, \vecx\rangle=b$, where $\veca=(a_1, \dots, a_m)$. 
If necessary, by replacing $(\veca, b)$ with $(-\veca, -b)$, we make 
sure that for an arbitrary exponent vector $\vece$ in 
$g_n$, $\langle \veca, \vece\rangle\geq b$. That is, among all exponent vectors, 
exponent vectors on $Q$ achieve the minimum value $b$ in the direction $\veca$.

Now introduce a new variable $t$, and replace $x_i$ with $t^{a_i}x_i$ to 
obtain a polynomial $g'_n(x_1, \dots, x_m, t)=g_n(t^{a_1}x_1, \dots, 
t^{a_m}x_m)\in\F[x_1, \dots, x_m, t]$. By the definition of 
$f_n$, 
$g'_n=t^b\cdot f_n+ \text{higher order terms in }t.$
Therefore, $\{f_n\} \in \Cbar$.
\end{proof}

\begin{remark}
In the above proof, it is important that the Newton degeneration of $g_n$ is 
the coefficient of $t^b$, the lowest order term in $t$, and it is not at all clear how one could possibly access higher order terms in $t$ using any kind of degeneration. Note that higher order terms can be $\vnp$-complete: Form a matrix of variables $(x_{i,j})_{i,j\in[n]}$, and consider the polynomial 
$\prod_{j\in[n]}(t^{n+1} x_{1,j}+t^{(n+1)^2}x_{2,j}+\dots+t^{(n+1)^{i}}x_{i,j}+\dots+t^{(n+1)^n}x_{n,j})$. 
The coefficient of $t^{(n+1)+(n+1)^2+\dots+(n+1)^n}$ is then the permanent of 
$(x_{i,j})_{i,j\in[n]}$. (Essentially the same construction appeared as 
\cite[Prop. 5.3]{burgfactor}.) The seeming impossibility of extracting higher-order terms in $t$ is in line with the expectation that $\vnp \not\subseteq \vpbar$.
\end{remark}

Noting that if $\mathcal{C}$ is closed under linear projections, then so is $\Cbar$, we have:

\begin{cor} \label{pnvpbar}
For any standard class $\mathcal{C}$, $\nCbar = \Cbar$. In particular, $\nvpbar = \vpbar$ and $\nvnpbar = \vnpbar$.
\end{cor}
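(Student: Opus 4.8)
The plan is to sandwich $\nC$ between $\mathcal{C}$ and $\Cbar$ and then apply the closure operation, which is monotone and idempotent. Concretely, I would first establish the chain $\mathcal{C} \subseteq \nC \subseteq \Cbar$, then take closures to get $\Cbar \subseteq \nCbar \subseteq \overline{\Cbar}$, and finally collapse the last term using $\overline{\Cbar} = \Cbar$.

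First I would record the trivial inclusion $\mathcal{C} \subseteq \nC$. For any polynomial $f$, the full Newton polytope $\npt(f)$ is a face of itself: in the description of faces from Section~\ref{sprelim} it is cut out by the tight half-space $H_{0,0} = \R^n$, so $f|_{\npt(f)} = f$ is a Newton degeneration of $f$, and precomposing with the identity linear projection shows every family in $\mathcal{C}$ lies in $\nC$. On the other end, Theorem~\ref{tvpbar} gives $\nC \subseteq \Cbar$; here the linear-projection part of the definition of $\nC$ is absorbed because $\Cbar$ is closed under linear projections (as noted just before the corollary, using that $\mathcal{C}$ is standard). Hence $\mathcal{C} \subseteq \nC \subseteq \Cbar$.

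Next I would apply the bar operation, which is monotone, to obtain $\Cbar \subseteq \nCbar \subseteq \overline{\Cbar}$, and then invoke the idempotence $\overline{\Cbar} = \Cbar$: a family that can be approximated infinitesimally closely by families which are themselves approximations of families in $\mathcal{C}$ can already be approximated infinitesimally closely by families in $\mathcal{C}$. Over $\C$ this is a diagonal argument on the approximating sequences; over an arbitrary algebraically closed field it follows from the operational $\C((t))$-definition by collapsing an iterated construction over $\C((s))((t))$ to a single-parameter one via a substitution such as $t \mapsto s^N$ for a suitable $N$ — equivalently, it is the statement that the closure of a closure is a closure. I would record this as a direct consequence of the definitions, in the same spirit as the remark following Proposition~\ref{psvpbar}. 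Combining the displayed inclusions with this identity gives $\nCbar = \Cbar$, and specializing to the standard classes $\mathcal{C} = \vp$ and $\mathcal{C} = \vnp$ yields $\nvpbar = \vpbar$ and $\nvnpbar = \vnpbar$.

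There is no genuine obstacle here — the statement is a corollary of Theorem~\ref{tvpbar} — but if one insists on isolating a delicate point, it is the idempotence $\overline{\Cbar} = \Cbar$, which packages a standard but slightly fussy argument about collapsing iterated degenerations (or iterated Laurent series) into a single one; every other step is purely formal.
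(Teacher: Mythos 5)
The proposal misparses the notation, and this changes the statement. In the paper, $\nCbar$ means ``Newton applied to $\Cbar$'' (Newton degenerations, and linear projections thereof, of families in $\Cbar$), not $\overline{\nC}$, the closure of the class $\nC$. The analogous Proposition~\ref{psvpbar} ($\sCbar = \Cbar$) and Proposition~\ref{pbarstar} ($\vpbarstar = \vpbar$) follow the same pattern of applying a degeneration operator to an already-barred class. Your chain $\mathcal{C} \subseteq \nC \subseteq \Cbar$, followed by taking closures and invoking $\overline{\Cbar} = \Cbar$, correctly proves $\overline{\nC} = \Cbar$; but that is a different statement. A priori a Newton degeneration of a limit and a limit of Newton degenerations need not coincide (Newton polytopes can jump under limits), and your argument never relates the two classes.

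The intended argument hinges on the observation that $\Cbar$ is itself a standard class, so that Theorem~\ref{tvpbar} applies to it directly: p-boundedness passes to limits, and the sentence immediately preceding the corollary supplies closure under linear projections. Applying Theorem~\ref{tvpbar} with $\Cbar$ in place of $\mathcal{C}$ gives $\nCbar \subseteq \overline{\Cbar} = \Cbar$, and the reverse inclusion $\Cbar \subseteq \nCbar$ is trivial (every $f$ is the Newton degeneration of itself to the improper face $\npt(f)$, exactly as you observe). Your remarks about the improper face and about the idempotence $\overline{\Cbar} = \Cbar$ are correct and do feature in the real proof; what is missing is recognizing that the Newton operator sits on the outside of the bar, so Theorem~\ref{tvpbar} must be invoked for $\Cbar$ rather than for $\mathcal{C}$ --- which is precisely why the paper bothers to note that $\Cbar$ is closed under linear projections.
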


\subsection{P-definable one-parameter degeneration} \label{sonepara}
Finally, we define p-definable one-parameter degenerations of $\vp$ and $\vnp$. 
We say that a family $\{f_n(x_1,\ldots,x_{m_n})\}$, $m_n=\poly(n)$,  is a {\em one-parameter degeneration}
of $\{g_n(y_1,\ldots,y_{l_n})\}$, $l_n=\poly(n)$,  of {\em exponential degree}, 
if, for some positive integral function $K(n) =O(2^{\poly(n)})$, there exist
$c_n(i,j,k) \in \F$, 
$1 \le i \le l_n$, $0 \le j \le m_n$, $-K(n) \le k \le K(n)$, 
such that
 $f_n=\lim_{t\rightarrow 0} g_n(t)$, where $g_n(t)$  is obtained from $g_n$ by substitutions of the form 
\[ y_i= a^i_0  + \sum_{j=1}^{m_n}  a^i_j  x_j, \quad 1 \le i \le l_n, \text{ where 
} a^i_j = \sum_{k= -K(n)}^{K(n)} c_n(i,j,k) t^k,  \quad 1 \le i \le l_n, \  0 \le 
j \le m_n.\] 
Note that by \cite{burgfactor}, $\vpbar$ consists exactly of those one-parameter degenerations of $\vp$ of exponential degree.

We say that the family $\{f_n(x_1,\ldots,x_{m_n})\}$, $m_n=\poly(n)$,  is a {\em one-parameter degeneration} of 
$\{g_n(y_1,\ldots,y_{l_n})\}$, $l_n=\poly(n)$, 
of {\em polynomial degree} if $K(n)$ above is $O(\poly(n))$ (instead of 
$O(2^{\poly(n)})$).

We say that a family $\{f_n(x_1,\ldots,x_{m_n})\}$, $m_n=\poly(n)$,  is a {\em p-definable  one-parameter degeneration}
of $\{g_n(y_1,\ldots,y_{l_n})\}$, $l_n=\poly(n)$, 
if, for some $K(n) =O(2^{\poly(n)})$, there exists a $\poly(n)$-size   circuit family $\{C_n\}$ over $\F$ 
such that
 $f_n=\lim_{t\rightarrow 0} g_n(t)$, where $g_n(t)$  is obtained from $g_n$ by substitutions of the form 
\[ y_i= a^i_0  + \sum_{j=1}^{m_n}  a^i_j  x_j, \quad 1 \le i \le l_n, \text{ where 
} a^i_j = \sum_{k= -K(n)}^{K(n)} C_n(i,j,k) t^k,  \quad 1 \le i \le l_n, \  0 \le 
j \le m_n.\] 
Here it is assumed that the circuit $C_n$ takes as input  $\lceil \log_2 l_n \rceil + \lceil \log_2 m_n \rceil + \lceil \log_2 (K(n)+1) \rceil$ many 0-1 variables, which are intended to encode three integers $(i,j,k)$ satisfying $1 \leq i \leq l=l_n$, $0 \leq j \leq m=m_n$, and $|k| \leq K(n)$, 
treating $0$ and $1$ as elements of $\F$.

Thus  a p-definable one-parameter degeneration is a  one-parameter degeneration of 
exponential degree  that
can be specified by a circuit of polynomial size.

\begin{remark}
We can generalize the notion of a one-parameter degeneration slightly by
allowing $C_n$ an additional input $b \in \{0,1\}^{a(n)}$, $a(n)=\poly(n)$, and letting 
\[ a^i_j = \sum_{k= -K(n)}^{K(n)} \left(\sum_{b \in \{0,1\}^{a(n)}} C_n(i,j,k,b)\right) t^k,  \quad 1 \le i \le l_n, \  0 \le j \le m_n.\] 
The following results hold for this more general notion also. 
\end{remark}

For any class $\mathcal{C}$ we now define $\Cstar$, called the p-definable one-parameter degeneration of $\mathcal{C}$, as follows.
We say that $\{f_n\} \in \Cstar$ if there exists $\{ g_n \} \in \mathcal{C}$ such that 
$\{f_n\}$ is a p-definable one-parameter degeneration of $\{g_n\}$.

\begin{lemma}  \label{lvnpstar}
For any standard class $\mathcal{C}$ (cf. Section~\ref{sprelim}), $\nC \subseteq \Cstar$. In particular, 
$\nvp \subseteq \vpstar$ and $\nvnp \subseteq \vnpstar$.
\end{lemma}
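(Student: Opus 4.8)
The plan is to realize each Newton degeneration by a concrete one-parameter substitution of ``monomial'' type, $z_i \mapsto t^{c_i} z_i$, exactly as in the proof of Theorem~\ref{tvpbar}, and then to check that the exponents $c_i$ can be chosen small enough and regular enough to be encoded by a $\poly(n)$-size circuit.

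First I would unwind the definition of $\nC$. If $\{f_n\}\in\mathcal{C}$, then already $\{f_n\}\in\Cstar$ via the trivial substitution $y_i=x_i$, for which $g_n(t)=f_n$ for all $t$. So assume instead there is $\{g_n\}\in\mathcal{C}$ and, for each $n$, a face $Q_n$ of $\npt(g_n)$ with $f_n$ a linear projection of $g_n|_{Q_n}$. Since a composition of two affine-linear substitutions (one with coefficients in $\F[t,t^{-1}]$, one with coefficients in $\F$) is again an affine-linear substitution with coefficients in $\F[t,t^{-1}]$, and taking $\lim_{t\to 0}$ commutes with a fixed $\F$-linear substitution, it suffices to exhibit a $p$-definable one-parameter degeneration of a family in $\mathcal{C}$ to $\{g_n|_{Q_n}\}$; the outer linear projection to $f_n$ is then absorbed into the substitution at the end, enlarging the describing circuit only by a hard-coded $\poly(n)$-size projection matrix. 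Moreover I would assume each $g_n$ is homogeneous of degree $d_n:=\deg g_n$: passing to the homogenization $\hat g_n$ (in one extra variable $z_0$) keeps us inside the classes of interest ($\vp$, $\vnp$, $\vpws$ are closed under homogenization), turns $Q_n$ into the corresponding face $\hat Q_n$ of $\npt(\hat g_n)$, and changes $g_n|_{Q_n}$ only by the harmless linear projection $z_0\mapsto 1$.

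This reduces the lemma to: given homogeneous $g_n$ of degree $d_n$ in $z_0,\dots,z_{m_n}$ and a face $Q_n$ of $\npt(g_n)$, show $\{g_n|_{Q_n}\}\in\Cstar$ with source $\{g_n\}$. I would choose an integral linear functional $\veca=\veca^{(n)}$ whose minimum $b=b_n$ over $\npt(g_n)$ is attained exactly on $Q_n$ (every face of a polytope is the minimizing face of some functional; clearing denominators makes it integral). The naive substitution $z_i\mapsto t^{a_i}z_i$, as in Theorem~\ref{tvpbar}, sends $g_n$ to $t^{b}(g_n|_{Q_n}+O(t))$, whose $t\to 0$ limit is $0$ or $\infty$, not $g_n|_{Q_n}$; normalizing away this $t^{b}$ is the one genuinely non-cosmetic point, and it is exactly where homogeneity enters. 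Instead I substitute $z_i\mapsto t^{\,d_n a_i-b}\,z_i$. A monomial with exponent $\vece$ then picks up the factor $t^{\,d_n\langle\veca,\vece\rangle-b|\vece|}=t^{\,d_n(\langle\veca,\vece\rangle-b)}$, using $|\vece|=d_n$; since $\langle\veca,\vece\rangle\ge b$ with equality precisely for $\vece\in Q_n$, this factor is $t^0$ on $Q_n$ and a strictly positive power of $t$ elsewhere, so $\lim_{t\to 0} g_n(t^{\,d_n a_0-b}z_0,\dots,t^{\,d_n a_{m_n}-b}z_{m_n})=g_n|_{Q_n}$ exactly.

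Finally I would check the quantitative requirements. The vertices of $\npt(g_n)$ lie in $\{0,1,\dots,d_n\}^{m_n+1}$ with $d_n,m_n=\poly(n)$, so a face normal can be taken integral of bit-length $\poly(n)$ (Cramer's rule on a $\poly(n)\times\poly(n)$ system with entries at most $d_n$), and $b$ is an integer of the same bit-length; hence $|d_n a_i-b|\le 2^{\poly(n)}$ and the bound $K(n)=2^{\poly(n)}$ allowed in the definition of $\Cstar$ suffices. The substitution is monomial, $z_i\mapsto t^{c_i}z_i$ with $c_i:=d_n a_i-b$, so the circuit $C_n$ need only output, on an input encoding $(i,j,k)$, the value $1$ if $j=i$ and $k=c_i$ and $0$ otherwise; hard-coding the $\poly(n)$ integers $c_0,\dots,c_{m_n}$ (each of $\poly(n)$ bits) and doing equality tests gives $\poly(n)$ size. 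Composing with the outer linear projection and the dehomogenization turns each $z_i\mapsto t^{c_i}z_i$ into $z_i\mapsto t^{c_i}L_i(x)$ for affine $\F$-linear forms $L_i$, whose coefficients are still single $t$-monomials scaled by elements of $\F$, so the circuit only grows by hard-coding those forms. Thus $\{f_n\}$ is a $p$-definable one-parameter degeneration of $\{\hat g_n\}\in\mathcal{C}$, i.e. $\{f_n\}\in\Cstar$; and since $\vp,\vnp$ are standard and closed under homogenization, $\nvp\subseteq\vpstar$ and $\nvnp\subseteq\vnpstar$. I expect the homogenization/normalization step to be the only real obstacle; the rest is bookkeeping about bit-lengths and circuit sizes.
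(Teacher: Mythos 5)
Your route is essentially the paper's---integral face normal plus monomial substitution $x_i\mapsto t^{a_i}x_i$, bit-length bound from Cramer's rule on the vertices of $\npt(g_n)$, sparse-monomial encoding by a hard-coded $\poly(n)$-size circuit---with one extra step (homogenization) inserted to normalize away the $t^b$ factor. That extra step, however, is not part of the paper's argument and is unnecessary under the paper's intended reading of ``$\lim_{t\to 0}$'': Section~\ref{sprelim} defines $\vpbar$ operationally by taking the coefficient of the term of \emph{lowest degree in $t$}, and Section~\ref{sonepara} identifies $\vpbar$ with the one-parameter degenerations of $\vp$ of exponential degree via~\cite{burgfactor}, so the $\lim_{t\to 0}$ appearing in the definitions of one-parameter and $p$-definable degeneration is to be read in the same normalized sense. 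Under that reading the naive substitution from the proof of Theorem~\ref{tvpbar} already produces $t^{b}\,g_n|_Q + O(t^{b+1})$, whose lowest-degree-in-$t$ coefficient is exactly $g_n|_Q$, and nothing further is needed. Your homogenization detour also quietly narrows the scope of what you prove: a ``standard'' class is only assumed closed under linear projections, not under homogenization, so your argument as written establishes $\nvp\subseteq\vpstar$ and $\nvnp\subseteq\vnpstar$ but not $\nC\subseteq\Cstar$ for an arbitrary standard $\mathcal{C}$; dropping the homogenization recovers the general statement with a shorter argument. The remaining quantitative content---that $\veca$ can be chosen integral with entries of bit-length $\poly(n)$ since $\npt(g_n)\subseteq\{0,\dotsc,\deg g_n\}^{m_n}$ with $\deg g_n, m_n = \poly(n)$, hence $K(n)=2^{\poly(n)}$, and that the one-monomial-in-$t$ substitution composed with a fixed $\F$-linear projection is encodable by a $\poly(n)$-size table-lookup circuit $C_n$---you have exactly right, and this is the observation the paper's one-line proof is pointing at.
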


This follows from the proof of Theorem~\ref{tvpbar}, noting that we may always take the coefficients of a face to have size at most $2^{\poly(n)}$. The following are easy consequences of the definitions:

\begin{prop} \label{pvpbar}
$\vpstar \subseteq \vpbar$, and  $\vnpstar \subseteq \vnpbar$.
\end{prop}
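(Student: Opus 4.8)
The plan is to prove Proposition~\ref{pvpbar}, namely that $\vpstar \subseteq \vpbar$ and $\vnpstar \subseteq \vnpbar$, by unwinding the definitions and appealing to the operational characterization of $\vpbar$ via $\vp_{\C((t))}$ (equivalently, the characterization of $\vpbar$ as one-parameter degenerations of $\vp$ of exponential degree, as noted after the definition of one-parameter degeneration). First I would fix a family $\{f_n\} \in \vpstar$, together with a witnessing family $\{g_n\} \in \vp$ and a $\poly(n)$-size circuit family $\{C_n\}$ and bound $K(n) = O(2^{\poly(n)})$ such that $f_n = \lim_{t \to 0} g_n(t)$, where $g_n(t)$ is obtained from $g_n$ by the substitution $y_i = a^i_0 + \sum_{j=1}^{m_n} a^i_j x_j$ with $a^i_j = \sum_{k=-K(n)}^{K(n)} C_n(i,j,k) t^k$.

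The key observation is that the substituted polynomial $g_n(t)$, viewed as a polynomial in the $x_j$ with coefficients in the field $\C((t))$ of Laurent series (in fact Laurent polynomials) in $t$, is itself computable by a $\poly(n)$-size arithmetic circuit over $\C((t))$: one takes the circuit for $g_n$ and prepends, for each input $y_i$, a small subcircuit computing the linear form $a^i_0 + \sum_j a^i_j x_j$ over $\C((t))$. Here the coefficients $a^i_j \in \C((t))$ are legitimate constants of the ground field $\C((t))$, so this is a bona fide $\vp_{\C((t))}$ circuit of size $\poly(n)$ (the fact that $C_n$ has polynomial size is not even needed for this step — only the degree bound $K(n)$ matters for membership, and polynomiality of $C_n$ is what would be needed for the sharper statement $\vpstar \subseteq \vnp$, but that is Theorem~\ref{tintro1}, not this proposition). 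Thus $\{g_n(t)\} \in \vp_{\C((t))}$. Now I would examine the lowest-order term in $t$: writing $g_n(t)(x) = \sum_{k} h_{n,k}(x) t^k$ with $h_{n,k} \in \C[x]$, the hypothesis $f_n = \lim_{t\to 0} g_n(t)$ means precisely that $f_n$ is the coefficient $h_{n,k_0}$ of the lowest degree $k_0$ in $t$ appearing in $g_n(t)$ (after normalizing so that this limit is finite and nonzero; if the limit as stated is literally the $t^0$ coefficient, one rescales $g_n(t)$ by an appropriate power $t^{-k_0}$, which is still a legal constant of $\C((t))$ and keeps the circuit in $\vp_{\C((t))}$). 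By the operational definition of $\vpbar$ quoted in Section~\ref{sprelim}, this exhibits $\{f_n\} \in \vpbar$.

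For the $\vnp$ statement I would argue in the same way but carry the summation over the Boolean cube along for the ride: if $\{f_n\} \in \vnpstar$ with witness $\{g_n\} \in \vnp$, then $g_n(x,e) = \sum_{e \in \{0,1\}^{\poly(n)}} G_n(x,e)$ for some $\{G_n\} \in \vp$, and performing the same $t$-substitution on the $x$-variables of $G_n$ yields $\{G_n(t)\} \in \vp_{\C((t))}$; summing over $e$ gives $g_n(t) \in \vnp_{\C((t))}$, and taking the lowest-order coefficient in $t$ places $\{f_n\}$ in $\vnpbar$. Alternatively, and perhaps more cleanly, I would simply cite the chain already established in the excerpt: Lemma~\ref{lvnpstar} is not quite enough on its own, but one can route through $\nC$ — however, since $\Cstar$ is defined directly by a limit and the only content is that the limit of a one-parameter family in $\vp_{\C((t))}$ lies in $\vpbar$, the direct argument above is the shortest path. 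The main (and essentially only) obstacle is bookkeeping: one must be careful that the substitution produces Laurent \emph{polynomials} (not series) of exponentially bounded degree so that the resulting object genuinely lies in $\vp_{\C((t))}$, and that the ``lowest-order coefficient'' extraction matches the operational definition of $\vpbar$ exactly; both are routine once the definitions are lined up, which is why the text correctly labels this ``an easy consequence of the definitions.''
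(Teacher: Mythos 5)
Your proof is correct and is essentially the paper's argument routed through the other of the two equivalent characterizations of $\vpbar$ given in Section~\ref{sprelim}: the paper substitutes $t=a$ and observes that each $g_n(a)$ has a small circuit, giving $f_n$ as a topological limit of $\vp$ polynomials, whereas you read the substituted $g_n(t)$ directly as a $\poly(n)$-size circuit over $\C((t))$ and extract the lowest-order coefficient in $t$, matching the operational $\vp_{\C((t))}$ definition. You are also right that the polynomiality of the defining circuit $C_n$ plays no role here and only the degree bound $K(n)$ matters; that observation, and your remark on the rescaling being vacuous when $f_n\neq 0$, are both consistent with the paper's "immediate from the definitions."
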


\begin{proof} 
This is immediate from the definitions. For the first statement,
note that, for any $g_n$ with a small circuit 
and any $a \in \F$, 
$g_n(a)$, which is  obtained from  $g_n(t)$ (cf.  Section~\ref{sonepara})  by setting
$t=a$, also has a small circuit. The situation for the second statement is similar. 
\end{proof} 

\begin{prop} \label{pbarstar}
$\vpbarstar = \vpbar$, and $\vnpbarstar = \vnpbar$.
\end{prop}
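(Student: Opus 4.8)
The plan is to prove the two inclusions $\vpbar\subseteq\vpbarstar$ and $\vpbarstar\subseteq\vpbar$ separately (and word-for-word the same for $\vnpbar$), the first being essentially trivial and the second carrying the content. For the easy direction I would exhibit every $\{f_n\}\in\vpbar$ as a p-definable one-parameter degeneration of \emph{itself}, via the identity substitution $y_i\mapsto x_i$ with $K(n)=0$: then $g_n(t)=f_n$ for all $t$, so $\lim_{t\to0}g_n(t)=f_n$. The only point to check is that this substitution is p-definable, i.e.\ that a $\poly(n)$-size circuit $C_n$ over $\F$ can output $1$ on input $(i,i,0)$, $0$ on $(i,j,0)$ for $j\neq i$, and $0$ on $(i,0,0)$; on binary-encoded indices this is a product of $O(\log n)$ elementary two-bit comparisons. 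Since $\vpbar$ and $\vnpbar$ are standard classes (p-bounded, and closed under linear projections by the remark preceding Corollary~\ref{pnvpbar}), this shows $\mathcal{C}\subseteq\Cstar$ for $\mathcal{C}\in\{\vpbar,\vnpbar\}$, hence $\vpbar\subseteq\vpbarstar$ and $\vnpbar\subseteq\vnpbarstar$.

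For the reverse inclusion $\vpbarstar\subseteq\vpbar$, let $\{f_n\}$ be a p-definable one-parameter degeneration, with parameter $t$ and degree bound $K(n)=2^{\poly(n)}$, of some $\{g_n(y_1,\dots,y_{l_n})\}\in\vpbar$. Using the operational (Laurent-series) definition of $\vpbar$ I would fix $g_{n,s}(y)\in\vp_{\C((s))}$ of $\poly(n)$ circuit size over $\C((s))$ and, after normalizing by a power of $s$, assume $g_{n,s}\in\C[[s]][y]$ with $g_{n,s}|_{s=0}=g_n$. Substituting into $g_{n,s}$ the affine forms $y_i\mapsto a^i_0+\sum_j a^i_jx_j$ that define the degeneration (each $a^i_j\in\C[t,t^{-1}]$, so that after expansion all occurring $t$-degrees are bounded in absolute value by some $K'(n)=2^{\poly(n)}$) is a linear projection, which $\vp$ preserves over any field; this yields $\hat g_n(x)\in\C[[s]][t,t^{-1}][x]$ whose specialization at $s=0$ is the degenerated polynomial whose lowest-$t$-degree coefficient is, by hypothesis, $f_n$. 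I would then collapse the two parameters by substituting $s=\tau^M$, $t=\tau$ for an integer $M=M(n)>2K'(n)$: since $jM+k$ then respects the lexicographic order on pairs $(j,k)$ with $j\geq0$, the $s$- and $t$-gradings do not interfere, so the coefficient of the lowest power of $\tau$ in $\hat g_n|_{s=\tau^M}$ is exactly $f_n$. Finally I would check that $\hat g_n|_{s=\tau^M}\in\vp_{\C((\tau))}$ with $\poly(n)$ circuit size: apply the field embedding $\C((s))\hookrightarrow\C((\tau))$, $s\mapsto\tau^M$, to every constant in the circuit for $g_{n,s}$, and prepend to each input $y_i$ an $O(m_n)$-gate block computing $a^i_0(\tau)+\sum_j a^i_j(\tau)x_j$ (membership in $\vpbar$ places no bound on the $\tau$-degree of the constants). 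By the operational definition of $\vpbar$, $\{f_n\}\in\vpbar$; verbatim with $\vnp$ in place of $\vp$ this gives $\vnpbarstar\subseteq\vnpbar$, and in general the argument yields $\starclass{\Cbar}=\Cbar$ for every standard $\mathcal{C}$.

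The one genuinely delicate step is the two-parameter bookkeeping: the approximation parameter $s$ of the $\vpbar$-witness and the degeneration parameter $t$ must be ``separated'' so that lowest-order-term extraction commutes with both substitutions simultaneously, and the rescaling $s=\tau^M$ with $M$ exponential in $n$ is what achieves this. One could instead phrase the reverse inclusion as an appeal to idempotence of closure, $\overline{\vpbar}=\vpbar$, together with the observation that for each fixed scalar $t=a$ the degenerated polynomial is a linear projection of $g_n$ (so lies in $\vpbar$) and $f_n$ is the limit of $a^{-b(n)}$ times these as $a\to0$; but carrying it out via Laurent series as above sidesteps any need for a uniform circuit-size bound across an approximating sequence. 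Everything else — p-definability of the identity substitution, standardness of $\vpbar$ and $\vnpbar$, and preservation of polynomial circuit size under the substitutions involved — is routine.
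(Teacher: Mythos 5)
Your proof is correct. The paper dismisses this proposition as an immediate consequence of the definitions, and the intended argument is almost certainly the one you sketch in your final paragraph: by the proof of Proposition~\ref{pvpbar}, for each fixed scalar $a\neq 0$ the specialization $g_n(a)$ is a linear projection of $g_n$ and hence lies in $\vpbar$ (with a uniform polynomial size bound over $a$), and then $f_n=\lim_{a\to 0}g_n(a)$ lies in $\vpbar$ because the closure of the bounded-circuit-size set is already closed, i.e.\ taking closure is idempotent. Your main argument takes a genuinely different and more explicit route: you work entirely in the formal Laurent-series model of $\vpbar$, push the degeneration parameter $t$ through the approximating circuit over $\C((s))$, and then collapse the two formal parameters into one by the substitution $s=\tau^M$, $t=\tau$ with $M$ exponential in $n$. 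That is the standard ``Kronecker-substitution'' device for consolidating nested degenerations, and your bookkeeping (choosing $M>2K'(n)$ so that the weight $jM+k$ respects the lexicographic order on $(j,k)$ with $j\geq 0$, so the $\tau^0$ coefficient is exactly $f_n$) is sound. What it buys you is a fully constructive, single-parameter $\vp_{\C((\tau))}$-witness rather than an appeal to topological closure; what the paper's implicit argument buys is brevity and a clean parallel with Proposition~\ref{pvpbar}. Your treatment of the easy inclusion $\vpbar\subseteq\vpbarstar$ via the identity substitution is also correct and does need the small check that you make, since the definition of $\Cstar$, unlike that of $\sC$, has no ``or is already in $\mathcal{C}$'' clause.
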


This is an immediate consequence of the definitions.

\section{\texorpdfstring{$\svnp=\nvnp=\vnpstar=\vnp$}{Stable-VNP = Newton-VNP = VNP* = VNP}} \label{sproofthm1}
We now prove Theorem~\ref{tintro1}, by a circular sequence of inclusions.

\begin{proof}[Proof of Theorem~\ref{tintro1}] Since $\vnp \subseteq \svnp$ by definition, Theorem~\ref{tintro1} (a)  follows from 
the facts that $\svnp \subseteq \nvnp$ (cf. Theorem~\ref{tsvnp} below), 
$\nvnp \subseteq \vnpstar$ (Lemma~\ref{lvnpstar}),  and $\vnpstar \subseteq \vnp$ (cf.  Theorem~\ref{tvnpstar} below).
 
Theorem~\ref{tintro1} (b)  follows from the facts that $\svp \subseteq \nvp$ (cf. Theorem~\ref{tsvnp} below), 
$\nvp \subseteq \vpstar$ (Lemma~\ref{lvnpstar}),  and $\vpstar \subseteq \vnp$ (cf. Corollary~\ref{cvpst} below).
\end{proof}

\begin{theorem} \label{tsvnp}
For any class $\mathcal{C}$ of families of p-bounded polynomials, $\sC \subseteq \nC$. In particular, $\svp \subseteq \nvp$ and $\svnp \subseteq \nvnp$.
\end{theorem}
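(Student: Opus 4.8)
The plan is to deduce $\sC \subseteq \nC$ from the Hilbert-Mumford-Kempf criterion for stability (invoked in the discussion above), together with the elementary fact that the limit of a polynomial under a \emph{diagonal} one-parameter subgroup is exactly its restriction to a face of its Newton polytope. First I would dispose of clause (1) in the definition of $\sC$: if $\{f_n\} \in \mathcal{C}$ then each $f_n$ is the Newton degeneration of $f_n$ to the face $\npt(f_n)$ itself, so $\{f_n\} \in \nC$. Thus I may assume clause (2): there is $\{g_n\} \in \mathcal{C}$ such that each $f_n$ is a stable degeneration of $g_n$ with respect to $G_n := \SL(m_n,\F)$, $m_n = \poly(n)$; that is, $f_n$ lies in the unique closed $G_n$-orbit contained in $\overline{G_n\cdot g_n}$.

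Next I would invoke the Hilbert-Mumford-Kempf theorem to obtain, for each $n$, a one-parameter subgroup $\lambda_n\colon \F^\times \to G_n$ for which $h_n := \lim_{t\to0}\lambda_n(t)\cdot g_n$ exists and lies in $G_n\cdot f_n$. Then $f_n = \tau_n\cdot h_n$ for some $\tau_n\in G_n$, so $f_n$ is an (invertible) linear projection of $h_n$, and everything reduces to showing that $h_n$ is a linear projection of a Newton degeneration of (a linear projection of) $g_n$.

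The main work is identifying $h_n$. Over the algebraically closed $\F$, every one-parameter subgroup of $\SL(m_n,\F)$ lies in a maximal torus and is therefore conjugate into the diagonal one: $\lambda_n(t) = P_n\,\mathrm{diag}(t^{a_1},\dots,t^{a_{m_n}})\,P_n^{-1}$ with $P_n \in G_n$ and $a_i \in \Z$, $\sum_i a_i = 0$. Writing $\veca := (a_1,\dots,a_{m_n})$, $D_t := \mathrm{diag}(t^{-a_1},\dots,t^{-a_{m_n}})$, and $\hat g_n(\vecx) := g_n(P_n\vecx) = \sum_\vece \hat\alpha_\vece\vecx^\vece$ (a linear projection of $g_n$), the action used in the paper (namely $\sigma\cdot g = g(\sigma^{-1}\vecx)$) gives
\[ \lambda_n(t)\cdot g_n(\vecx) = \hat g_n\bigl(D_t\, P_n^{-1}\vecx\bigr) = \sum_{\vece} \hat\alpha_\vece\, t^{-\langle \veca,\vece\rangle}\, (P_n^{-1}\vecx)^\vece . \]
Since the limit as $t\to0$ exists, $\langle\veca,\vece\rangle \le 0$ for every exponent $\vece$ of $\hat g_n$, and the monomials surviving the limit are exactly those with $\langle\veca,\vece\rangle = 0$ — which, all values being $\le 0$, index precisely the face $Q_n$ of $\npt(\hat g_n)$ maximizing $\langle\veca,\cdot\rangle$. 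Hence $h_n(\vecx) = (\hat g_n|_{Q_n})(P_n^{-1}\vecx)$, a linear projection of the Newton degeneration $\hat g_n|_{Q_n}$ of $\hat g_n$. (If that maximum is strictly negative the limit is $0$, so $f_n = 0$; this is handled trivially, $0$ being a linear projection — set all variables to $0$ — of the Newton degeneration of $g_n$ to any vertex other than the origin, such a vertex existing because $g_n$ is then necessarily non-constant.)

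Composing linear projections, $f_n$ is a linear projection of a Newton degeneration of a linear projection of $g_n$, and $m_n = \poly(n)$, $\deg f_n \le \deg g_n = \poly(n)$, so $\{f_n\}$ is p-bounded. When $\mathcal{C}$ is closed under linear projections — which covers $\vp$, $\vnp$, and $\vpws$ — the family $\{\hat g_n\}$ is itself in $\mathcal{C}$, so $\{f_n\} \in \nC$; in particular $\svp \subseteq \nvp$ and $\svnp \subseteq \nvnp$. The point I expect to need the most care is exactly the preliminary change of coordinates $P_n$: a destabilizing one-parameter subgroup of $\SL(m_n,\F)$ is in general only \emph{conjugate} to a diagonal torus, so $h_n$ is visibly a face-restriction only after applying $P_n^{-1}$ — this is precisely why ``Newton degeneration'' and ``linear projection'' must be combined in the definition of $\nC$ (and why one wants $\mathcal{C}$ closed under linear projections for the cleanest statement, as it is in all cases of interest).
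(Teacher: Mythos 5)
Your proof is correct and follows essentially the same route as the paper's: invoke Hilbert--Mumford--Kempf to get a destabilizing one-parameter subgroup, conjugate it into a maximal torus via some $P_n \in G_n$, observe that a diagonal one-parameter limit extracts exactly the face of the Newton polytope where the weight functional $\langle \veca, \cdot\rangle$ is extremized, and then absorb $P_n$ and the remaining group element $\tau_n$ into linear projections. (The paper states this more tersely as ``After a linear change of coordinates, we can assume $\lambda(t)$ is contained in $T$,'' but the content is the same, and for the record the extremum should be a maximum, not a minimum, under the paper's convention $\sigma \cdot g = g(\sigma^{-1}\vecx)$ with $\lambda(t) = \mathrm{diag}(t^{k_i})$ --- a sign slip in the paper.) You are somewhat more careful than the paper in two places: you explicitly separate the element $\tau_n$ returning the torus limit $h_n$ to $f_n$ rather than tacitly choosing $\lambda$ to hit $f_n$ on the nose, and you handle the degenerate case $f_n = 0$, which the paper elides. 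You also correctly flag that replacing $g_n$ by $\hat g_n := g_n(P_n\vecx)$ requires $\mathcal{C}$ to be closed under (invertible) linear substitutions; the theorem as stated says ``any class of families of p-bounded polynomials,'' but $\sC$ is only defined for standard classes and the paper's own proof uses this closure, so the hypothesis is implicitly there and your observation is the honest reading.
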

\begin{proof}
Suppose $\{f_n\} \in \sC$. If $\{f_n\} \in \mathcal{C}$ then there is nothing to show.
Otherwise, there exists 
 $\{g_n\}_{n\in\N}$ in $\mathcal{C}$ such that  each $f_n$ is a stable degeneration 
of $g_n$ with respect to the action of $G=\SL(m_n, \F)$, where $m_n$ denotes the 
number of variables in $f_n$ and $g_n$. 

It suffices to show that $f=f_n(x_1,\ldots,x_m)$, $m=m_n$, is a Newton degeneration of $g=g_n(x_1,\ldots,x_m)$. 
Let $\vecx=(x_1,\ldots,x_m)$. 

By the Hilbert--Mumford--Kempf criterion for stability \cite{kempf}, there exists a one-parameter subgroup $\lambda(t) \subseteq G$
such that $\lim_{t\rightarrow 0} \lambda(t).g = f$. 
Let $T$ be the canonical maximal torus in $G$ such that 
the monomials in $x_i$'s are eigenvectors for the action of $T$.
After  a linear change of coordinates (which is allowed since $\nC$ is closed under linear transformations by definition), we 
can assume that $\lambda(t)$ is contained in $T$. Thus $\lambda(t)=\mbox{diag}(t^{k_1},\ldots,t^{k_m})$
(the diagonal matrix with $t^{k_j}$'s on the diagonal), $k_j \in \Z$, such that $\sum k_j =1$.

It follows that $f$ is the Newton degeneration of $g$ to the face of $\npt(g)$ 
where the linear function
$\sum_j k_j x_j$ achieves the minimum value (which has to be zero). 
\end{proof} 

The following result is subsumed by Theorem~\ref{tvnpstar}; we include its proof here as a warm-up for expository clarity.

\begin{theorem} \label{tnvnp}
$\nvnp \subseteq \vnp$.
\end{theorem}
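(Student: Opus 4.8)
The plan is to show $\nvnp \subseteq \vnp$ directly, by taking a family $\{f_n\}$ that is a Newton degeneration (possibly followed by a linear projection) of a family $\{g_n\} \in \vnp$, and producing an explicit $\vnp$-representation of $f_n$. Since $\vnp$ is closed under linear projections, it suffices to handle the case where $f_n = g_n|_Q$ for some face $Q$ of $\npt(g_n)$. Write $g_n(x) = \sum_{\vece} \alpha_\vece \vecx^\vece$, with the sum over $\vece$ in $\poly(n)$ variables of $\poly(n)$ degree; by $\vnp$-definability there is $\{h_n(x,y)\} \in \vp$ with $g_n(x) = \sum_{e \in \{0,1\}^{p(n)}} h_n(x,e)$, $p(n) = \poly(n)$.

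The key idea is the substitution trick already used in the proof of Theorem~\ref{tvpbar}: the face $Q$ is cut out by a supporting hyperplane $\langle \veca, \vecx \rangle = b$ with $\veca \in \Z^m$, normalized so that $\langle \veca, \vece \rangle \geq b$ for every exponent $\vece$ occurring in $g_n$, with equality exactly on $Q$. First I would bound the bit-complexity of $(\veca, b)$: the Newton polytope of $g_n$ lives in a box of side $\poly(n)$, so each face has a supporting hyperplane with integer coefficients of absolute value $2^{\poly(n)}$ (this is the same remark invoked in Lemma~\ref{lvnpstar}). Then introduce a fresh variable $t$ and form $g'_n(x,t) := g_n(t^{a_1} x_1, \dotsc, t^{a_m} x_m)$, so that $g'_n = t^b f_n(x) + (\text{higher powers of } t)$. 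Correspondingly set $h'_n(x,y,t) := h_n(t^{a_1} x_1, \dotsc, t^{a_m} x_m, y)$, which still satisfies $g'_n(x,t) = \sum_{e} h'_n(x,e,t)$ and, crucially, is still computable by a $\poly(n)$-size circuit over $\F(t)$ — each gate only acquires monomial-in-$t$ prefactors on its inputs. The hard part is that we must extract the single coefficient of $t^b$ as a summation, not a limit: this is exactly where the $\vnp$ summation is spent. The standard device is to treat $t$ as ranging over $0$-$1$ bits encoding its exponent — but more cleanly, I would use the interpolation/coefficient-extraction lemma for $\vnp$: the coefficient of a fixed power $t^b$ in a family of polynomials in $t$ of $2^{\poly(n)}$-degree, each of which is in $\vnp$ as a polynomial in $(x,t)$, is again in $\vnp$. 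Concretely, one adjoins $\poly(n)$ new Boolean summation variables encoding the index $b$ in binary and uses a small circuit (a Lagrange-interpolation-style or binomial-coefficient formula) that sieves out the right coefficient; the key points are that (i) the circuit computing $h'_n$ is small, (ii) $b$ has only $\poly(n)$ bits, so the auxiliary summation is over $\{0,1\}^{\poly(n)}$, and (iii) all arithmetic stays inside $\vp$ at the circuit level.

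I anticipate the main obstacle is making step (ii)–(iii) fully rigorous: isolating the coefficient of $t^b$ where $b = 2^{\poly(n)}$, while only paying $\poly(n)$ extra Boolean summation variables. A naive "$t \in \mu_N$ roots of unity" interpolation uses $N = \deg_t$ which is exponential, hence exponentially many summands — not allowed. The fix is to observe that after the substitution, $g'_n(x,t)/t^b$ has the form $f_n(x) + t \cdot (\cdots)$, and one can instead work with a \emph{different} parametrization: scale so that $b = 0$ (subtract the constant $b/m$ from each $a_i$ if $m \mid b$, or clear denominators and rescale $t$), reducing to extracting the constant term in $t$, i.e. evaluating "at $t = 0$". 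But one cannot literally set $t = 0$ inside a Laurent-monomial circuit. The cleanest route, which I would follow, is the one that Theorem~\ref{tvnpstar} presumably takes in full generality: realize the coefficient extraction as a \emph{nested} $\vnp$ summation — $f_n$ equals a $\poly(n)$-fold Boolean sum of evaluations of $h'_n$ at $x$, at Boolean $e$, and at $t$ ranging over a cleverly chosen $\poly(n)$-size set of field elements (or formal $0$-$1$ exponent-bits) — using that the relevant coefficient, being of index $b$ with $\poly(n)$ bits, is pinned down by $\poly(n)$ "bit-tests" each of which is a small circuit, exactly in the spirit of Valiant's original $\vnp$ closure arguments. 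Once that combinatorial coefficient-sieve is in place, closure of $\vnp$ under $\poly(n)$-fold Boolean summation finishes the proof; closure under the final linear projection then gives $\nvnp \subseteq \vnp$.
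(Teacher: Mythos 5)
You correctly identified the crux of the difficulty: the face normal $\veca$ can have exponentially large entries, so after the substitution $y_i \mapsto t^{a_i}x_i$ the auxiliary polynomial $h'_n(x,e,t)$ has $t$-degree $2^{\poly(n)}$, and the naive root-of-unity or Lagrange interpolation to sieve out $\operatorname{coeff}_{t^b}$ would cost exponentially many summands. However, the fix you then propose never actually materializes. Rescaling to $b=0$ does not help (as you note, one still cannot set $t=0$ on a circuit with negative-degree Laurent gates), and the ``nested $\vnp$ summation with $\poly(n)$ bit-tests'' is left unspecified: for a generic $\vp$ circuit $h_n$, the monomials are not indexed by anything you can enumerate over with $\poly(n)$ Boolean bits, so there is no evident way to write ``keep only the monomials $\vecx^\mu$ with $\langle\veca,\mu\rangle=b$'' as a Boolean sum of small-circuit evaluations. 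Indeed, the hoped-for lemma --- that $\operatorname{coeff}_{t^b}$ of an exponential-$t$-degree, poly-size circuit family is always in $\vnp$ --- is precisely the strength one would need for $\vpstar\subseteq\vp$, which the paper deliberately leaves open (Question~\ref{qcomplexity}(3)). So the gap is real, not cosmetic.

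The missing idea is to route through Valiant's $\perm$-completeness rather than through the raw $\vp$ witness $h_n$. Since $g_n\in\vnp$, one may take $g_n=\perm(X')$ for a $k\times k$ variable/constant matrix $X'$ with $k=\poly(n)$. Now the monomials of $g_n$ \emph{are} indexed by a $\poly(n)$-bit object, namely a permutation $\sigma\in\sg_k$, and the exponent vector $\mu_\sigma$ of the corresponding monomial $\prod_i X'_{i,\sigma(i)}$ is determined in $\poly(n)$ time from $\sigma$. The test $\langle\veca,\mu_\sigma\rangle=b$ is therefore a $\poly(n)$-time Boolean predicate of $\sigma$, and by the standard Turing-machine-to-Boolean-circuit-to-arithmetic-circuit translation (exactly as in Valiant's closure arguments that you allude to), $f_n=\sum_{\sigma}[\langle\veca,\mu_\sigma\rangle=b]\prod_i X'_{i,\sigma(i)}$ is a $\vnp$ expression with only $\poly(n)$ Boolean summation variables. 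It is this explicit combinatorial indexing by permutations --- not a general-purpose coefficient-extraction primitive --- that the paper's proof (Theorem~\ref{tnvnp} and its generalization Theorem~\ref{tvnpstar}) relies on.
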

\begin{proof} 
Suppose $\{f_n\} \in \nvnp$.
If $\{f_n\} \in \vnp$, then there is nothing to show.
Otherwise,  there exists $\{g_n\}_{n\in\N}$ in $\vnp$ such that  each $f_n$ is the Newton 
degeneration of $g_n$ to some face of $\npt(g_n)$, or  a linear projection of such a Newton degeneration. 
Since $\vnp$ is closed under linear projections,  we can assume, without loss of generality, that
$f_n$ is  the Newton 
degeneration of $g_n$ to some face of $\npt(g_n)$.

By Valiant \cite{Val79}, we can assume that $g=g_n(x_1,\ldots,x_m)$, $m=m_n=\poly(n)$, 
is a projection of $\perm(X)$,\footnote{\label{fn:HC}To get the proof to work in characteristic 2 as well, simply use the Hamilton cycle polynomial $HC(X) = \sum_{\text{k-cycles } \sigma \in S_k} \prod_{i \in [k]} x_{i,\sigma(i)}$ instead, which is $\vnp$-complete in any characteristic \cite{Val79}.} where $X$ is a $k\times k$ variable matrix, with $k=\poly(n)$.
This means $g=\perm(X')$, where each entry of $X'$ is some variable $x_i$ or a constant from the base field $\F$.
Since $f=f_n$ is a Newton degeneration of $g$, it follows that there is some 
substitution, as in the proof of Theorem~\ref{tvpbar}, 
$x_j \rightarrow x_j t^{k_j}$, $k_j \in \Z$, such that
$f=\lim_{t \rightarrow 0} \perm(X'(t))$, where $X'(t)$ denotes the matrix obtained from $X'$ after this substitution.

It is easy to ensure that $|k_j| \leq O(2^{\poly(n)})$.  Then, given any permutation $\sigma \in \sg_k$, 
whether 
the corresponding monomial
$\prod_i X'_{i \sigma(i)}$ contributes to $f$ can be decided in $\poly(n)$ time. It follows that the coefficient of a monomial can be computed by an algebraic circuit summed over polynomially many Boolean inputs (convert the implicit $\poly(n)$-time Turing machine into a Boolean circuit, then convert it into an algebraic circuit (as in \cite[Remark 1]{Val79}) that incorporates the constants appearing in the projection). Hence $\{f_n\} \in \vnp$.
\end{proof}

Since $\vp \subseteq \vnp$, the preceding result implies:

\begin{cor} \label{cnvp}
$\nvp \subseteq \vnp$.
\end{cor}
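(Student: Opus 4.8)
The plan is to derive this immediately from Theorem~\ref{tnvnp} together with a trivial monotonicity property of the Newton-degeneration operator, so essentially no new work is required.

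First I would record that the operation $\mathcal{C} \mapsto \nC$ is monotone under inclusion of standard classes: if $\mathcal{C} \subseteq \mathcal{D}$ then $\nC \subseteq \newton{\mathcal{D}}$. This is immediate from the definition in Section~\ref{sdefine}: a family $\{f_n\}$ lies in $\nC$ exactly when some witness family $\{g_n\} \in \mathcal{C}$ has the property that each $f_n$ is a Newton degeneration of $g_n$ to a face of $\npt(g_n)$, or a linear projection of such a Newton degeneration; the very same $\{g_n\}$, now viewed as a member of $\mathcal{D}$, certifies $\{f_n\} \in \newton{\mathcal{D}}$. (In particular, p-boundedness of $\{f_n\}$ is already built into membership in $\nC$, so there is nothing extra to verify.) Applying this with $\mathcal{C} = \vp$, $\mathcal{D} = \vnp$, and using Valiant's inclusion $\vp \subseteq \vnp$, gives $\nvp \subseteq \nvnp$; combining with Theorem~\ref{tnvnp}, namely $\nvnp \subseteq \vnp$, yields $\nvp \subseteq \vnp$.

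I do not expect any obstacle here: all the substance already resides in the proof of Theorem~\ref{tnvnp}, which---via the proof of Theorem~\ref{tvpbar}---writes the Newton degeneration as a one-parameter degeneration $x_j \mapsto t^{k_j} x_j$ of a polynomial-size projection of the permanent (or of the Hamilton cycle polynomial in characteristic~$2$) with $|k_j| = O(2^{\poly(n)})$, and then notes that each permutation's contribution to the $t \to 0$ limit is decidable in polynomial time, hence computable by an algebraic circuit summed over polynomially many Boolean inputs. If a self-contained argument were preferred, one could instead repeat that proof verbatim starting from $\{g_n\} \in \vp \subseteq \vnp$---the $\vp$ case is handled by precisely the same reasoning---but invoking monotonicity is cleaner and avoids the duplication.
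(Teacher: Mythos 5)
Your proposal is correct and is precisely the paper's argument: the paper's one-line deduction ``Since $\vp \subseteq \vnp$, the preceding result implies'' is exactly the monotonicity of $\mathcal{C} \mapsto \nC$ that you spell out, followed by the inclusion $\nvnp \subseteq \vnp$ from Theorem~\ref{tnvnp}.
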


The following result can proved similarly to Theorem~\ref{tnvnp}.

\begin{theorem} \label{tvnpstar}
$\vnpstar  \subseteq \vnp$.
\end{theorem}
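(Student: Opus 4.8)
The plan is to mimic the proof of Theorem~\ref{tnvnp}, but carrying the extra parameter $t$ all the way through Valiant's reduction. Suppose $\{f_n\} \in \vnpstar$, witnessed by $\{g_n\} \in \vnp$ and a $\poly(n)$-size circuit family $\{C_n\}$ encoding Laurent-polynomial substitutions $y_i = a^i_0 + \sum_j a^i_j x_j$ with $a^i_j = \sum_{|k| \le K(n)} C_n(i,j,k) t^k$, $K(n) = O(2^{\poly(n)})$, so that $f_n = \lim_{t\to 0} g_n(t)$, where $g_n(t)$ is $g_n$ after this substitution. The first step is to note that $g_n(t)$, viewed as a polynomial in the $x_j$ with coefficients that are Laurent polynomials in $t$, lies in $\vnp$ over the field $\F(t)$: indeed $\{g_n\} \in \vnp$ means there is $\{h_n(y,z)\} \in \vp$ with $g_n(y) = \sum_{z \in \{0,1\}^{\poly(n)}} h_n(y,z)$, and substituting $y_i \mapsto a^i_0 + \sum_j a^i_j x_j$ into a polynomial-size circuit for $h_n$ costs only a polynomial blowup provided the coefficients $C_n(i,j,k)$ themselves are reachable by small circuits — which is exactly the p-definability hypothesis. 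Here one must be slightly careful: the substitution expression $a^i_j$ is a sum of $2K(n)+1 = 2^{\poly(n)}$ monomials in $t$, so one cannot literally write it out; instead, introduce auxiliary $0$-$1$ summation variables that range over (the binary encodings of) the index $k$, use $C_n$ to produce the coefficient, and multiply by $t^k$ built from repeated squaring — this is precisely the ``generalized one-parameter degeneration'' mechanism noted in the Remark, and it keeps everything inside $\vnp_{\F(t)}$ with only polynomially many extra Boolean summation variables.

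The second step is to extract the lowest-order coefficient in $t$. Write $g_n(t) = \sum_{\vece} P_{\vece}(t)\, \vecx^{\vece}$ where each $P_{\vece}(t)$ is a Laurent polynomial in $t$; then $f_n = \sum_{\vece} c_{\vece}\, \vecx^{\vece}$ where $c_{\vece}$ is the coefficient of $t^{b_n}$ in $g_n(t)$ and $b_n$ is the lowest $t$-degree occurring. Following the proof of Theorem~\ref{tnvnp}, I would pull back to the permanent: by Valiant \cite{Val79} there is a $k \times k$ matrix $X'$, $k = \poly(n)$, whose entries are variables $x_j$ or base-field constants, with $g_n = \perm(X')$ up to linear projection; composing with the $\vnpstar$ substitution, $g_n(t) = \perm(X'(t))$ where each entry of $X'(t)$ is an affine-in-$x$, Laurent-in-$t$ expression of the controlled form above (in characteristic $2$ use $HC$ as in footnote~\ref{fn:HC}). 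Expanding $\perm(X'(t)) = \sum_{\sigma \in \sg_k} \prod_{i} X'(t)_{i,\sigma(i)}$, each term $\prod_i X'(t)_{i,\sigma(i)}$, once we also fix a choice of which monomial to take from each affine factor, contributes a monomial in the $x_j$ times a single power $t^{e(\sigma, \text{choices})}$ with a base-field coefficient; the exponent $e$ is a sum of $\poly(n)$ integers each of absolute value $\le K(n) = 2^{\poly(n)}$, hence is itself an integer of $\poly(n)$ bit-length, computable by a $\poly(n)$-size circuit from $\sigma$ and the choices. The minimum such exponent $b_n$ over all $(\sigma, \text{choices})$ is likewise of $\poly(n)$ bit-length and can be computed (it equals a min over a polynomially-branching search, doable in $\poly(n)$ time).

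The third step assembles a $\vnp$ expression for $f_n$ over $\F$. The coefficient of a given $x$-monomial in $f_n$ is the sum, over all $(\sigma, j_1, \ldots, j_k, k_1, \ldots, k_k)$ — permutation, per-row choice of which affine summand, and per-affine-factor choice of which $t^{k_i}$ term — such that (i) the product of $x$-parts equals that monomial and (ii) $\sum_i k_i'(\ldots) = b_n$, of the product of the relevant base-field constants, each of which is an output of $C_n$ or a constant from Valiant's projection. Condition (ii) is a $\poly(n)$-time predicate on $\poly(n)$-many $0$-$1$ inputs (comparing two $\poly(n)$-bit integers), so by the standard Turing-machine-to-Boolean-circuit-to-arithmetic-circuit conversion (\cite[Remark 1]{Val79}, as in the proof of Theorem~\ref{tnvnp}) it is computed by a $\poly(n)$-size arithmetic circuit; summing this circuit over the polynomially many Boolean selector variables (and incorporating the constants via small circuits, which is where p-definability is essential) exhibits $f_n$ as a p-definable family, i.e. $\{f_n\} \in \vnp$. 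The main obstacle, and the place to be careful, is bookkeeping the exponential-degree parameter $t$: one never materializes the $2^{\poly(n)}$ monomials of an $a^i_j$, and one must verify that the total $t$-exponent of each permutation-term, though exponentially large in magnitude, has polynomial bit-length and is uniformly computable by a small circuit — this is what makes the ``lowest term in $t$'' extraction effective. Once that is granted, the argument is a routine but careful adaptation of Theorem~\ref{tnvnp}, and (since $\vp \subseteq \vnp$) specializing $\mathcal C = \vp$ gives $\vpstar \subseteq \vnp$, i.e. Corollary~\ref{cvpst}.
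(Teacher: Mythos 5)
Your proposal follows the same route as the paper: reduce $g_n$ to (a projection of) $\perm(Y')$ via Valiant, substitute to get $\perm(Y'(t))$, and express the coefficient of $\vecx^\mu$ in $f_n$ as a sum over $\sigma$ and over polynomially many $0$--$1$ selector bits that pick out which summand of each affine entry (and which power of $t$) is chosen, with the coefficients furnished by the $\poly(n)$-size circuit $C_n$. That is exactly the paper's argument, and most of your bookkeeping is sound.

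There is, however, one step where your write-up would give the wrong answer if taken literally. By definition, a p-definable one-parameter degeneration satisfies $f_n=\lim_{t\to 0}g_n(t)$, and when this limit exists it equals the coefficient of $t^0$, not the coefficient of the lowest $t$-degree ``occurring.'' You propose to compute $b_n$ as a minimum of $t$-exponents over $(\sigma,\text{choices})$ and then filter on $\sum_i k_i'=b_n$. But that minimum is taken \emph{before} cancellation: individual permutation-terms can carry negative $t$-powers that cancel when summed, so your $b_n$ could be negative while the true lowest surviving degree is $0$; extracting at $t^{b_n}$ would then yield $0$ rather than $f_n$. (You were likely borrowing the ``lowest-degree'' convention from the operational $\vpbar$ definition in Section~\ref{sprelim}, but that is a different normalization.) The fix is simply to drop the search for $b_n$ and hard-wire condition~(ii) to $\sum_i k_i'=0$, which is what the paper does when it extracts the coefficient of $t^0$ to define $c^\sigma_\mu$. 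With that single change, no minimization is needed --- which also removes the dubious claim that $b_n$ is computable by a ``polynomially-branching search'' --- and your argument aligns with the paper's.
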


\begin{proof} 
Suppose $\{f_n(x_1,\ldots,x_{m_n})\} \in \vnpstar$.
Then  there exists $\{g_n(y_1,\ldots,y_{l_n})\}$ in $\vnp$ such that  each $f_n$ 
is a p-definable one-parameter degeneration of
$g_n$.

By Valiant \cite{Val79}, we can assume that $g=g_n(y_1,\ldots,y_l)$, 
$l=l_n=\poly(n)$, 
is a projection of $\perm(Y)$,\textsuperscript{\ref{fn:HC}} 
where $Y$ is a $k\times k$ variable matrix, with 
$k=\poly(n)$.
This means $g=\perm(Y')$, where each entry of $Y'$ is some variable $y_i$ or a 
constant from the base field $F$.

Since $f=f_n(x_1,\ldots,x_m)$, $m_n=\poly(n)$, is a p-definable one-parameter 
degeneration of $g$,
 for some $K(n) =O(2^{\poly(n)})$, there exists a $\poly(n)$-size   circuit family 
 $\{C_n\}$ over $\F$ 
such that
 $f_n=\lim_{t\rightarrow 0} g_n(t)$, where $g_n(t)$  is obtained from $g_n$ by 
 substitutions of the form 
\[ y_i= a^i_0  + \sum_{j=1}^{m}  a^i_j  x_j, \quad 1 \le i \le l, \] 
where
\[ a^i_j = \sum_{k= -K(n)}^{K(n)} C_n(i,j,k) t^k,  \quad 1 \le i \le l, \  0 \le j 
\le m.\] 

Let $Y'(t)$ be the matrix obtained from $Y'$ after the substitution above.
Given any permutation $\sigma \in S_k$, and any nonnegative integer sequence 
$\mu=(\mu_0,\ldots, \mu_m)$, 
the coefficient  of the monomial $\vecx^\mu := \prod_i x_i^{\mu_i}$ in  $\prod_i 
Y'(t)_{i,\sigma(i)}$  is a Laurent polynomial in 
$t$. Let $c^\sigma_\mu$ denote the coefficient of $t^0$ in this Laurent 
polynomial. 
It can be  shown that,
for some $\poly(n)$-size circuit $D_n$ over $\F$ (depending on $C_n$)  with 
$s_n=\poly(n)$ inputs, we can express $c^\sigma_\mu$ as 

\[ c^\sigma_\mu = \sum_{b \in \{0,1\}^{s_n}} D_n(b,\sigma,\mu). \]

Here it is assumed that $D_n$ takes $(b,\sigma,\mu)$, specified in binary, as 
input, with $0$ and $1$ regarded as elements of $\F$. The idea is that $\sigma$ specifies which of the $k!$ terms of the permanent is chosen, $\mu$ specifies which monomial is chosen, and the Boolean vector $b$ is used to specify which summand of $y_i$ is chosen in the summation. For a given choice of summand of $y_i$, computing the contribution to the corresponding coefficient is easy using $C_n$, and then we get to sum over all possible choices $b \in \{0,1\}^{s_n}$.

It follows that the coefficient of $\vecx^\mu$ in $f_n$ is

\[ \sum_\sigma c^\sigma_\mu = \sum_\sigma \sum_{b \in \{0,1\}^{s_n}} 
D_n(b,\sigma,\mu). \]

Using this fact, in conjunction with Valiant \cite{Val79}, it can be shown
that $\{f_n\} \in \vnp$. 
\end{proof} 

Since $\vp \subseteq \vnp$, the preceding result implies:

\begin{cor} \label{cvpst}
$\vpstar  \subseteq \vnp$.
\end{cor}

In contrast, using the interpolation technique of Strassen 
\cite{strassen_division} and Bini \cite{bini} we have:

\begin{lemma}[{cf. also \cite{burgfactor}, \cite[\S 9.4]{landsbergmath}, 
\cite[Prop.~3.5.4]{grochowPhD}}] \label{lstrassen} 
If $\{f_n\}$ is a one-parameter degeneration of $\{g_n\} \in \vp$ of polynomial degree, then 
$\{f_n\} \in \vp$.
\end{lemma}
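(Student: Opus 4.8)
The plan is to use the classical interpolation trick: a one-parameter degeneration of polynomial degree $K(n)=\poly(n)$ writes $f_n(x)$ as the coefficient of $t^0$ (equivalently, after clearing denominators, of a fixed power $t^{d_n}$ with $d_n=\poly(n)$) in a polynomial $h_n(x,t)$ which is itself in $\vp_{\F[t]}$ (or $\vp_{\F(t)}$ after clearing denominators), with the $t$-degree of $h_n$ bounded by some $D_n=\poly(n)$. Since $\{g_n\}\in\vp$ and the substitution $y_i = a_0^i + \sum_j a_j^i x_j$ with $a_j^i = \sum_{k=-K(n)}^{K(n)} c_n(i,j,k)t^k$ only multiplies the circuit size by $\poly(n)$ and raises the $t$-degree by at most $\poly(n)$ (after multiplying through by $t^{K(n)\cdot\mathrm{(fan\text{-}in bound)}}$ to clear negative powers), $h_n(x,t)$ has a circuit of size $\poly(n)$ over $\F$ in the variables $x$ and $t$, and $\deg_t h_n \le D_n = \poly(n)$.

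The key step is Lagrange interpolation in $t$. First I would pick $D_n+1$ distinct scalars $\alpha_0,\dots,\alpha_{D_n}\in\F$ (possible since $\F$ is infinite, being algebraically closed). For each $\alpha_s$, the polynomial $h_n(x,\alpha_s)$ has a $\poly(n)$-size circuit over $\F$ in the variables $x$ alone, obtained by substituting the constant $\alpha_s$ for $t$ in the circuit for $h_n$. Writing $h_n(x,t) = \sum_{r=0}^{D_n} h_{n,r}(x)\,t^r$, the values $\{h_n(x,\alpha_s)\}_{s=0}^{D_n}$ determine each coefficient polynomial $h_{n,r}(x)$ as a fixed $\F$-linear combination $h_{n,r}(x) = \sum_{s=0}^{D_n} \gamma_{r,s}\, h_n(x,\alpha_s)$, where the $\gamma_{r,s}$ are the (constant) entries of the inverse Vandermonde matrix. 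In particular $f_n(x) = h_{n,d_n}(x) = \sum_{s=0}^{D_n}\gamma_{d_n,s}\,h_n(x,\alpha_s)$ is a $\poly(n)$-size $\F$-linear combination of the $\poly(n)$-many circuits $h_n(x,\alpha_s)$, hence has a $\poly(n)$-size circuit. Finally I would note that $f_n$ is p-bounded: its number of variables is $m_n=\poly(n)$ by hypothesis, and its degree is at most $\deg_x h_n \le \deg g_n \cdot \max_i(\text{linear-form degree}) = \poly(n)$. Therefore $\{f_n\}\in\vp$.

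The only mild obstacle is the bookkeeping to make precise that $h_n(x,t)$ genuinely lies in $\vp_{\F[t]}$ with $\poly(n)$ size and $\poly(n)$ $t$-degree after clearing the negative powers of $t$ coming from the $a_j^i$; this is where it is essential that $K(n)=\poly(n)$ rather than $2^{\poly(n)}$, since the number of interpolation points needed, and the size blowup from substituting the Laurent polynomials $a_j^i$ into the circuit for $g_n$, both scale with $K(n)$ (and with the fan-in / number of addition gates of the circuit). If $K(n)$ were exponential, one would need exponentially many interpolation points, and the argument breaks down — which is exactly why the analogous statement fails for one-parameter degenerations of exponential degree and is the source of the separation between $\vp$ and $\vpstar\subseteq\vpbar$ discussed above. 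All of this is standard (cf. \cite{strassen_division,bini,burgfactor}), so I would only sketch it.
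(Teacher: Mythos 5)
Your proposal is correct and is precisely the Strassen--Bini interpolation argument that the paper invokes by citation; the paper states Lemma~\ref{lstrassen} without a written-out proof, pointing to \cite{strassen_division,bini,burgfactor,landsbergmath,grochowPhD}, and your reconstruction---clear denominators to bound the $t$-degree by $\poly(n)$, then recover the desired coefficient as an $\F$-linear combination of $\poly(n)$ many evaluations $h_n(x,\alpha_s)$ via the inverse Vandermonde matrix---is the standard one carried out in those references. You also correctly identify why $K(n)=\poly(n)$ is essential (the number of interpolation nodes scales with the $t$-degree), which is exactly the point the paper emphasizes in contrasting this lemma with the exponential-degree case.
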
 

Theorem~\ref{tintro1} leads to:

\begin{question} \label{qcomplexity}

\noindent (1) Is $\vp=\svp$?

\noindent (2) Is $\vp=\nvp$?

\noindent (3) Is $\vp=\vpstar$?

\noindent (4) Is $\vpstar=\vpbar$?

\end{question}

\subsection{A complexity-theoretic form of the Hilbert--Mumford--Kempf criterion} 
As a byproduct of the proof of Theorem~\ref{tintro1}, 
we get the following complexity-theoretic  form of the Hilbert--Mumford--Kempf criterion \cite{kempf} for stability
with respect to the action of $G=\SL(m, \F)$ on $\F[x_1,\ldots,x_m]$. 
Given a one-parameter subgroup $\lambda(t) \subseteq G$, we can express it as  $A \cdot \mbox{diag}(t^{k_1},\ldots,t^{k_m}) \cdot A^{-1}$,
for some $A \in G$ and $k_j \in \Z$, $1 \le j \le m$. We call $\sum_i |k_i|$ the 
total degree of $\lambda(t)$. The following theorem is implicit in the  proofs of 
Theorems~\ref{tsvnp} and \ref{tnvnp}. 

\begin{theorem} \label{thmk1}
Suppose $f=f(x_1,\ldots,x_m)$ belongs to the unique closed $G$-orbit in the $G$-orbit-closure of $g=g(x_1,\ldots,x_m) \in
\F[x_1,\ldots,x_m]$. Then there exists a one-parameter subgroup $\lambda(t) \subseteq G$ such that (1)
$\lim_{t \rightarrow 0} \lambda(t)\cdot g = f$, and (2) the total degree  of $\lambda$ is $O(\mbox{exp}(m,\bitlength{\deg(g)}))$,
where $\bitlength{\deg(g)}$ denotes the bit-length of the degree of $g$. 

It follows that if $\{f_n\}$ is a stable degeneration of $\{g_n\} \in \vp$, then 
$\{f_n\}$ is a p-definable one-parameter  degeneration of $\{g_n\}$.
\end{theorem}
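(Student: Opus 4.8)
The plan is to extract quantitative information from the two ingredients already present in the proofs of Theorems~\ref{tsvnp} and~\ref{tnvnp}: Kempf's canonical one-parameter subgroup and the bound on the exponents appearing in a Newton degeneration. For part (1), I would invoke the Hilbert--Mumford--Kempf criterion \cite{kempf} verbatim: since $f$ lies in the unique closed $G$-orbit in the orbit-closure of $g$, there is a one-parameter subgroup $\lambda(t)\subseteq G$ with $\lim_{t\to 0}\lambda(t)\cdot g=f$, and by Kempf this $\lambda$ can be taken canonical (optimal in the sense of Kirwan \cite{Kirwan}). So (1) is just a citation; all the content is in (2).

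For part (2), I would pass to the canonical maximal torus $T\subseteq G$ for which all monomials in $x_1,\dots,x_m$ are $T$-eigenvectors, exactly as in the proof of Theorem~\ref{tsvnp}. After conjugating by a suitable $A\in G$, we may assume $\lambda(t)=\operatorname{diag}(t^{k_1},\dots,t^{k_m})$ with $k_j\in\Z$ and $\sum_j k_j=0$ (or $=1$ up to the normalization convention in the earlier proof). The key point is that $f=g|_Q$ is the Newton degeneration of $g$ to the face $Q$ of $\npt(g)$ minimized by the linear functional $\sum_j k_j x_j$. Now the exponent vectors of $g$ all lie in a box of side $\deg(g)$ in $\N^m$, so $\npt(g)$ is a lattice polytope contained in $[0,\deg(g)]^m$. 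Any face of such a polytope is cut out by an integral supporting hyperplane whose normal vector $(k_1,\dots,k_m)$ can be chosen with $\sum_i |k_i|$ bounded by a standard determinantal (Cramer/Hadamard) estimate: the normal is, up to scaling, the solution of an $m\times m$ linear system with entries of absolute value at most $\deg(g)$, so its integral coefficients are $O\big((m\,\deg(g))^{m}\big) = \exp(m,\bitlength{\deg(g)})$. This gives the claimed total-degree bound on $\lambda$, and I should be careful to note that conjugation by $A\in\SL$ does not change the multiset $\{k_i\}$, so the bound is genuinely on the total degree of $\lambda$ as defined.

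For the final implication, suppose $\{f_n\}$ is a stable degeneration of $\{g_n\}\in\vp$, so $m_n=\poly(n)$ and $\deg(g_n)=\poly(n)$, hence $\bitlength{\deg(g_n)}=O(\log n)$ and the total degree $K(n)$ of $\lambda_n$ from part (2) is $O(\exp(\poly(n),O(\log n)))=O(2^{\poly(n)})$. The substitution realizing $\lambda_n(t)\cdot g_n$ is $x_j\mapsto t^{k_j}x_j$ (possibly after the fixed linear change of coordinates $A$, which is harmless since $\vp^*$ is built on top of linear projections and, more importantly, a fixed linear map is trivially p-definable), and the coefficient data $c_n(i,j,k)$ of this substitution is completely explicit: the $t$-exponents are among the $k_j$, which are themselves computable in $\poly(n)$ time from a description of a vertex-enumeration/LP solve for the face $Q$. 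I would therefore exhibit a $\poly(n)$-size circuit $C_n$ that, on input the binary encoding of $(i,j,k)$, outputs the corresponding coefficient — this is immediate because the substitution is diagonal, so $C_n$ only needs to test whether $k=k_i$ and $i=j$. Thus $\{f_n\}$ is a p-definable one-parameter degeneration of $\{g_n\}$.

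**Main obstacle.** The genuinely delicate step is not the existence of $\lambda$ (that is Kempf) but certifying that the \emph{canonical} Kempf subgroup — rather than some arbitrary one realizing the same limit — can be taken to have exponentially bounded total degree; a priori, Kempf's optimization over a non-compact space of one-parameter subgroups could select a primitive direction with large coefficients even though \emph{some} integral normal to the face is small. I expect this is resolved by observing that the face $Q=\npt(f)$ is intrinsic to $(f,g)$, that every face of a lattice polytope in $[0,d]^m$ has an integral supporting normal of $\ell_1$-norm $\exp(m,\bitlength{d})$, and that Kempf's canonicity is a \emph{uniqueness up to conjugacy and positive rescaling} statement — so the canonical $\lambda$ is conjugate to $\operatorname{diag}(t^{k_1},\dots,t^{k_m})$ for the primitive integral normal, whose norm is still within the same bound (the primitive normal has norm no larger than any integral normal cutting out the same face). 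Making this last comparison fully rigorous — that Kempf's normalization does not inflate the coefficients beyond the primitive-normal bound — is the step I would write out most carefully.
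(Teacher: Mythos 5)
Your proposal is correct and takes essentially the same route the paper intends: Kempf for existence, conjugation into the torus $T$, identification of $f$ with the Newton degeneration $g|_Q$, and a Cramer/Hadamard bound showing every face of a lattice polytope in $[0,\deg(g)]^m$ admits a defining integral normal of $\ell_1$-norm $\exp(m,\bitlength{\deg(g)})$. This is exactly what is needed to flesh out the paper's remark, in the proof of Theorem~\ref{tnvnp}, that ``it is easy to ensure that $|k_j|\le O(2^{\poly(n)})$.''

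Two small clarifications, neither a genuine gap. First, the ``main obstacle'' you flag is a concern about a claim slightly stronger than what Theorem~\ref{thmk1} actually states: the theorem only asserts \emph{existence} of a low-degree $\lambda$, so it suffices to pick any short integral normal defining $Q$; you do not need to certify that the Kempf-\emph{canonical} subgroup is itself short. (That finer question is relevant to the surrounding prose and to Question~4.10, but not to the theorem.) Second, your parenthetical that the $k_j$ are ``computable in $\poly(n)$ time from a vertex-enumeration/LP solve'' is both questionable (vertex enumeration can be exponential) and unnecessary: p-definability is a non-uniform notion, so one only needs the $m=\poly(n)$ integers $k_j$, each of $\poly(n)$ bit-length, together with the fixed conjugating matrix $A$, to be hard-coded into $C_n$ — which you in fact do correctly in the next sentence.
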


This result can be generalized from $G=\SL$ to reductive algebraic groups using similar ideas, as follows.
Let $\F=\C$.
Let $V=V_\lambda(R)$ be a finite dimensional rational representation of a 
connected, reductive, algebraic group $R$ with
highest weight $\lambda= \sum_i d_i \omega_i$, where $\omega_i$'s denote the 
fundamental weights of the Lie algebra ${\cal R}$ 
of  $R$. Let $d=\sum_i d_i$. Let $\bitlength{d}$ be its bit-length. Let 
$\mbox{rank}({\cal R})$ denote the rank of ${\cal R}$. 
Given any one-parameter subgroup $\lambda(t) \subseteq R$, let $\hat \lambda: \C 
\rightarrow {\cal R}$ denote the 
corresponding Lie algebra map. After conjugation, we can assume that $\hat 
\lambda(\C)$ is contained in the Cartan subalgebra 
${\cal H} \subseteq {\cal R}$. Fix the standard basis $\{ h_i\}$ of ${\cal H}$ as 
in \cite{fulton}, 
and let $\hat \lambda(1) = \sum_i c_i h_i$. 
Define the total size of $\lambda(t)$ as $\sum_j |c_j|$.

\begin{theorem} 
Given $v \in V$ and $w$ in the unique closed $R$-orbit in the $R$-orbit-closure of 
$v$, there exists a one-parameter
subgroup $\lambda(t) \subseteq R$ of total size  
$O(\mbox{exp}(\rank(R),\bitlength{d}))$   such that 
$\lim_{t \rightarrow 0} \lambda(t) \cdot v =w$. 
\end{theorem}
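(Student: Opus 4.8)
The plan is to run the argument behind Theorem~\ref{thmk1} inside the character lattice of $R$. Fix a maximal torus $T \subseteq R$ whose Lie algebra is the Cartan subalgebra $\mathcal{H}$, and identify the character lattice $X^{*}(T)$ and the cocharacter lattice $X_{*}(T)$ with a dual pair of lattices in $\R^{r}$, $r := \rank(R)$, under the natural pairing $\langle\cdot,\cdot\rangle$. Decompose $V$ into $T$-weight spaces $V = \bigoplus_{\mu} V_{\mu}$, and for $u \in V$ let its \emph{state} be $\mathrm{St}(u) = \{\mu : u_{\mu} \ne 0\}$. A cocharacter $\lambda \in X_{*}(T)$ acts by $\lambda(t)\cdot u = \sum_{\mu} t^{\langle\mu,\lambda\rangle} u_{\mu}$, so $\lim_{t\to0}\lambda(t)\cdot u$ exists precisely when $\langle\mu,\lambda\rangle \ge 0$ for every $\mu \in \mathrm{St}(u)$, and it then equals $\sum_{\langle\mu,\lambda\rangle = 0} u_{\mu}$ --- the restriction of $u$ to the face of $\mathrm{conv}(\mathrm{St}(u))$ selected by $\lambda$, a face on which $\langle\cdot,\lambda\rangle$ is identically $0$. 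By the Hilbert--Mumford--Kempf criterion \cite{kempf}, exactly as in the proof of Theorem~\ref{thmk1}, some one-parameter subgroup of $R$ drives $v$ to $w$; conjugating it into $T$ --- which, by the definition of total size, does not change the quantity to be bounded once we conjugate $v$ and $w$ correspondingly (the set of weights of $V$ is unaffected) --- we may take it to be a cocharacter $\lambda_{0} \in X_{*}(T)$ with $\lim_{t\to0}\lambda_{0}(t)\cdot v = w$. Set $F := \mathrm{St}(w)$.

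Since the $v_{\mu}$ lie in distinct weight spaces there is no cancellation, so $F = \{\mu \in \mathrm{St}(v) : \langle\mu,\lambda_{0}\rangle = 0\} \subseteq \mathrm{St}(v)$, the integer $\langle\mu,\lambda_{0}\rangle$ is positive for every $\mu \in \mathrm{St}(v)\setminus F$, and $w = \sum_{\mu\in F} v_{\mu}$. Hence $\lambda_{0}$ is a feasible solution of the linear system, in the unknown $\lambda \in \R^{r}$,
\[ \langle\mu,\lambda\rangle = 0 \quad (\mu \in F), \qquad \langle\mu,\lambda\rangle \ge 1 \quad (\mu \in \mathrm{St}(v)\setminus F), \]
and, conversely, any lattice solution $\lambda \in X_{*}(T)$ of this system satisfies $\lim_{t\to0}\lambda(t)\cdot v = \sum_{\mu\in F} v_{\mu} = w$. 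The key structural input is that this system has small coefficients: as $V$ is irreducible with highest weight $\sum_{i} d_{i}\omega_{i}$, every weight of $V$ lies in the convex hull of the Weyl-group orbit of $\sum_{i} d_{i}\omega_{i}$; the fundamental weights of any root datum of rank $r$ have norm $\poly(r)$ (exceptional types contributing only finitely many constants) and the Weyl group preserves a positive-definite form, so every weight of $V$ has norm at most $d\cdot\poly(r)$ and therefore, in fixed integral coordinates on $X^{*}(T)$, has $\ell_{\infty}$-norm at most $N := d\cdot\poly(r)$. (The number of constraints may be as large as $\dim V$, but this will not affect the bound.)

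The linear system above has $r$ unknowns and integer coefficients of magnitude at most $N$, and it is feasible; hence, by the standard bound on the bit-size of a feasible solution of a rational linear system --- pass to a vertex of the solution set, restricting first to the rational subspace $\mathrm{span}(\mathrm{St}(v))$ to make it pointed, and apply Cramer's rule to the $\le r$ equations that pin the vertex down --- it has a rational solution $\lambda^{*}$ whose numerators and denominators are all bounded by $2^{\poly(r,\bitlength{d})}$ (using $\log N = \bitlength{d} + O(\log r)$). Clearing the common denominator $q \le 2^{\poly(r,\bitlength{d})}$, the vector $\lambda := q\lambda^{*}$ lies in $X_{*}(T)$, still satisfies the system ($\langle\mu,\lambda\rangle = q\langle\mu,\lambda^{*}\rangle \ge q \ge 1$ for $\mu \notin F$, and $= 0$ for $\mu \in F$), and has $\|\lambda\|_{\infty} \le 2^{\poly(r,\bitlength{d})}$; so $\lambda(t)$ is a genuine one-parameter subgroup with $\lim_{t\to0}\lambda(t)\cdot v = w$. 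Finally, writing $\hat\lambda(1) = \sum_{i} c_{i} h_{i}$ in the standard basis $\{h_{i}\}$ of $\mathcal{H}$ (as in \cite{fulton}), the change of basis between $\{h_{i}\}$ and the fixed coordinates on $X_{*}(T)\otimes\R$ depends only on the root datum of $R$, so its entries have magnitude $\poly(r)$; therefore the total size $\sum_{i}|c_{i}|$ of $\lambda(t)$ is at most $\poly(r)\cdot\|\lambda\|_{\infty} = O(\exp(\rank(R),\bitlength{d}))$, as required.

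I expect the main difficulty to be bookkeeping rather than a genuinely new idea: keeping the weight bound $N$ truly polynomial in $\rank(R)$ uniformly over all connected reductive groups of a given rank (so that $\log N$ is controlled by $\bitlength{d}$ together with $\poly(\rank(R))$), and tracking that the linear-programming estimate inflates the exponent by only a polynomial in $\rank(R)$, so that the final bound is governed by $\bitlength{d}$ and not by $d$ itself. The mild point of reaching the exact limit $w$ rather than merely a point of the closed orbit is handled exactly as in the proof of Theorem~\ref{thmk1}, together with the conjugation-invariance of total size noted above.
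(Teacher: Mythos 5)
Your argument is correct and follows essentially the same route the paper sketches: conjugate the Hilbert--Mumford--Kempf cocharacter into the maximal torus, recognize the limit as the restriction of $v$ to a face of the weight polytope, and bound the cocharacter via a Cramer-rule estimate on the feasibility system in $\R^{\rank(R)}$. The paper states this theorem without a written proof, remarking only that it ``can be generalized from $G=\SL$ to reductive algebraic groups using similar ideas'' from Theorem~\ref{thmk1}; your write-up is a correct and suitably detailed instantiation of exactly those ideas.
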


We formally propose a question that has ramifications on the 
$\svp$ vs. $\vp$ question (cf. Section~\ref{sec:intro} and Question~\ref{qcomplexity} (1)). 

\begin{question}
For some positive constant $a$, does
there exist a stable degeneration $\{f_n\}$ of some $\{g_n\} \in \vp$, 
with an  $\Omega(2^{n^a})$ or a super-polynomial lower bound  on the degree of the canonical Kempf-one-parameter subgroup
\cite{kempf}  $\lambda_n$ driving $\{g_n\}$ to $\{f_n\}$?
\end{question}

\section{Newton degeneration of perfect matching Pfaffians} \label{stutte}
In this section, we 
construct an explicit  family $\{f_n\}$ of polynomials such that
$f_n$ can be approximated infinitesimally closely by  symbolic determinants of size $n$,
but conjecturally requires size $\Omega(n^{1+\delta})$ to be computed by  a symbolic determinant, for a
small enough positive constant $\delta$. However, the family $\{f_n\}$ turns out to be in $\vpws$.

Suppose we have a simple undirected graph $G=(V, E)$ where $V=[n]$. Let $\{x_e\mid 
e\in E\}$ be a set of variables. The Tutte matrix of $G$ is the  $n\times n$ 
skew-symmetric matrix $T_G$ such that,  if $(i,j)=e \in E$, with $i<j$, then 
$T_G(i, j)=x_e$ and $T_G(j,i)=-x_e$; otherwise
$T_G(i,j)=0$. 
For a skew-symmetric matrix $T$, the determinant of $T$ is a perfect square, and 
the square root of $\det(T)$ is called the Pfaffian of $T$, denoted $\pf(T)$. We 
call $\pf(T_G)$ the \emph{perfect matching Pfaffian} of the graph $G$, and 
$
\pf(T_G)=\sum_{P}\sgn(P)\prod_{e\in P} x_e,
$
where the sum is over all perfect matchings $P$ of $G$, and $\sgn(P)$ 
takes $\pm 1$ in a suitable manner. It is 
well-known that $\pf(T_G)\in\vpws$.

Note that $\npt(\pf(T_G))$ is the perfect matching polytope of $G$, which has the 
following description by Edmonds. For any $S\subseteq V$, we use 
$e\sim S$ to denote that $e$ lies at the border of $S$. When $S=\{i\}$, we may  
write $e\sim i$ instead of $e\sim \{i\}$.
\begin{theorem}[Edmonds, \cite{Edm65}]
The perfect matching polytope of a graph $G$ is characterized by the following 
constraints:
\begin{equation}\label{eqn:edmonds}
(a)\, \forall e\in E,  x_e\geq 0; (b) \forall i\in V,  \sum_{e\in E, e\sim i} 
x_e=1; (c)\, \forall C\subseteq V, |C|>1 \text{ is odd},  \sum_{e\in E, e\sim C} 
x_e\geq 1. 
\end{equation}
\end{theorem}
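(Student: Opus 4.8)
The plan is to prove the two inclusions of $P(G)=M(G)$ separately, where $P(G)\subseteq\R^E$ is the polytope cut out by (a)--(c) and $M(G)$ is the convex hull of the incidence vectors of the perfect matchings of $G$. The inclusion $M(G)\subseteq P(G)$ is routine: for a perfect matching, (a) and (b) are immediate, and (c) holds because the matching edges with both ends inside an odd set $C$ cover an even number of vertices of $C$, so at least one vertex of $C$ is matched across its border. For the reverse inclusion it suffices to show that every vertex (extreme point) of $P(G)$ is a $0/1$ vector: $P(G)$ is bounded (from (a) and (b), $0\le x_e\le 1$), a $0/1$ point of $P(G)$ is by (b) the incidence vector of a perfect matching, and an extreme point of $P(G)$ that happens to lie in the subpolytope $M(G)$ is automatically extreme in $M(G)$. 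I would prove ``every vertex of $P(G)$ is $0/1$'' by induction on $|V|+|E|$ (base case: no edges, which forces $V=\emptyset$ by (b)). So let $x$ be a vertex of $P(G)$. If $x_e\in\{0,1\}$ for some edge $e$, pass to a smaller graph: delete $e$ if $x_e=0$; delete both endpoints of $e$ if $x_e=1$, in which case (b) forces $x_f=0$ for all other edges $f$ at those endpoints. On the smaller graph the restriction of $x$ is again a vertex (the tight constraints of $x$ restrict, and still pin the restricted point down since a competitor could be padded back out with zeros), and constraints (a)--(c) transfer because every border-sum that changes loses only edges carrying $x$-value $0$; so induction applies and $x$ is $0/1$.

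It remains to show that a vertex $x$ with $0<x_e<1$ for \emph{every} $e$ cannot exist. First, every vertex of $G$ has degree $\ge 2$ (a degree-one vertex would force $x_e=1$ on its unique edge). Since $x$ is a vertex of a polytope in $\R^E$, the tight constraints at $x$ span $\R^E$; none of type (a) are tight, so they consist of all the equalities (b) together with the odd-set constraints (c) for the sets $C$ in some family $\mathcal{O}$. I split into two cases. \textbf{Case 1: some $C\in\mathcal{O}$ has $3\le|C|\le|V|-2$.} Contract: form $G_1:=G/C$ and $G_2:=G/(V\setminus C)$, each strictly smaller in the measure $|V|+|E|$, and push $x$ forward to $x_1\in P(G_1)$ and $x_2\in P(G_2)$. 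That $x_i\in P(G_i)$ is a parity check: odd sets of $G/C$ containing the contracted vertex correspond to odd sets of $G$ containing $C$, odd sets avoiding it are odd sets of $G$ avoiding $C$, in each case the border is unchanged, and tightness of the (c)-constraint for $C$ is exactly what makes (b) hold at the new vertex. By induction, $x_1$ and $x_2$ are convex combinations of incidence vectors of perfect matchings of $G_1$ and $G_2$. A perfect matching of $G_1$ records how everything outside $C$ is matched and which single vertex $v\in C$ is matched across the border; a perfect matching of $G_2$ records how $C$ is matched internally except for one exposed vertex; gluing a compatible pair yields a perfect matching of $G$. Because the two decompositions put the same marginal mass on each vertex of $C$, a transportation-type argument lets one glue them edge by edge so that the resulting mixture of perfect matchings of $G$ averages exactly to $x$; hence $x\in M(G)$, so $x$ (an extreme point of $P(G)$ lying in $M(G)$) is the incidence vector of a perfect matching, contradicting $0<x_e<1$. \textbf{Case 2: every $C\in\mathcal{O}$ has $|C|=|V|-1$.} Then, with $w$ the unique vertex outside such a $C$, the edges at the border of $C$ are exactly the edges at $w$, so the (c)-constraint for $C$ coincides with the (b)-equality at $w$ and contributes nothing new; hence all tight constraints lie in the span of the $|V|$ equalities (b), forcing $|E|\le|V|$. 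Combined with minimum degree $\ge 2$ this makes $G$ a disjoint union of cycles (of length $\ge 3$, as $G$ is simple). On an even cycle, an alternating $\pm\varepsilon$ perturbation keeps $x$ in $P(G)$ for small $\varepsilon$ (all constraints stay satisfied, (c) because no (c)-constraint is tight here), so $x$ is not a vertex; on an odd cycle, (b) forces $x\equiv\tfrac12$ along it, so the border-sum of that cycle's vertex set (odd, size $\ge 3$) is $0$, violating (c). Either way we reach a contradiction, completing the induction.

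The main obstacle is Case 1 of the fractional step: recombining the two perfect-matching decompositions obtained from $G/C$ and $G/(V\setminus C)$ into one decomposition of $x$ over perfect matchings of $G$, which needs the common marginals on the vertices of $C$ to be reconciled via a flow/transportation argument, together with the routine-but-fussy parity bookkeeping showing (b) and (c) survive contraction. An alternative route that avoids explicit gluing is to prove instead that the linear system (a)--(c) is totally dual integral, again by induction, the crux there being the uncrossing of two tight odd sets $C_1,C_2$ --- using submodularity of the border-sum together with the fact that $|C_1\cap C_2|$ and $|C_1\cup C_2|$ have equal parity, and passing to complements when both are even.
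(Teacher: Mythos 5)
The paper cites this theorem to Edmonds~\cite{Edm65} and gives no proof of it, so there is no in-paper argument to compare against; I evaluate your sketch on its own terms.

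What you have written is the standard ``shrink the tight odd cut'' inductive proof of the matching-polytope theorem (the one found in Lov\'asz--Plummer and in Schrijver), and its skeleton is sound: reduce along $0/1$ coordinates, and for an all-fractional vertex $x$ split on whether some tight odd-set constraint is nontrivial. Case~2 is complete as you state it ($|E|\le|V|$ from the rank of the tight system, degree $\ge 2$, hence a union of $2$-regular components, then the even-cycle perturbation and the odd-cycle border-sum contradiction). A few points need repair or completion before Case~1 and the inductive bookkeeping are airtight. First, contraction produces \emph{multigraphs} (parallel border edges into the contracted vertex), so the whole induction must be run over multigraphs from the outset; this is harmless (a parallel pair of edges is an even $2$-cycle and falls under your Case~2 perturbation), but you appeal to simplicity of $G$ at one point, and that must go. Second, your ``same marginal mass on each vertex of $C$'' is a slight misstatement: the clean invariant is that the two decompositions put mass exactly $x_e$ on each \emph{border edge} $e$, on both sides. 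With that observation, after clearing denominators to write $x_1=\tfrac1k\sum_i\chi^{M_i}$ and $x_2=\tfrac1k\sum_j\chi^{N_j}$ with the \emph{same} $k$, the number of $M_i$'s and the number of $N_j$'s through any given border edge $e$ are both $kx_e$, so one just reindexes so that $M_i$ and $N_i$ share their border edge and glues them into a perfect matching $P_i$ of $G$; then $x=\tfrac1k\sum_i\chi^{P_i}$, with no genuine transportation step required. Third, in the $0/1$ reduction, the claim that the restriction of $x$ is still a vertex needs the (easy) remark that padding a competing point of $P(G-e)$ with $x_e=0$, or a point of $P(G-\{u,v\})$ with zeros on the deleted edges, lands back in $P(G)$: this holds precisely because the reinstated edges carry value $0$ and so leave every odd-cut sum unchanged. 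These are all routine repairs; your alternative TDI/uncrossing route is also a correct standard approach, trading the gluing for the parity/submodularity bookkeeping you describe.
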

We shall refer to constraints of type (c) in Equation~\ref{eqn:edmonds} as 
``odd-size constraints.''

\begin{theorem}[{Kaltofen and Koiran, \cite[Corollary 
1]{KK08}}]\label{thm:division_ws}
Given $f, g, h \in \F[\vecx]$, suppose $h=f/g$, and $f$ and $g$ are in $\vpws$. 
Then $h\in \vpws$.
\end{theorem}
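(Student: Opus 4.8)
The plan is to run Strassen's division-elimination argument \cite{strassen_division} entirely inside the class of weakly skew circuits, so that the division in $h=f/g$ is replaced by a truncated power series and no division gate is ever used. First I would record that $h$ has small degree: from $hg=f$ one gets $\deg h=\deg f-\deg g\le \deg f$, and since $f,g\in\vpws$ each is computed by a weakly skew circuit of size and formal degree $\poly(n)$; hence $d:=\deg h\le\poly(n)$. Since $g\not\equiv 0$ and $\F$ is infinite, there is a point $\vecx_0$ with $g(\vecx_0)\ne 0$; applying the affine substitution $\vecx\mapsto\vecx+\vecx_0$ --- under which $\vpws$ is closed, as substituting affine forms into a symbolic determinant yields a symbolic determinant --- I may assume $c:=g(\mathbf{0})\ne 0$. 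Writing $g=c(1-u)$ with $u:=1-\frac1c g$ a polynomial with zero constant term (so $\mathrm{ord}(u)\ge 1$) computed by a weakly skew circuit of size $\poly(n)$, the power series identity $f/g=\frac1c f\sum_{i\ge 0}u^i$ holds around the origin; since $\mathrm{ord}(u^i)\ge i$, the terms with $i>d$ contribute only monomials of degree $>d$, so $h$ equals the sum of the monomials of total degree at most $d$ of the polynomial $P:=\frac1c f\sum_{i=0}^{d}u^i$.

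It then remains to (i) compute $P$ by a polynomial-size weakly skew circuit and (ii) extract from $P$ its part of degree $\le d$. For (i), the only obstacle is that $u^i=u\cdot u^{i-1}$ naively re-uses the subcircuit for $u$, which is illegal in a weakly skew circuit; I would instead take $d$ pairwise-disjoint fresh copies of the circuit for $u$ and set $w_0:=1$, $w_i:=w_{i-1}\cdot(i\text{-th copy of }u)$, so that the $i$-th copy is private to the product computing $w_i$ and weak skewness is preserved; then $P$ is obtained from a private copy of the circuit for $\frac1c f$ multiplied by $\sum_{i=0}^{d}w_i$, all of size $O(d\cdot\poly(n))=\poly(n)$ and formal degree $D'=\poly(n)$. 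For (ii), I would use the Strassen--Bini interpolation trick of Lemma~\ref{lstrassen}: with a fresh variable $z$ one has $P(z x_1,\dots,z x_n)=\sum_{j=0}^{D'}P_j(\vecx)z^j$, where $P_j$ is the degree-$j$ part of $P$; choosing distinct scalars $\zeta_0,\dots,\zeta_{D'}\in\F$ and interpolation scalars $\mu_0,\dots,\mu_{D'}\in\F$ with $\sum_k\mu_k\zeta_k^j=1$ for $j\le d$ and $0$ for $d<j\le D'$ (a Vandermonde system, solvable since the $\zeta_k$ are distinct), one gets $h=\sum_{j=0}^{d}P_j=\sum_{k=0}^{D'}\mu_k\,P(\zeta_k x_1,\dots,\zeta_k x_n)$, a linear combination of $\poly(n)$ scalar substitutions of $P$, hence computed by a weakly skew circuit of $\poly(n)$ size. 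Undoing the affine shift by $\vecx\mapsto\vecx-\vecx_0$ gives a polynomial-size weakly skew circuit for $h$, i.e. $h\in\vpws$.

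I expect the routine power-series and interpolation manipulations to be harmless; the delicate point --- the one place the statement could fail for a more restrictive circuit class --- is ensuring that every step stays weakly skew of polynomial size, which is why in step (i) I spend $d$ fresh copies of the $u$-circuit rather than iterating a single shared copy, and in step (ii) I extract the low-degree part by interpolation (a linear combination of substitutions, manifestly within $\vpws$) rather than by a direct homogenization of the circuit, which would share subcircuits and break weak skewness. Everything goes through in arbitrary characteristic, since the only field division performed is by the nonzero scalar $c\in\F$, and the only fact about $\F$ used beyond its being a field is that it is infinite (for the points $\vecx_0$ and $\zeta_k$), which holds as $\F$ is algebraically closed.
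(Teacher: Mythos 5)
The paper does not prove this statement; it is cited as \cite[Corollary~1]{KK08} (Kaltofen--Koiran), so there is no in-paper proof to compare against. Your argument is a correct, self-contained proof: it runs Strassen's division elimination entirely inside the weakly skew model, and you correctly flag and handle the two places where weak skewness is fragile --- you spend $d$ disjoint copies of the circuit for $u$ to build the powers $w_i=u^i$ so each multiplication has a private child, and you extract the degree-$\le d$ part of $P$ as a scalar linear combination of substituted copies $P(\zeta_k\vecx)$ (a manifestly weakly skew operation) rather than by homogenizing the circuit, which would alias subcircuits. The edge cases (constant $g$, zero $f$, $d=0$) all fall out of the same formula, the only use of the field being infinite is to pick $\vecx_0$ with $g(\vecx_0)\neq 0$ and the $D'+1$ distinct interpolation nodes, and the only inversion is of the nonzero scalar $c\in\F$. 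This is a different, more elementary route than Kaltofen--Koiran's determinant-level construction in \cite{KK08}; what theirs buys is a sharper, explicit size bound on the resulting symbolic determinant, whereas yours trades that for a shorter argument that stays at the level of circuits and visibly generalizes to any circuit class closed under affine substitution, disjoint union, and scalar-weighted sums in which powers of a small subcircuit can be realized via disjoint copies.
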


\begin{theorem}\label{thm:tutte}
For any graph $G$ and any face $Q$ of $\npt(\pf(T_G))$, 
$\pf(T_G)|_Q\in \vpws$.
\end{theorem}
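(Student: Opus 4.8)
The plan is to analyze which faces $Q$ of the perfect matching polytope $\npt(\pf(T_G))$ can arise, and to show that the restriction $\pf(T_G)|_Q$ is always computable by a polynomial-size weakly skew circuit. A face $Q$ is cut out by making some subset of Edmonds' inequalities from \eqref{eqn:edmonds} tight, i.e., turning some of the constraints $x_e \ge 0$ into $x_e = 0$ and some of the odd-size constraints $\sum_{e \sim C} x_e \ge 1$ into equalities. The first step is to observe that setting $x_e = 0$ for a set $E_0 \subseteq E$ of edges simply corresponds to deleting those edges, i.e., to passing to the subgraph $G' = (V, E \setminus E_0)$; and $\pf(T_{G'})$ is still a perfect matching Pfaffian (obtained from $\pf(T_G)$ by substituting $0$ for the variables $x_e$, $e \in E_0$), hence still in $\vpws$. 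So the nontrivial content is in the odd-size constraints being made tight.

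Next I would understand the effect of tightening an odd-size constraint for an odd set $C$. On the perfect matching polytope, $\sum_{e \sim C} x_e \ge 1$ always holds, and it is tight on the face $Q$ exactly when, for every perfect matching $P$ appearing in $\pf(T_G)|_Q$, exactly one edge of $P$ crosses the cut $\partial C$. A family of nested (laminar) such tight sets corresponds, on the matching side, to contracting each odd set $C$ to a single vertex and remembering that exactly one matching edge leaves it. This is precisely the structure exploited in the classical Tutte--Berge / Edmonds matching theory: a face of the matching polytope obtained by making a laminar family of odd cuts tight factors the perfect matchings into matchings ``inside'' each odd set and a matching on the contracted graph. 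Concretely, if $C_1, \ldots, C_r$ is a laminar family of odd sets made tight, then $\pf(T_G)|_Q$ should factor (up to sign bookkeeping and a monomial) as a product of Pfaffians/matching-generating polynomials of the induced subgraphs on the $C_i$ (minus the one exposed vertex) times the matching polynomial of the contracted graph. General (non-laminar) odd-cut constraints can be reduced to the laminar case because the family of tight odd cuts defining a face can always be ``uncrossed'': if $C$ and $D$ are both tight and cross, then standard submodularity/uncrossing arguments on the cut function show $C \cap D$, $C \cup D$, $C \setminus D$, $D \setminus C$ are also tight (with the parities working out), so one can replace the family by a laminar one defining the same face.

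With the combinatorial structure in hand, the computational step is then to express $\pf(T_G)|_Q$ using the closure properties of $\vpws$. Each induced-subgraph Pfaffian is in $\vpws$ (a Pfaffian is a symbolic determinant), products of polynomially many $\vpws$ polynomials stay in $\vpws$, linear projections (contractions, variable substitutions, sign twists) stay in $\vpws$, and if any step requires a quotient --- for instance extracting $\pf(T_G)|_Q$ as a ratio of two Pfaffian-type polynomials when the face is cut out implicitly rather than by an explicit factorization --- I can invoke Theorem~\ref{thm:division_ws} (Kaltofen--Koiran) to conclude the quotient is again in $\vpws$. Indeed the cleanest route may be: show $\pf(T_G) = (\pf(T_G)|_Q) \cdot q + (\text{higher-order-in-}t \text{ terms})$ under a suitable one-parameter substitution $x_e \mapsto t^{a_e} x_e$ coming from the supporting hyperplane of $Q$ (as in the proof of Theorem~\ref{tvpbar}), identify a ``cofactor'' polynomial in $\vpws$, and divide.

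The main obstacle I expect is the uncrossing / laminarization step together with keeping track of the Pfaffian sign function $\sgn(P)$ through contractions: Pfaffians, unlike permanents or matching polynomials, carry orientation-dependent signs, so I must check that the factorization over a laminar family of tight odd cuts respects these signs (possibly after a global sign change or after replacing $T_G$ by a Pfaffian orientation adapted to the laminar family). A secondary technical point is verifying that the odd-cut structure genuinely forces the claimed factorization of $\pf(T_G)|_Q$ --- i.e., that no ``mixed'' matching survives on the face --- which is exactly where Edmonds' characterization (the stated theorem of Edmonds) is used: membership of an exponent vector in $Q$ is equivalent to satisfying the chosen equalities, and this pins down the crossing pattern of every surviving perfect matching. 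If the sign bookkeeping turns out to be delicate, the fallback is to avoid explicit factorization entirely and rely on the division closure of $\vpws$ applied to the one-parameter degeneration, at the cost of a less transparent argument.
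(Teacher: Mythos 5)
Your ``fallback'' route --- substitute $x_e \mapsto y^{k_e} x_e$ along the supporting functional of $Q$, identify the cofactor, and divide using Kaltofen--Koiran --- is exactly the paper's proof, and it is cleaner than you seem to expect. In the paper, $Q$ is cut out by edge-deletions $x_e=0$ plus a polynomial number $s$ of tight odd-cut constraints $C_1,\dotsc,C_s$ ($s=\poly(n)$ since the polytope lives in $\R^{|E|}$, $|E|=O(n^2)$, so a face needs at most polynomially many defining equalities). Then taking $k_e = \#\{i : e \sim C_i\}$, each perfect matching contributes a monomial divisible by $y^s$ (it crosses every $\partial C_i$ at least once), the cofactor is simply $y^s$, and the $y$-free part of $\pf(\widetilde{T_G})/y^s$ is $\pf(T_G)|_Q$. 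Both numerator and denominator are symbolic determinants, so Theorem~\ref{thm:division_ws} finishes it. No laminarization, uncrossing, or sign analysis is required.

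Your primary route, by contrast, has a genuine gap. The claim that for a laminar family $C_1,\dotsc,C_r$ of tight odd cuts $\pf(T_G)|_Q$ factors ``as a product of Pfaffians of the induced subgraphs on the $C_i$ (minus the one exposed vertex) times the matching polynomial of the contracted graph'' is not correct: the exposed vertex inside each $C_i$ is not fixed --- it depends on which edge of $\partial C_i$ the matching uses --- so what you get is a recursion that is a \emph{sum} over choices of crossing edges of products of smaller Pfaffians, not a single factorization. This can still be turned into a polynomial-size circuit, but only after substantially more work than the factorization wording suggests, and the uncrossing step is extra machinery that the paper's argument simply avoids. Your worry about Pfaffian signs is also real, but it turns out to be a secondary issue compared to the incorrect factorization structure. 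The lesson is that the division trick sidesteps all of this combinatorics: one never needs to understand \emph{which} perfect matchings survive on the face, only that they are exactly the ones with $y$-degree equal to $s$ after the monomial substitution.
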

\begin{proof}
Thanks to Edmonds' description, any face of $\npt(\pf(T_G))$ is 
obtained by setting some of the inequalities in Equation~\ref{eqn:edmonds} to 
equalities. As setting $x_e=0$ amounts to consider some graph $G'$ with $e$ 
deleted from $G$, the bottleneck is to deal with the odd-size constraints. 

Suppose the face $Q$ is obtained via setting the odd-size constraints 
corresponding to $C_1$, \dots, $C_s$ to 
equalities, where $C_i\subseteq V$. Note that $s=\poly(n)$, because the dimension 
of $\npt(\pf(T_G))$ is 
polynomially bounded, thus any face can be obtained by setting polynomially many 
constraints to equalities. 
Let $y$ be a new variable. For any edge $e\in E$, let the number of $i\in[s]$ such that
$e$ lies at the border of $C_i$ be $k_e$. Then transform $x_e$ to $x_ey^{k_e}$. 
Let the skew-symmetric matrix after the transformation be $\widetilde{T_G}$. Since 
each perfect matching touches the border of every $C_i$ at least once, $y^s$ 
divides $\pf(\widetilde{T_G})$, so $f:=\frac{\pf(\widetilde{T_G})}{y^s}$ is a 
polynomial. Furthermore, the $y$-free terms in $f$ corresponds to those perfect 
matchings that touch each border exactly once. Thus, 
setting $y$ to zero in $f$ gives $\pf(T_G)|_Q$. 

$f$  is in $\vpws$, because $\pf(\widetilde{T_G})$ and $y^s$ are 
 in $\vpws$, and use 
Theorem~\ref{thm:division_ws}.
\end{proof}

\noindent{\bfseries Construction of an explicit  family.}
Now we turn to the construction of an explicit
family $\{f_n\}$ mentioned in the beginning of this section. We assume that the base field $\F=\C$.

First, we give a randomized procedure for constructing $f_n$: 
\begin{enumerate}
\item Fix a small enough constant $a>0$, and let  $l$ be the nearest odd integer to $n^a$.
Fix  odd-size disjoint subsets $C_1,\ldots,C_k \subseteq [n]$, $k =\floor{n^{1-a}}$, 
of size $l$. For example, we can let  $C_1=\{1,\ldots,l\}$, $C_2=\{l+1,\ldots,2 l +1\}$, etc.

\item Choose a random regular non-bipartite graph $G_n$ on $n$ nodes with degree (say) $\sqrt{n}$.

\item Let $Q$ be the face of $\npt(\det(T_G))$ obtained by setting the odd-size 
constraints corresponding $C_1,\ldots,C_k$
to equalities.

\item Let $f_n=\det(T_G)|_Q$. 
\end{enumerate}

Note that in the above we use determinant instead of Pfaffian, in order to 
simplify the discussion on the determinantal complexity of such polynomials. 
Then,  $f_n$ can be approximated infinitesimally closely by symbolic determinants 
of
size $n$; cf. the proof of Theorem~\ref{tvpbar}. By Theorem~\ref{thm:tutte}, $f_n$ can be expressed as a symbolic 
determinant  of  size $\poly(n)$. But:

\begin{conj}  \label{cexp}
If $a>0$ is
small enough, then,  with a high probability,  $f_n$ cannot be expressed as a symbolic determinant of 
 size $\le n^{1+ \delta}$, for a small enough positive constant $\delta$.
\end{conj}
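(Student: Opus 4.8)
The statement asks for a \emph{super-linear} lower bound on the \emph{exact} determinantal complexity of an explicit family that, by construction, already satisfies $\overline{\mbox{dc}}(f_n) \le n$ (cf. the proof of Theorem~\ref{tvpbar}) and lies in $\vpws$ (Theorem~\ref{thm:tutte}); the target is thus a concrete witness of $\mbox{dc}(f_n) \ge n^{1+\delta} > n \ge \overline{\mbox{dc}}(f_n)$ --- a fine-grained gap \emph{inside} $\vpws$, not a super-polynomial one. The first step is to expose the combinatorics of $f_n$. Since $\det(T_{G_n}) = \pf(T_{G_n})^2$, the Newton polytope of $\det(T_{G_n})$ is twice the perfect matching polytope of $G_n$; the face $Q$ cut out by the odd-size constraints for $C_1,\dots,C_k$ is twice the corresponding face $Q'$ of the matching polytope; and Newton degeneration is multiplicative, so $f_n = g_n^2$ with $g_n = \pf(T_{G_n})|_{Q'} = \sum_P \sgn(P)\prod_{e\in P} x_e$, the sum over perfect matchings $P$ of $G_n$ crossing the boundary of each $C_i$ exactly once --- precisely the sub-sum isolated in the proof of Theorem~\ref{thm:tutte}. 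Hence $g_n$ is a homogeneous multilinear polynomial of degree $n/2$ in the $\sim n^{3/2}/2$ edge variables, supported on a rigidly constrained family of matchings of a random $\sqrt{n}$-regular graph, and any lower bound on $\mbox{dc}(f_n)$ must be wrung out of this family.

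The lower bound methods one would reach for are the geometric ones: the Mignon--Ressayre Hessian-rank bound \cite{Mignon} (and its refinements), the dimension of the space of partial derivatives, and bounds via the codimension and multiplicity of the singular locus of $\{f_n = 0\}$. In each case, passing from $f_n = g_n^2$ to $g_n$, one is reduced to a purely matching-theoretic estimate --- for the Hessian, that the edge-indexed matrix $(\partial^2 g_n/\partial x_e\partial x_{e'})$, whose $(e,e')$ entry is $\pm$ the generating polynomial of the $C_i$-restricted matchings of $G_n$ that contain both $e$ and $e'$, has $\rank$ $\Omega(n^{1+\delta})$ over $\F(\vecx)$; analogously for the other measures, that $g_n$ has $\Omega(n^{1+\delta})$ linearly independent partials, or that $\{g_n=0\}$ is sufficiently singular. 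The plan would be to establish such an estimate from the expansion of $G_n$ (a random $\sqrt{n}$-regular graph is with high probability an expander), which makes the restricted matchings spread out and forces the $C_i$-constraints to couple only locally; a large submatrix could then be certified nonsingular by a system of distinct monomial witnesses, with $a$ chosen small enough that the $C_i$ are too few and too small to collapse the rank.

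The difficulty --- and the reason this is posed as a conjecture --- is that this whole family of approaches provably cannot work. Every lower bound technique presently known for $\mbox{dc}$ is a Zariski-closed (semicontinuous) condition and therefore bounds $\overline{\mbox{dc}}$ just as well; in particular the Hessian-rank bound transfers verbatim to approximative complexity \cite{LR}. Since $\overline{\mbox{dc}}(f_n) \le n$, it follows a priori that the $\rank$ of the Hessian of $f_n$ is $O(n)$ at every point, that $f_n$ has only $O(n)$ independent partials in each degree, and so on --- so the matching-theoretic estimates above are in fact false, and the approach is self-defeating. A genuine proof of Conjecture~\ref{cexp} therefore appears to require a lower bound principle for \emph{exact} determinantal complexity that genuinely fails to be closed: one exploiting that an honest size-$m$ determinantal representation imposes rigid algebraic relations --- among the coefficients over $\Z$, or over the non-closed field $\F((t))$, or on the scheme structure of the singular locus --- that are precisely what is destroyed by the limit $t\to 0$ defining $f_n$. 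Producing such a technique, even for this very structured random family and even for the modest exponent $1+\delta$, is the crux of the conjecture and is, to our knowledge, entirely open.
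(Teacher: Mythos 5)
The paper does not prove Conjecture~\ref{cexp}; it is posed as an open conjecture, and the only justification the paper offers is the one-sentence heuristic following it --- that the size blow-up incurred by the division step of Theorem~\ref{thm:tutte} (via Theorem~\ref{thm:division_ws}) is believed to be inherent. Your write-up is therefore correctly not a proof, and you say so explicitly. What you have done instead is give a sound and sharper account of \emph{why} the conjecture is hard, which is consistent with the paper's own framing in Section~\ref{sec:intro}.

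Two things are worth flagging as correct and as genuine additions to what the paper says. First, the structural reduction $f_n = g_n^2$ with $g_n = \pf(T_{G_n})|_{Q'}$ is valid and not explicit in the paper: $\det(T_G) = \pf(T_G)^2$, $\npt(f^2) = 2\,\npt(f)$ (Minkowski sum, using that $\F[x]$ is a domain), and for a face $Q$ of a Minkowski sum there is a unique decomposition $Q = Q_1 + Q_2$ into faces of the summands in the same direction, with $(fg)|_Q = f|_{Q_1} g|_{Q_2}$; taking $f=g=\pf(T_G)$ gives $\det(T_G)|_Q = (\pf(T_G)|_{Q'})^2$. Second, your central barrier observation --- that since by construction $\overline{\mathrm{dc}}(f_n) \le n$, every presently known lower-bound technique for $\mathrm{dc}$, being a Zariski-closed/semicontinuous condition, automatically bounds $\overline{\mathrm{dc}}$ as well and hence is capped at $O(n)$ on this family (e.g., the Hessian of $f_n$ has rank $O(n)$ at every point, so Mignon--Ressayre \cite{Mignon,LR} cannot reach $n^{1+\delta}$) --- is precisely the phenomenon the introduction emphasizes, and it correctly explains why a proof would require a fundamentally non-closed lower-bound principle for exact determinantal representations. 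There is no hidden argument in the paper that you are missing; the conjecture is, as you conclude, entirely open.
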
 

This says that the blow-up in the determinantal size in the proof of Theorem~\ref{thm:tutte} due to the use of
division (cf. Theorem~\ref{thm:division_ws}) cannot be gotten rid of completely.

To get an explicit family $\{f_n\}$, we  let $G_n$ be a pseudo-random graph, 
instead of a random graph. This can be done in various ways; perhaps the most conservative way is based on the following result.

\begin{lemma} \label{lpseudo}
Fix a constant $b>0$. Then, the problem of deciding, 
given $G_n$, whether $f_n$ can be expressed as a symbolic determinant of size $\le 
n^b$, belongs to AM.
\end{lemma}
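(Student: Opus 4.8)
The plan is to show membership in AM by exhibiting an Arthur--Merlin protocol for the following decision problem: given $G_n$ (equivalently, given $n$ and the constants $C_1,\dots,C_k$ and the adjacency structure of $G_n$), decide whether $\mathrm{dc}(f_n) \le n^b$. Recall that $f_n = \det(T_{G_n})|_Q$, so the coefficients of $f_n$ are integers (indeed $\pm 1$ or $0$) indexed by monomials, and by Theorem~\ref{thm:tutte} applied with the determinant in place of the Pfaffian, $f_n$ has a $\poly(n)$-size symbolic determinant representation; in particular each coefficient of $f_n$ is computable in $\mathrm{PSPACE}$, and more usefully, the whole polynomial $f_n$ can be evaluated at any given point in $\poly(n)$ arithmetic operations over $\F$ once $G_n$ is specified. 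The key structural input is that ``$\mathrm{dc}(f_n) \le s$'' is an existential statement over a matrix of linear forms in $\le n^{O(a)}$ variables, with $s = n^b$ entries each a linear form, so the witness (the $s\times s$ matrix of linear forms, i.e.\ its coefficient tensor) has $\poly(n)$ bits; Merlin sends this witness.

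The main subtlety is verification: Arthur must check that the symbolic determinant Merlin sent actually equals $f_n$ as a polynomial, and this is a polynomial identity test over a field, which Arthur can do with randomness via Schwartz--Zippel --- evaluate both $f_n$ (using the $\poly(n)$-size formula guaranteed by Theorem~\ref{thm:tutte}) and Merlin's candidate symbolic determinant (a size-$n^b$ determinant, computable in $\poly(n)$ time) at a random point from a large enough finite set, and accept iff they agree. This is exactly a co-RP-style verification, which fits inside AM (indeed inside the lower class $\mathrm{MA}$ if one is careful, but AM is all that is claimed). So the rough protocol is: Merlin sends the coefficients of an $n^b \times n^b$ matrix of linear forms; Arthur picks random field elements, evaluates the candidate determinant and the known $\poly(n)$-size representation of $f_n$, and accepts iff they match; completeness holds because a genuine size-$\le n^b$ representation always passes, and soundness holds because a wrong representation disagrees with $f_n$ somewhere and hence at a random point with high probability.

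First I would make precise what ``given $G_n$'' means as an input to the AM machine: it is the list of edges of $G_n$ together with the partition $C_1,\dots,C_k$ defining the face $Q$, all of size $\poly(n)$; from this, using the construction in the proof of Theorem~\ref{thm:tutte}, one builds in deterministic polynomial time a weakly skew circuit (equivalently a $\poly(n)$-size symbolic determinant) computing $f_n$. Next I would note that this also gives a $\poly(n)$-time algorithm to evaluate $f_n$ at any point of $\F^m$ where $m$ is the number of edge variables. Then I would spell out Merlin's message --- the array of $\F$-coefficients (over a sufficiently large finite field, or over $\Q$ with bounded bit-length, since $f_n$ has integer coefficients and we may restrict attention to representations over $\Q$ or over a prime field of size $\poly(n)$; one has to argue that if $\mathrm{dc}(f_n)\le n^b$ over $\C$ then it holds over a field whose elements have $\poly(n)$ bits, which follows by a standard field-of-definition / effective Nullstellensatz argument, or simply by working over a large enough prime field since $f_n$ is integral) of an $n^b\times n^b$ matrix of affine-linear forms. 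Finally I would give the randomized check and bound its error: picking each coordinate uniformly from a set of size, say, $n^{2b}$ inside $\F$ makes the Schwartz--Zippel failure probability at most $\deg/n^{2b} \le n^{b}/n^{2b} = n^{-b}$, which can be driven below $1/3$ and then amplified.

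The step I expect to be the main obstacle is the field-of-definition issue: a priori a minimal symbolic-determinant representation of $f_n$ could require constants from $\C$ with no bounded description, so Merlin's message might not be short. The clean way around it is to observe that the \emph{existence} of a size-$\le n^b$ representation is a system of polynomial equations (the equality of two polynomials coefficient-by-coefficient) in the unknown constants with integer coefficients; if it has a solution over $\C$ it has a solution in $\overline{\Q}$, and then --- since we only need membership in AM, not an optimal protocol --- Merlin can instead certify solvability of this polynomial system over $\overline{\Q}$, or, more simply, one reduces mod a random large prime $p = \poly(n)$ and uses that $f_n$ has integer coefficients so that a representation over $\C$ yields one over $\overline{\F_p}$ for all but finitely many $p$, and a representation over $\overline{\F_p}$ of bounded matrix size can be taken with entries in a bounded-degree extension, hence of $\poly(n)$ bit-length. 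Assembling these reductions carefully is the only place real care is needed; the protocol itself and the Schwartz--Zippel soundness analysis are routine. This establishes that the decision problem lies in AM, which is the assertion of Lemma~\ref{lpseudo}.
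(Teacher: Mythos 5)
Your protocol is the natural ``Merlin sends the size-$n^b$ determinant representation, Arthur checks via Schwartz--Zippel'' MA-style protocol, whereas the paper reduces the question to an instance of Hilbert's Nullstellensatz and invokes Koiran's AM algorithm for HN. You have correctly identified the central obstruction to your protocol --- that a minimal symbolic-determinant representation of $f_n$ over $\C$ may require constants with no short description --- but your proposed fixes do not close the gap.

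Concretely, two things break. First, you describe the existence question as ``a system of polynomial equations (the equality of two polynomials coefficient-by-coefficient) in the unknown constants.'' But $f_n$ has $\poly(n)$ variables and $\poly(n)$ degree, hence exponentially many coefficients, so this system has exponentially many equations and cannot be written down, let alone handed to any HN oracle. The paper handles this by randomly guessing a polynomial-size \emph{hitting set} $\{\xi^{(i)}\}$ for the $x$-variables, which converts the polynomial identity $D(x,\alpha)=C_n(x)$ into $\poly(n)$ scalar equations $D(\xi^{(i)},y)=C_n(\xi^{(i)})$ in the $y$-unknowns; this compression step is missing from your argument. Second, your two escape routes for the field-of-definition issue are, respectively, circular and quantitatively wrong. ``Merlin can certify solvability of this polynomial system over $\overline{\Q}$'' is exactly the Hilbert Nullstellensatz decision problem, and the statement that it can be so certified inside AM is Koiran's theorem (conditional on GRH), not a ``standard field-of-definition / effective Nullstellensatz argument.'' And the mod-$p$ alternative as you state it fails: the number of ``bad'' primes for which a $\C$-solution does not descend can be exponential, so a random prime of $\poly(n)$ \emph{magnitude} (as you write) --- or even of $\poly(n)$ \emph{bit-length} --- has no reason to be good absent a density bound, and that density bound is precisely what Koiran derives from GRH; moreover, a solution over $\overline{\F_p}$ need not live in an extension of polynomially bounded degree, since the degrees of the zero-dimensional components of the solution variety can be exponential in the number of $y$-variables. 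The paper's proof avoids both problems at once: it never asks Merlin to exhibit the constants, and instead lets Koiran's algorithm do the heavy lifting on a polynomial-size system of equations coming from the hitting set. The parenthetical ``indeed inside the lower class MA if one is careful'' should be dropped; the whole reason the statement is cast as membership in AM rather than MA is that no poly-bit witness is known to exist.
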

\proof 
This essentially  follows from Theorem~\ref{thm:tutte} and the AM-algorithm for 
Hilbert's Nullstellensatz in Koiran \cite{koiran}, with one additional twist. 

By Theorem~\ref{thm:tutte}. $f_n$ has a small weakly skew circuit $C_n$. Consider 
a generic symbolic determinant $D(x,y)$ of size $n^b$ whose entries are formal 
linear combinations of the $x$-variables, whose coefficients are new $y$ 
variables. We want to know whether there is a setting $\alpha$ of the $y$ 
variables that will make $D(x,\alpha)=C_n(x)$ (note that both the LHS and RHS here 
are given by small weakly skew circuits). The trick on top of Koiran's result is 
as follows.

Randomly guess a hitting set for the $x$ variables---that is, a collection of 
$\poly(n)$ many integral 
values $\xi^{(i)}$ of large enough $\poly(n)$ magnitude 
that will be substituted into the $x$ variables. By the fact that this is a 
hitting set
 (which it is with a high probability  \cite{heintz}), 
the following system of equations has a solution $\alpha$ for the $y$'s iff there 
is a setting of the $y$'s that makes $D(x,\alpha)=C_n(x)$ as polynomials in $x$:
\begin{eqnarray*}
D(\xi^{(1)},y) & = & C_n(\xi^{(1)}) \\
\vdots \\
D(\xi^{(k)},y) & = & C_n(\xi^{(k)})
\end{eqnarray*}
where $k$ is the size of the hitting set. The AM algorithm is then: randomly 
guess the $\xi$'s, then apply Koiran's AM algorithm to the preceding set of 
equations in only the $y$ variables. Note that Koiran's result applies to 
equations given by circuits, and each of the preceding equations is given by a 
small weakly-skew circuit.
\qed

We can now derandomize  the construction of $G_n$ above
using this result in conjunction with the derandomization procedure 
in \cite{KvM} (based on Impagliazzo--Wigderson \cite{IW}), assuming that  E does 
not have Boolean circuits, with an access to the SAT oracle, of subexponential 
size.
This yields, for every $n$, a sequence $G_n^1,\ldots, G_n^l$, $l=\poly(n)$, of 
graphs 
and the corresponding sequence $f_n^1,\ldots,f_n^l$ of polynomials such that
each $f_n^i$ can be approximated infinitesimally closely by symbolic determinants 
of size $n$, but,
assuming Conjecture~\ref{cexp} and the hardness hypothesis above, at least half of 
the
$f_n^i$'s cannot be expressed as symbolic determinants of size $\le n^{1 + 
\delta}$, for a small enough constant $\delta>0$.
This gives a two-parameter explicit family $\{f_n^i | 1 \le i \le l=\poly(n)\}$ 
such that 
each $f_n^i$ can be approximated infinitesimally closely by symbolic determinants 
of size $n$, but,
assuming Conjecture~\ref{cexp} and the hardness hypothesis above, the exact 
determinantal complexity of the family is 
$\Omega(n^{1 + \delta})$, for a small enough constant $\delta>0$.

A less conservative derandomization procedure  is as follows.
For each $n$, let $F_n$ be a Ramanujan graph as in \cite{sarnak} on $n^{3/2}$ 
vertices. 
Partition the set of vertices of $F_n$ in $n$ groups $A_1,\ldots,A_n$, each of 
size $n^{1/2}$.
Let $G_n$ be a graph that contains one vertex labelled $i$ for each $A_i$, $1 \le 
i \le n$.
Join two distinct vertices $i$ and $j$ in $G_n$ if there is an edge in $F_n$ from 
any vertex in  $A_i$ to any vertex in $A_j$.
Let $f_n$ be defined as above with this $G_n$. Then each $f_n$ can be approximated 
infinitesimally closely by 
symbolic determinants of size $n$. But it may be conjectured that $f_n$ cannot be 
computed exactly by a symbolic determinant
of  $\Omega(n^{1+\delta})$ size, for a small enough positive constant $\delta$.

\section{Newton degenerations of generic semi-invariants of quivers} \label{sspecial}
In this section we prove Theorem~\ref{tintro2} for the generalized Kronecker 
quivers, $k$-subspace quivers, and the A-D-E Dynkin quivers.
We assume familiarity with the basic notions of the representation theory of 
quivers; cf. \cite{DW00,DZ01}.

\subsection{Newton degeneration to faces with small coefficient complexity} \label{scoeff}
We begin by observing that the technique used to prove Theorem~\ref{thm:tutte} can be generalized  
further. In the proof of Theorem~\ref{thm:tutte}, due to Edmonds' 
description of the perfect matching polytope, every face has a ``small'' 
description, by  a set of linear equalities whose coefficients are 
polynomially bounded in  magnitude. 

For a face $Q$ of a polytope $P$, we say that a set of linear 
equalities $E$ characterizes $Q$ with respect to $P$, if the description of $P$ together with that of 
$E$ characterizes $Q$. For $E$, let $\coeff(E)$ be the sum of the absolute values of 
the coefficients of the linear equalities in $E$. We define the \emph{coefficient complexity}
 of $Q$ as the minimum of $\coeff(E)$ over the integral linear equality sets $E$ that 
characterize  $Q$ with respect to $P$. Adapting the proof of 
Theorem~\ref{thm:tutte} we easily get the following:

\begin{theorem}\label{thm:small_coeff_complexity}
Suppose $f\in \F[x_1, \dots, x_n]$ can be computed by a (weakly skew) 
arithmetic circuit of size $s$. Let $Q$ be a face of $\npt(f)$ whose coefficient 
complexity is $\poly(n)$. Then $f|_Q$ can be computed by a (weakly skew)  arithmetic 
circuit of size $\poly(s, n)$.
\end{theorem}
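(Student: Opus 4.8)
The plan is to mimic the proof of Theorem~\ref{thm:tutte}, but carrying the bound on coefficient complexity through explicitly. Suppose $f \in \F[x_1,\dots,x_n]$ is computed by a (weakly skew) circuit of size $s$, and let $Q$ be a face of $\npt(f)$ with coefficient complexity $\poly(n)$. By definition of coefficient complexity, there is an integral linear equality set $E = \{\langle \veca^{(1)}, \vecx\rangle = b_1, \dots, \langle \veca^{(r)}, \vecx\rangle = b_r\}$ that characterizes $Q$ with respect to $\npt(f)$, with $\sum_{i,j} |\veca^{(i)}_j| = \poly(n)$ (in particular each $b_i$ and each entry of each $\veca^{(i)}$ is polynomially bounded, and $r = \poly(n)$ since $Q$ sits inside a polytope in $\R^n$ and the number of equalities needed to cut out a face is at most the dimension). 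The key point is that, since the description of $P = \npt(f)$ together with $E$ cuts out $Q$, the exponent vectors $\vece$ of $f$ lying in $Q$ are exactly those satisfying $\langle \veca^{(i)}, \vece\rangle = b_i$ for all $i$, while all other exponent vectors $\vece$ of $f$ satisfy $\langle \veca^{(i)}, \vece\rangle > b_i$ for at least one $i$ (after orienting each equality so that $Q$ achieves the minimum, using convexity of $P$).

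Next I would encode this with a single new parameter. Introduce a new variable $y$ and a large weight vector $\vecw = \sum_{i=1}^{r} M^{i-1}\, \veca^{(i)}$ for a suitable integer $M = \poly(n)$ large enough (bigger than the total variation of any single $\langle \veca^{(i)}, \cdot\rangle$ over the exponents of $f$, which is $\poly(n)$ since both the $\veca^{(i)}$ and the degree of $f$ are polynomially bounded); this mixed-radix trick ensures that the $\vecw$-weight of an exponent vector $\vece$ of $f$ is minimized precisely on $Q$, the minimum value being $b := \sum_i M^{i-1} b_i = \poly(n)$, and all other weights exceed $b$. Entrywise, $\vecw$ has polynomially bounded coordinates. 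Now substitute $x_j \mapsto x_j y^{w_j}$ in the circuit for $f$ — each such substitution is a product of $x_j$ with at most $\poly(n)$ copies of $y$, costing $O(\poly(n))$ extra gates per input and preserving weak skewness — to obtain a circuit of size $\poly(s,n)$ computing $g(\vecx, y) := f(x_1 y^{w_1}, \dots, x_n y^{w_n}) = y^b \cdot f|_Q(\vecx) + (\text{higher order terms in } y)$. Here I should note $\vecw$ may have negative entries if the equalities were oriented that way; replacing $\vecw$ by $\vecw - (\min_j w_j)\mathbf{1}$ and adjusting $b$ fixes this without changing which face is selected, since a uniform shift of weights only shifts all exponents of $y$ by a fixed amount.

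Finally, to extract the coefficient of $y^b$ I would use polynomial interpolation à la Strassen (Lemma~\ref{lstrassen} / \cite{strassen_division}): evaluating $g(\vecx, y)$ at $\deg_y(g) + 1 = \poly(n)$ distinct scalar values of $y$ and taking the appropriate linear combination recovers $f|_Q(\vecx)$, and this is a $\poly(s,n)$-size (weakly skew) circuit since weak skewness is preserved under taking scalar evaluations and $\F$-linear combinations of the outputs. (Alternatively, in the weakly skew / $\vpws$ setting one can invoke Theorem~\ref{thm:division_ws} directly, dividing $g$ by $y^b$ and then setting $y = 0$, exactly as in the proof of Theorem~\ref{thm:tutte}.) The main obstacle — really the only subtlety — is step two: verifying that a single polynomially bounded weight vector $\vecw$ can simultaneously realize all $r$ equalities of $E$ as "this face achieves the strict minimum," i.e.\ that the mixed-radix combination does not blow up and that orientation/nonnegativity of the weights can be arranged; everything else is the routine substitution-and-interpolation machinery already used for Theorem~\ref{thm:tutte}.
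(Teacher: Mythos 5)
Your interpolation/division endgame matches the paper's and is fine; the issue is the weight vector you build in step two. The paper does \emph{not} use a mixed-radix combination. It simply sums: $\vecw = \sum_{i=1}^r \veca^{(i)}$. This works because, once each $\ell_i$ has been oriented so that every exponent vector $\vece$ of $f$ (not just those outside $Q$) satisfies $\langle \veca^{(i)}, \vece\rangle \geq b_i$, and $Q$ is exactly the set where equality holds for every $i$, we get $\langle \vecw, \vece\rangle = \sum_i \langle \veca^{(i)}, \vece\rangle \geq \sum_i b_i$, with equality if and only if each summand is tight, i.e.\ if and only if $\vece$ lies on $Q$. The sum of the weights then has coordinate magnitudes bounded by $\coeff(E) = \poly(n)$, and the exponent $b = \sum_i b_i$ is likewise polynomially bounded, so Strassen/Kaltofen--Koiran closes the argument at size $\poly(s,n)$.

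The mixed-radix construction $\vecw = \sum_i M^{i-1}\veca^{(i)}$ with $M=\poly(n)$ is where your argument genuinely breaks: when $r=\poly(n)$ (which is the case you explicitly allow), $M^{r-1}$ is \emph{super}polynomial, so the claim that ``$\vecw$ has polynomially bounded coordinates'' is false. That in turn makes $\deg_y g$ superpolynomial, and your interpolation step would need superpolynomially many evaluation points, destroying the size bound. You flag exactly this as ``the main obstacle'' at the end, but do not resolve it. The fix is that the obstacle isn't there if you observe that direct summation already isolates $Q$: mixed radix would only be needed if some exponent vector of $f$ could violate one constraint in the $<$ direction while another orientation is forced --- but that situation is already excluded by the hypothesis that each $\ell_i$ can be oriented as a valid inequality for all of $\npt(f)$, which is exactly what ``characterizing $Q$ with respect to $P$'' gives you after choosing orientations.
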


\begin{proof}
Let $E=\{\ell_1, \dots, \ell_m\}$ be the set of inequalities characterizing $Q$ 
with respect to $P$, with 
$\coeff(E)$ 
polynomially bounded. Note that  $m=|E|$ is polynomially bounded as well, since 
the 
Newton 
polytope of $f$ 
lives in $\R^n$. Without loss of generality, we assume $\ell_i$ is in the form $a_{i,1}x_1+\dots + 
a_{i,n}x_n\geq a_{i,0}$. Introduce a new variable $y$. For every $j\in[n]$, 
multiply $x_j$ with $y^{\sum_{i\in[m]}a_{i,j}}$. Let $f'\in \F[x_1, \dots, x_n]$ 
be the polynomial obtained from $f$ after transforming each $x_i$ as above. 
Consider 
$f'':=f'/y^{\sum_{i\in[m]}a_{i,0}}$; note that $f''$ is a polynomial. Setting 
$y=0$ in $f''$ yields $f|_Q$. Since all the exponents are polynomially bounded, 
$f'$ (and also 
the polynomial obtained from it
by setting some of the variables to $0$) has a (weakly-skew) 
arithmetic circuit of size $\poly(s, n)$, by Strassen 
\cite{strassen_division} (and Kaltofen and Koiran \cite{KK08}). 
\end{proof}

\begin{remark}
If $Q$ has $\poly(n)$ coefficient complexity, then it can be shown 
that $f|_Q$ is a one-parameter degeneration of $f$ of $\poly(n)$ degree.
Hence, Theorem~\ref{thm:small_coeff_complexity} can also be deduced from Lemma~\ref{lstrassen}.
\end{remark}

\subsection{Generic semi-invariants of generalized Kronecker quivers} \label{sec:Kronecker}
We now prove Theorem~\ref{tintro2} for the $m$-Kronecker quiver. 
Recall that the  $m$-Kronecker quiver is the graph with two vertices $s$ and $t$, 
with $m$ arrows 
pointing from $s$ to $t$. When $m=2$, this is the classical Kronecker quiver. When $m\ge 3$, this quiver is wild.  

Any tuple of $m$ $n\times n$ matrices is  a linear 
representation of the $m$-Kronecker quiver of dimension vector $(n, n)$. 
Let 
$\F[x_{i,j}^{(k)}]$ denote the ring of polynomials in   the  variables $x_{i,j}^{(k)}$,
where $i, j\in[n]$, and $k\in [m]$. For $k \in [m]$, let 
$X_k=(x_{i,j}^{(k)})$ denote the variable $n\times n$ matrix, whose $(i,j)$-th entry is
$x_{i,j}^{(k)}$.
Let $R(n, m)$ consist of those polynomials in 
$\F[x_{i,j}^{(k)}]$ 
that are invariant under the action of every $(A, C)\in \SL(n, \F)\times \SL(n, \F)$, which 
sends $(X_1, \dots, X_m)$ to $(AX_1C^{-1}, \dots, AX_mC^{-1})$. $R(n,m)$ is the ring of semi-invariants for the $m$-Kronecker quiver for dimension vector $(n,n)$ or ``\emph{matrix semi-invariants}'' due to their similarity with the well-known matrix invariants (see Section~\ref{subsubsec:trace}).

\begin{theorem} \label{thm:kronecker}
The Newton degeneration of a generic semi-invariant of the $m$-Kronecker quiver with
dimension vector $(n, n)$ and degree $dn$ to an arbitrary face can be computed by 
a weakly skew arithmetic circuit of size $\poly(d, n)$. 
\end{theorem}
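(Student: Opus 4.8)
The plan is to combine Theorem~\ref{thm:small_coeff_complexity} with an explicit understanding of the Newton polytope of a generic semi-invariant of the $m$-Kronecker quiver. By Theorem~\ref{thm:small_coeff_complexity}, it suffices to show two things: (1) a generic semi-invariant of degree $dn$ has a weakly skew (equivalently, symbolic determinant) circuit of size $\poly(d,n)$, and (2) every face of its Newton polytope has coefficient complexity $\poly(d,n)$. For (1), I would invoke the classical description of matrix semi-invariants: by the first fundamental theorem for $\SL(n)\times\SL(n)$ acting on $m$-tuples of $n\times n$ matrices (Domokos--Zubkov, Derksen--Weyman, Schofield--van den Bergh), the ring $R(n,m)$ is generated by the polynomials $\det(\sum_k Y_k \otimes X_k)^{1/n}$-type expressions; concretely, a spanning set for the degree-$dn$ part consists of $\det\bigl(\sum_{k=1}^{m} c_{k,1} X_k,\ \sum_k c_{k,2} X_k,\ \ldots\bigr)$ assembled as the determinant of a $dn \times dn$ block matrix whose blocks are $\F$-linear combinations $\sum_k c_{k} X_k$. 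Such a polynomial is literally a symbolic determinant of size $dn$ in the variables $x_{i,j}^{(k)}$, so it is in $\vpws$ with size $\poly(d,n)$, and a \emph{generic} semi-invariant, being a generic linear combination of these basis elements of a $\poly(d,n)$-sized basis, is again a symbolic determinant of size $\poly(d,n)$ (take a block-diagonal determinant of the finitely many generators with generic scalar weights, or use that $\vpws$ is closed under the relevant linear combinations).

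Next I would carry out step (2), bounding the coefficient complexity of every face of $\npt(f)$, where $f$ is a generic semi-invariant of degree $dn$. The key structural fact is that the Newton polytope of a generic semi-invariant is governed by the weight-space / multidegree structure: each variable $x_{i,j}^{(k)}$ carries a multidegree recording which of the $m$ arrows it comes from and which rows/columns of $X_k$ it touches, and the support of $f$ lives in the affine slice cut out by the defining equations ``each row index appears $d$ times, each column index appears $d$ times'' (this is the genericity of the semi-invariant forcing a fixed weight under the diagonal tori). These are equalities with all coefficients $0$ or $1$, hence coefficient complexity $O(dn \cdot m \cdot n) = \poly(d,n)$. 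The remaining faces of $\npt(f)$ inside this slice are obtained by intersecting with the facet inequalities of the underlying transportation-type polytope; I would argue, by an analogue of Edmonds' theorem used in Theorem~\ref{thm:tutte} (here it is simpler — essentially the Birkhoff--von Neumann / flow polytope for $m$ parallel copies), that these facets are again defined by integral inequalities with $0/1$ (or small integer) coefficients, so every face has coefficient complexity $\poly(d,n)$. Then Theorem~\ref{thm:small_coeff_complexity} immediately gives the conclusion.

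The main obstacle I anticipate is not in the circuit-size bound for generic semi-invariants (step 1, which follows from the classical FFT for matrix semi-invariants) but in pinning down precisely the facet description of the Newton polytope in step (2) with a clean $\poly(d,n)$ coefficient bound. Unlike the perfect-matching polytope, where Edmonds' theorem gives an explicit (if large) facet list with small coefficients, the Newton polytope of a generic semi-invariant of a Kronecker quiver is less standard; I would expect to need an argument that it is (an affine image of) a product or subpolytope of transportation polytopes, whose facets are the ``flow-conservation'' and ``nonnegativity'' inequalities — all with tiny coefficients — together with a verification that passing to a \emph{generic} semi-invariant (rather than a single generator) does not introduce new facets with large coefficients. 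A secondary subtlety is that a generic semi-invariant of degree $dn$ need not be irreducible and its Newton polytope could be a Minkowski sum; but Minkowski sums of polytopes with small-coefficient facet descriptions still have small-coefficient facet descriptions (the normal fan is the common refinement, whose rays are still small integer vectors), so this does not break the argument. Once the facet description is established, the proof is a direct application of Theorem~\ref{thm:small_coeff_complexity}.
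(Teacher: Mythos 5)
Your high-level strategy is the same as the paper's: show that $\npt(f)$ for the generic matrix semi-invariant is a $d$-matching (transportation-type) polytope whose facets have only $0/1$ coefficients, then invoke Theorem~\ref{thm:small_coeff_complexity}. But you correctly flag step (2) as unresolved, and that is exactly where the paper supplies the two facts you are missing. First, by the first fundamental theorem for matrix semi-invariants and the combinatorial result of \cite{ANK07}, the generic semi-invariant of degree $dn$ is $\det(A_1\otimes X_1+\dots+A_m\otimes X_m)$ for generic $A_i\in M(d,\F)$, and the monomials occurring in it are \emph{exactly} those whose exponent arrays $(s_{i,j}^{(k)})_{i,j\in[n],\,k\in[m]}$ form ``magic squares'': nonnegative integers with $\sum_{j,k}s_{i,j}^{(k)}=d$ for every $i$ and $\sum_{i,k}s_{i,j}^{(k)}=d$ for every $j$. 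Second, the incidence matrix of the underlying complete bipartite multigraph (with $m$ parallel edges between each pair $(i,j)$) is totally unimodular, so the polytope (\ref{eqn:generic_semi-invariants}) cut out by these row- and column-sum equalities and nonnegativity is integral, its lattice points are precisely the magic squares, and hence it \emph{equals} $\npt(f)$. Every face is then obtained by setting some subset of the $s_{i,j}^{(k)}$ to $0$, giving coefficient complexity $\poly(d,n)$. Without the ``exactly'' in both statements your argument does not close: knowing the support of $f$ \emph{lies in} the transportation slice does not hand you a facet description of $\npt(f)$, which is what Theorem~\ref{thm:small_coeff_complexity} needs.

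A secondary clarification worth making: in the convention of \cite{DW00,ANK07} a ``generic semi-invariant'' of the Kronecker quiver is a \emph{single} determinant $\det(A_1\otimes X_1+\dots+A_m\otimes X_m)$ with generic $A_i$, not a generic linear combination of a spanning set. This dissolves the Minkowski-sum concern you raise at the end, and also makes your step (1) immediate: that determinant is literally a symbolic determinant of size $dn$ in the $x_{i,j}^{(k)}$, so no appeal to closure of $\vpws$ under linear combinations is needed.
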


\begin{proof}

Let $M(d,\F)$ be the space of $d \times d$ matrices over $\F$.
By the first fundamental theorem of matrix semi-invariants 
\cite{DW00,SV01,DZ01,ANK07}, $\forall A_1, 
\dots, A_m \in M(d, \F)$, $\det(A_1\otimes X_1+\dots+A_m\otimes X_m)$ is a matrix 
semi-invariant, and every matrix semi-invariant is a linear combination of such 
semi-invariants. When $A_i$'s are generic,  the  monomials occurring
in $\det(A_1\otimes X_1+\dots+A_m\otimes X_m)$ have the following combinatorial 
description \cite{ANK07}. Define a magic square with  the parameter $(n, m, d)$ to be an $n\times 
n$ matrix $S$, with  $(i, j)$-th entry $S(i, j)=(s_{i,j}^{(1)}, \dots, 
s_{i,j}^{(m)})\in \N^m$, satisfying: (1) $\forall i\in [n]$, 
$\sum_{j,k}s_{i,j}^{(k)}=d$, and  (2) $\forall j\in[n]$, $\sum_{i,k}s_{i,j}^{(k)}=d$. 
With such a magic square, we associate  a monomial in $\F[x_{i,j}^{(k)}]$ by
setting the exponent of $x_{i,j}^{(k)}$ to  $s_{i,j}^{(k)}$. 
When $A_i$'s are generic,  the  monomials occurring in $\det(A_1\otimes X_1+\dots+A_m\otimes X_m)$
are precisely the monomials associated with such magic squares.

Consider 
the $n\times n$ complete bipartite graph $G$ in which, for  every $(i, j)\in 
[n]\times [n]$, there are $m$ edges between $i$ and $j$, colored by the elements of the  set $[m]$, with each color 
used exactly once. It is easily seen that the magic squares above correspond to the 
$d$-matchings in this graph $G$: for a graph $G=(V, E)$, a 
$d$-matching is a function $f:E\to \N$ such that  $\forall v\in V$, $\sum_{e\in E, e\sim 
v} f(e)=d$.

Hence,  the Newton polytope of a \emph{generic} matrix semi-invariant is 
characterized by the following constraints:
\begin{equation}\label{eqn:generic_semi-invariants}
(a)\, \forall i, j\in[n], k\in[m], s_{i,j}^{(k)}\geq 0; (b)\, \forall i\in [n], 
\sum_{j,k}s_{i,j}^{(k)}=d; (c)\, \forall j\in [n], \sum_{i,k}s_{i,j}^{(k)}=d.
\end{equation}
This description follows easily from the fact that the incidence 
matrix of a 
bipartite graph (possibly with multiple edges) is unimodular (cf. e.g. \cite[Chap. 
18]{Sch03}).
Each face of this polytope is obtained by setting some of $s_{i,j}^{(k)}$'s to $0$. Hence, 
its coefficient complexity is polynomial in $d$ and $n$.
Therefore, by 
Theorem~\ref{thm:small_coeff_complexity}, the theorem follows.
\end{proof}

\subsection{Generic semi-invariants of \texorpdfstring{$k$}{k}-subspace quivers} 
\label{sksubspace}
Next, we prove Theorem~\ref{tintro2} for the $k$-subspace quivers.

The $k$-subspace quiver is the quiver with $k+1$ vertices $\{x_1, \dots, x_k, 
y\}$, and $k$ arrows $\{\alpha_i=(x_i, y)\mid i\in[k]\}$. For $k=1, 2, 3$, the 
$k$-subspace 
quiver is of finite type. When $k=4$, it is of tame type. When $k\geq 5$, it is 
wild. 

We shall apply the description of semi-invariants of quivers by Domokos and Zubkov 
\cite{DZ01} to the case of $k$-subspace quivers. For this, we need some further 
notions. Fix a field $\F$. Let 
$Q=(Q_0, Q_1)$ be a quiver, where $Q_0$ is the vertex set, and $Q_1$ 
is the arrow set. For an arrow $\alpha$ in $Q$, we use $s\alpha$ (resp. $t\alpha$) 
to denote 
the start (resp. target) of $\alpha$. A path $\pi$ is a sequence of arrows 
$\alpha_1\alpha_2\dots \alpha_\ell$ 
such that $t\alpha_i=s\alpha_{i+1}$ for $i\in[\ell-1]$. The start (resp. target) of 
$\pi$ is $s\alpha_1$ 
(resp. $t\alpha_\ell$). A path is cyclic if $s\pi=t\pi$. We assume $Q$ has no 
cyclic 
paths of positive length. 

Let $V$ be a representation of $Q$; that is, for $x\in Q_0$, 
$V^x$ is the vector space associated with $x$, and for $\alpha\in Q_1$, $V^\alpha$ 
is the 
linear map from $V^{s\alpha}$ to $V^{t\alpha}$. This extends naturally to 
$V^\pi=V^{\alpha_k}\cdots V^{\alpha_1}: V^{s\pi} \to V^{t\pi}$ for a path $\pi$.

Fix a dimension vector $\beta$ for $Q$, and suppose $\ell=|Q_0|$. 
$|\beta|:=\sum_{x\in Q_0} \beta(x)$. Given $\beta$, 
after fixing bases for $V^x$, $x\in Q_0$, a representation of $Q$ is then 
specified using $n:=\sum_{\alpha\in Q_1}\beta(s\alpha)\cdot \beta(t\alpha)$ 
numbers. Let  
$u_1, \dots, u_n$ be $n$ variables. 

Let 
$\GL(\beta):=\GL(\beta_1, 
\F)\times\dots\times \GL(\beta_{\ell}, \F)$ be the direct product of general 
linear 
groups with corresponding dimensions acting naturally on the representations of 
$Q$ with dimension vector $\beta$. Let $SI(Q, \beta)\subseteq \F[u_1, \dots, u_n]$ 
be the set of 
semi-invariants with respect to $Q$ and $\beta$. Any $\sigma:Q_0\to \Z$ defines a 
multiplicative character of $\GL(\beta)$, $\chi_\sigma:(B(x)\mid x\in Q_0)\in 
\GL(\beta) \to \prod_{i\in[\ell]} \det(B(x))^{\sigma(x)}$. Then define $SI(Q, 
\beta)_\sigma=\{f\in SI(Q, \beta) | \forall B\in \GL(\beta), B\cdot f = 
\chi_{\sigma}(B) f\}$. It is clear that $SI(Q, \beta)=\oplus_\sigma SI(Q, 
\beta)_\sigma$. If $\langle \sigma, \beta\rangle:=\sum_{x\in 
Q_0}\sigma(x)\beta(x)\neq 0$ then there are no non-trivial $SI(Q, 
\beta)_{\sigma}$ (see e.g. \cite{DW00}). 
Otherwise, let $\sigma=\sigma_+-\sigma_-$, where $\sigma_+(x)=\max(\sigma(x), 0)$ 
and $\sigma_-(x)=\max(-\sigma(x), 0)$, and set $s=\langle \beta, 
\sigma_+\rangle$. 

Now we come to the key construction. Consider the $s\times s$ matrix
\begin{equation}\label{eq:domokos-zubkov}
g=\oplus_{x\in Q_0} (V^x)^{\sigma_+(x)}\to \oplus_{x\in Q_0} (V^x)^{\sigma_-(x)}, 
\end{equation} 
where each block, $\hom(V^x, V^y)$, is of the form $w_1V^{\pi_1}+\dots 
+w_rV^{\pi_r}$ 
where $\pi_1, \dots, \pi_r$ runs over the set of paths from $x$ to $y$, and $w_1, 
\dots, w_r$ are variables. For different blocks we use different variables. That 
is, the total number of variables is $m=\sum_{x\in Q_0}\sum_{y\in 
Q_0}\sigma_+(x)p(x,y)\sigma_-(y)$, where $p(x, y)$ is the number of paths between 
$x$ and $y$. $\det(g)$ then is a polynomial in $w_1, \dots, w_m$, and $u_1, \dots, 
u_n$. 
$w_i$'s are 
called auxiliary variables, since we shall use the following construction: for 
$(c_1, 
\dots, c_m)\in \F^m$, let $\det(g\mid 
w_i=c_i, i\in[m])$ be the polynomial in $\F[u_1, \dots, u_n]$ after assigning 
$w_i$ with $c_i$ in $\det(g)$. 

\begin{theorem}[Domokos and Zubkov \cite{DZ01}]\label{thm:domokos_zubkov}
Let notations be as above. 
$SI(Q, \beta)_\sigma$ is 
linearly spanned by $\{\det(g\mid w_i=c_i, i\in[m])\mid (c_1, \dots, c_m)\in 
\F^m\}$. 
\end{theorem}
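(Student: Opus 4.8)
The plan is to prove the two containments separately: first that $\mathrm{span}\{\det(g\mid w=c)\mid c\in\F^m\}\subseteq SI(Q,\beta)_\sigma$, that is, that each such specialization really is a semi-invariant of weight $\chi_\sigma$, and then the reverse (spanning) containment, which is the substantive point. The first is a direct computation: by construction the block of $g$ carrying the $x$-th summand $(V^x)^{\sigma_+(x)}$ of the domain into the $y$-th summand $(V^y)^{\sigma_-(y)}$ of the codomain is a fixed $\F$-linear combination of the composite maps $V^{\pi}\colon V^x\to V^y$, where $\pi$ ranges over the paths from $x$ to $y$. An element $(B(x))_{x\in Q_0}\in\GL(\beta)$ acts on a representation by $V^\alpha\mapsto B(t\alpha)\,V^\alpha\,B(s\alpha)^{-1}$, hence $V^{\pi}\mapsto B(t\pi)\,V^{\pi}\,B(s\pi)^{-1}$ for every path, so after specializing the $w_i$ the matrix $g$ is replaced by $\bigl(\bigoplus_{y}B(y)^{\oplus\sigma_-(y)}\bigr)\,g\,\bigl(\bigoplus_{x}B(x)^{\oplus\sigma_+(x)}\bigr)^{-1}$. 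The standing hypothesis $\langle\sigma,\beta\rangle=0$ forces $\langle\beta,\sigma_+\rangle=\langle\beta,\sigma_-\rangle=s$, so $g$ really is $s\times s$, its determinant is defined, and it gets multiplied by $\prod_{x\in Q_0}\det(B(x))^{\sigma_+(x)-\sigma_-(x)}=\prod_{x\in Q_0}\det(B(x))^{\sigma(x)}=\chi_\sigma(B)^{\pm1}$ (the sign depending only on the convention for the action on functions). Since $\det(g\mid w=c)$ is visibly a polynomial in $u_1,\dots,u_n$, each specialization lies in $SI(Q,\beta)_\sigma$; and since $c\mapsto\det(g\mid w=c)$ is polynomial in $c$, the span $U$ of its values is a well-defined linear subspace of $SI(Q,\beta)_\sigma$.

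For the reverse containment $SI(Q,\beta)_\sigma\subseteq U$ I would not compute the weight space directly, but would identify $U$ with a space of Schofield semi-invariants and invoke the Derksen--Weyman theorem \cite{DW00} (equivalently Schofield--van den Bergh) that these span every weight space. Given a representation $W$ of the acyclic quiver $Q$ with a projective presentation $P_1\xrightarrow{d}P_0\to W\to 0$ in which $P_0=\bigoplus_x P_x^{\sigma_+(x)}$ and $P_1=\bigoplus_x P_x^{\sigma_-(x)}$, with $P_x$ the indecomposable projective at $x$, one applies $\mathrm{Hom}_Q(-,V)$ to $d$ and uses $\mathrm{Hom}_Q(P_x,V)=V^x$ together with the fact that $\mathrm{Hom}_Q(P_x,P_y)$ is spanned by the paths between $x$ and $y$; this turns $d$ into a block matrix of exactly the shape of $g$ whose blocks are linear combinations of the $V^{\pi}$, that is, $g$ with the $w_i$ specialized to the structure constants of $d$. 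When $d$ is injective the determinant of this matrix is Schofield's semi-invariant $c^W(V)$ with $W=\mathrm{coker}(d)$, and conversely every $c\in\F^m$ arises as the list of structure constants of such a $d$. Thus $U$ coincides with the span of the Schofield semi-invariants attached to the representations of the dimension vector singled out by $\sigma$ (the unique one orthogonal to $\beta$ under the Euler form, the uniqueness coming from $Q$ being acyclic), and Derksen--Weyman gives $U=SI(Q,\beta)_\sigma$.

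The step I expect to require the most care is the bookkeeping in the previous paragraph: checking that freezing the block shape to be exactly $(\sigma_+,\sigma_-)$ still produces enough representations $W$---here one must allow non-minimal presentations and padding by trivial $P_x\xrightarrow{\mathrm{id}}P_x$ summands---so that the resulting $c^W$ span the full weight space, and also that the $\GL(\beta)$-character labelled $\sigma$ really is the weight of $c^W$ on the Schofield side; both points rest on $Q$ being acyclic, so that its path algebra $\F Q$ is hereditary and the Euler form is non-degenerate. An alternative, closer in spirit to the argument used for the Kronecker quiver in Section~\ref{sec:Kronecker}, would be to collapse $Q$ to a bipartite ``generalized Kronecker'' quiver---vertices the $x$ with $\sigma(x)\neq 0$, arrows the paths in $Q$ between such vertices---and then appeal to the first fundamental theorem of matrix semi-invariants \cite{DW00,SV01,DZ01,ANK07}; along that route the work moves to controlling the relations satisfied by path maps that factor through the discarded vertices, in order to show that no semi-invariants are lost in the collapse.
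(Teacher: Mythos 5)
This theorem is stated in the paper with a citation to Domokos and Zubkov \cite{DZ01} and is \emph{not} proved there --- it is imported as a blackbox. So there is no ``paper's own proof'' to compare against; what follows is an assessment of your sketch on its own terms.

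The first half of your argument (that each specialization $\det(g\mid w=c)$ lies in $SI(Q,\beta)_\sigma$) is fine and is essentially forced once one checks that each block transforms by $V^{\pi}\mapsto B(t\pi)V^{\pi}B(s\pi)^{-1}$. The problem is in the spanning direction, specifically in the ``padding'' step. You would like to realize every Schofield semi-invariant $c^W$, for $W$ of the dimension vector dictated by $\sigma$, as $\det\bigl(\mathrm{Hom}_Q(d,V)\bigr)$ for some presentation $P_1\xrightarrow{d}P_0$ with the \emph{fixed} shape $P_0=\bigoplus_x P_x^{\sigma_+(x)}$, $P_1=\bigoplus_x P_x^{\sigma_-(x)}$. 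Padding by $P_x\xrightarrow{\mathrm{id}}P_x$ only ever \emph{increases} both exponents at $x$ simultaneously, so it moves you away from, not towards, the shape $(\sigma_+,\sigma_-)$, which by construction has disjoint support. But a decomposable $W$ with the correct dimension vector typically has a minimal presentation whose shape already has overlapping support, hence is \emph{strictly larger} than $(\sigma_+,\sigma_-)$. Concretely, for $Q=A_3$ ($1\to 2\to 3$), $\beta=(n,n,n)$, and $\sigma=(1,0,-1)$, the shape is $(\sigma_+,\sigma_-)=(e_1,e_3)$, while $W=S_1\oplus S_2$ (dimension vector $(1,1,0)$, same as the indecomposable $W'=P_1/P_3$) has \emph{minimal} presentation $P_2\oplus P_3\to P_1\oplus P_2$, of shape $(e_1+e_2,\,e_2+e_3)$, from which no padding reaches $(e_1,e_3)$. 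In that example the conclusion is saved by the coincidence $\det(V^{\alpha_{23}}V^{\alpha_{12}})=\det(V^{\alpha_{23}})\det(V^{\alpha_{12}})$, i.e., $c^{W'}=c^{S_1}c^{S_2}$ --- but that is precisely the extra input your argument is silently relying on, and it is not supplied by the bijection ``$c\leftrightarrow d$'' together with Derksen--Weyman.

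So the gap is real: you must show that whenever $W$ has a presentation of larger shape, the product/triangular-block factorization of $\det(\mathrm{Hom}(d,V))$ can be re-expressed as $\det\bigl(\mathrm{Hom}(d',V)\bigr)$ for some $d'$ of shape exactly $(\sigma_+,\sigma_-)$, by composing the path maps across the overlapping $P_x$ summands and invoking $\det(AB)=\det(A)\det(B)$ at the level of the block structure. That step is where the actual content of the theorem lives, and it is what your ``alternative route'' through the collapsed generalized-Kronecker quiver is implicitly doing (the ``relations satisfied by path maps that factor through the discarded vertices'' \emph{are} exactly these triangular factorizations). Domokos and Zubkov's own argument is different again: they work directly with the ring of invariants via the first fundamental theorem for $\SL$ (a Cauchy/FFT argument), avoiding the Schofield machinery entirely. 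Both your sketch (once patched as above) and theirs are viable, but as written your padding remark is pointed in the wrong direction and does not close the gap.
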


\begin{prop}
The Newton degeneration of a generic semi-invariant of the $k$-subspace quiver of 
dimension vector $\beta$ and degree $d$ to an arbitrary face can be computed by a 
weakly-skew arithmetic circuit of size $\poly(|\beta|, d)$. 
\end{prop}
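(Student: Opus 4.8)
The plan is to mirror the proof of Theorem~\ref{thm:kronecker}: exhibit the Newton polytope of a generic semi-invariant as a transportation-type polytope, note that such polytopes are integral and all of their faces are cut out by vanishing of coordinates, and then invoke Theorem~\ref{thm:small_coeff_complexity}.

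\textbf{Reduction to a single weight.} Write $b_0 = \beta(y)$ and $b_i = \beta(x_i)$. Semi-invariants are graded by the weight $\sigma:Q_0\to\Z$, and all elements of $SI(Q,\beta)_\sigma$ are homogeneous of the same degree; so a generic semi-invariant of degree $d$ is a generic element of $SI(Q,\beta)_\sigma$ for some $\sigma$ of that degree. Since the only arrows are $\alpha_i:x_i\to y$ and $Q$ has no cyclic paths, inspecting the shape of the matrix $g$ in \eqref{eq:domokos-zubkov} shows $SI(Q,\beta)_\sigma=0$ unless $\sigma(y)=-q<0$ and $\sigma(x_i)=p_i\ge 0$ for all $i$ (otherwise $g$ acquires a zero row-block or column-block and $\det(g)\equiv 0$); then $\langle\beta,\sigma\rangle=0$ forces $\sum_i p_i b_i = q b_0$, and this common value equals the degree $d$ of $SI(Q,\beta)_\sigma$, since every entry of $g$ is a single linear form in the representation variables and $g$ is $d\times d$. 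In particular $q\le d/b_0$ and $p_i\le d/b_i$, so all these parameters are $\le d$, and specializing the auxiliary variables $w$ of $g$ to generic constants exhibits a generic $f\in SI(Q,\beta)_\sigma$ as a symbolic determinant of size $d$, hence a weakly-skew circuit of size $\poly(d)$ in the $n := \sum_i b_0 b_i\le|\beta|^2$ representation variables.

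\textbf{The Newton polytope.} For the $k$-subspace quiver and $\sigma$ as above, $g$ is the $d\times d$ matrix with rows indexed by pairs $(b',p)$, $b'\in[q]$, $p\in[b_0]$, columns indexed by triples $(i,a,r)$, $i\in[k]$, $a\in[p_i]$, $r\in[b_i]$, and entry $g[(b',p),(i,a,r)] = w_{i,a,b'}\,u^{(i)}_{p,r}$, where $u^{(i)}_{p,r}$ is the $(p,r)$-entry of $V^{\alpha_i}$. Expanding $\det(g)$ over bijections $\pi$ from rows to columns, the exponent vector of the resulting monomial in the $u$-variables records, for each $(i,p,r)$, the number of rows $(b',p)$ that $\pi$ sends to a column $(i,\cdot,r)$. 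One checks that these exponent vectors are exactly the nonnegative integer points of the transportation polytope $P$ with supply $q$ at each source $p\in[b_0]$, demand $p_i$ at each sink $(i,r)$, and edge variable $e^{(i)}_{p,r}$ from $p$ to $(i,r)$: the balance $\sum_p e^{(i)}_{p,r}=p_i$ comes from injectivity of $\pi$ on the $p_i$ columns $(i,\cdot,r)$, the balance $\sum_{i,r} e^{(i)}_{p,r}=q$ from counting the $q$ rows $(b',p)$, and conversely every lattice point of $P$ is realized by a suitable $\pi$ obtained by distributing rows to columns block by block. Since $SI(Q,\beta)_\sigma$ is linearly spanned by the polynomials $\det(g\mid w=c)$ (Theorem~\ref{thm:domokos_zubkov}), the set of monomials occurring in some semi-invariant of weight $\sigma$ is exactly this set of lattice points, so $\npt(f)=P$ for generic $f$; here $P$ equals the convex hull of its lattice points because its constraint matrix is the incidence matrix of a bipartite graph, hence totally unimodular, exactly as in the proof of Theorem~\ref{thm:kronecker}.

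\textbf{Conclusion.} The only inequalities defining $P$ are $e^{(i)}_{p,r}\ge 0$, so every face $Q$ of $P$ is obtained by setting some subset of the coordinates $e^{(i)}_{p,r}$ to zero; thus $Q$ is characterized with respect to $P$ by at most $n\le|\beta|^2$ equalities, each of coefficient sum $1$, so its coefficient complexity is $\poly(|\beta|)$. Applying Theorem~\ref{thm:small_coeff_complexity} to the size-$\poly(d)$ weakly-skew circuit for $f$ in $n\le|\beta|^2$ variables yields a weakly-skew circuit for $f|_Q$ of size $\poly(d,|\beta|)$, proving the proposition. I expect the main obstacle to be the second step: verifying that $\npt(f)$ is precisely the transportation polytope $P$ — that is, the combinatorial bookkeeping matching monomials of $\det(g)$ to lattice points of $P$, together with the total-unimodularity argument guaranteeing $P$ has no non-integral vertices so that the Newton polytope of a generic element of the span captures all of $P$. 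Everything else follows routinely from Theorems~\ref{thm:domokos_zubkov} and \ref{thm:small_coeff_complexity}.
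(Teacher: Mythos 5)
Your proposal is correct and follows essentially the same route as the paper: apply the Domokos--Zubkov description (Theorem~\ref{thm:domokos_zubkov}) to the $k$-subspace quiver to identify the Newton polytope of a generic semi-invariant with a bipartite transportation/$f$-perfect-matching polytope, observe that all its faces are cut out by setting coordinates to zero (bipartite incidence matrices being totally unimodular), and conclude with Theorem~\ref{thm:small_coeff_complexity}. The only differences are cosmetic — you index the matrix $g$ more explicitly and phrase the polytope as a transportation polytope, whereas the paper phrases it in terms of $f$-perfect matchings on a complete bipartite graph — but the combinatorial content is identical.
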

The proof strategy is to apply Theorem~\ref{thm:domokos_zubkov} to the 
$k$-subspace quiver, which yields a combinatorial description of the exponent 
vectors of monomials in 
a generic semi-invariant of a certain weight. From the combinatorial description 
we obtain a description of the Newton polytope and its faces of a 
generic semi-invariant. We 
then conclude by applying Theorem~\ref{thm:small_coeff_complexity} as for 
generalized Kronecker quivers. 
\begin{proof}
In the $k$-subspace quiver we have $k+1$ vertices $\{x_1, \dots, x_k, 
y\}$ and $k$ arrows $\alpha_i=(x_i, y)$. Observe that (1) a non-trivial path is of 
length 
$1$; (2) only $y$ (resp. $x_i$'s) can serve as the target (resp. start) of a path.
Therefore, for $\det(g)$ to be nonzero, it is necessary 
that $\sigma_+(y)=0$ and $\sigma_-(x_i)=0$ for $i\in[k]$. That is, 
Equation~\ref{eq:domokos-zubkov} 
for $k$-subspace quiver has to be of the form
$$
g=\oplus_{i\in[k]} (V^{x_i})^{\sigma_+(x_i)}\to (V^y)^{\sigma_-(y)}, 
$$
for $\det(g)$ to be nonzero. 

$g$ then is a block matrix of the following form: the rows are divided into 
$\sigma_-(y)$ blocks, with each block of size $\beta(y)$. The columns are divided 
into $\sum_{i\in[k]}\sigma_+(x_i)$ blocks, with $\sigma_+(x_i)$ blocks of size 
$\beta(x_i)$. Let the number of rectangular blocks be $m$ ($m=\sigma_-(y)\cdot 
(\sum_i\sigma_+(x_i))$). As for different blocks we use 
different variables, so the auxiliary variables are $w_1, \dots, w_m$. In a block 
indexed by $(y, x_i)$, we put in $w_\ell V^{\alpha_i}$, where $V^{\alpha_i}$ is a 
variable 
matrix 
of size $\beta(y)\times \beta(x_i)$. We fix bases for $V^y$ and $V^{x_i}$: let 
$P=\{p_1, \dots, p_{\beta(y)}\}$ be a basis of $V^y$, and for $i\in[k]$, let 
$Q_i=\{q_{i, 1}, \dots, q_{i, \beta(x_i)}\}$ be a basis of $V^{x_i}$. Then the 
rows 
of $g$ can be indexed by $\cup_{i\in[\sigma_-(y)]}P^{(i)}$ where $P^{(i)}$ is a  
copy of $P$, and the columns of $g$ are indexed by 
$\bigcup_{i\in[k]}\cup_{j\in[\sigma_+(x_i)]}Q_i^{(j)}$ where $Q_i^{(j)}$ is a copy 
of 
$Q_i$.

Viewing $\det(g)$ as a polynomial in $\F[w_1, \dots, w_m][\cup_{i\in[k]} 
V^{\alpha_i}]$, 
we are interested in the $\cup_{i\in[k]} V^{\alpha_i}$-monomials, as in a generic 
semi-invariant, these are all the monomials. 
Note that $g$ is a $d\times d$ matrix where $d=\beta(y)\cdot \sigma_-(y)$, and 
the $\cup_{i\in[k]} V^{\alpha_i}$-monomials in $g$ are of degree $d$. 

To describe these monomials, we form an undirected bipartite graph $G=(L\cup R, 
E)$ 
as follows. Let $L=\{p_1, \dots, p_{\beta(y)}\}$, and $R=\{q_{i, j_i}\mid i\in[k], 
j_i\in[\beta(x_i)]\}$. Connect each $(p_i, q_{j, \ell})$ with an edge to form a 
complete bipartite graph. The number of edges is 
$\sum_{i\in[k]}\beta(y)\cdot \beta(x_i)$, and we can naturally identify the 
edges with variables in $\cup_{i\in[k]}V^{\alpha_i}$. Now consider a function 
$f:L\cup 
R\to \N$, with $f(p_i)=\sigma_-(y)$, $f(q_{j, \ell})=\sigma_+(x_j)$. For such a 
function, we can define the $f$-perfect matching of $G$, that is a function 
$h:E\to \N$, such that for any $r\in L\cup R$, $\sum_{e\in E, e \sim r}h(e)=f(r)$, 
where $e\sim r$ denotes that $e$ is an edge adjacent to $r$. 

It is not hard to verify that the $f$-perfect matchings and the exponent vectors 
in a generic semi-invariant are in one to one correspondence. One direction is 
easy: a bijective function $b: \cup_{i\in[\sigma_-(y)]}P^{(i)} \to 
\bigcup_{i\in[k]}\cup_{j\in[\sigma_+(x_i)]}Q_i^{(j)}$ clearly defines an 
$f$-perfect matching, as there are $\sigma_-(y)$ copies of $P$ so each $p_i\in P$ 
indexes $\sigma_-(y)$ rows, and similarly for the columns. Furthermore the 
$f$-perfect matching records the exponent vector of the monomial in 
$\cup_{i\in[k]}V^{\alpha_i}$ based on $b$. On the other hand, 
given any $f$-perfect matching, it is routine to check that we can construct at 
least one bijective functions from $\cup_{i\in[\sigma_-(y)]}P^{(i)}$ 
to 
$\bigcup_{i\in[k]}\cup_{j\in[\sigma_+(x_i)]}Q_i^{(j)}$. When there are more than 
one such bijective functions, it is easy to check that all of them produce 
the same monomial. 

This suggests that we can use the description of the bipartite $f$-perfect 
matching polytope. Let $s_{i, j, \ell}$ be the variable associated with the edge 
$(p_i, q_{j, \ell})$, then we have the following inequalities and equalities for 
the bipartite $f$-perfect matching polytope: 
\begin{equation}
\begin{array}{ll}
\forall i\in [\beta(y)], j\in[k], \ell\in[\beta(x_j)], & s_{i,j, \ell}\geq 0  \\
\forall i\in [\beta(y)], & \sum_{j,\ell}s_{i,j, \ell}=\sigma_-(y)\\
\forall j\in [k], \ell\in[\beta(x_j)], & \sum_i s_{i,j, \ell}=\sigma_+(x_j).
\end{array}
\end{equation}
Therefore each face is also obtained by setting some $s_{i, j, \ell}$ to $0$, so 
we can use Theorem~\ref{thm:small_coeff_complexity} to conclude. 
\end{proof}

\subsection{Generic semi-invariants of Dynkin quivers and beyond} 
\label{sgenquiver}
Finally, we prove Theorem~\ref{tintro2} for the A-D-E Dynkin quivers.

Here, instead of the Domokos and Zubkov invariants \cite{DZ01},
we shall use the invariants of Schofield \cite{schofieldSI}, which are also  known 
to linearly span the space of semi-invariants \cite{DW00}. We recall Schofield's 
construction in some detail (without proof), so that we can reason about the 
Newton polytopes of the Schofield invariants. Our general strategy is to find 
(in)equalities satisfied by these Newton polytopes for arbitrary quivers, and then 
to show that for the A-D-E Dynkin  quivers, these inequalities in fact 
\emph{define} the corresponding Newton polytopes. We will then apply 
Theorem~\ref{thm:small_coeff_complexity}, since the inequalities we find all have 
small coefficient complexity.

\subsubsection{Schofield invariants.} 
Given two representations $V,W$ of the same quiver, the associated Schofield 
invariant vanishes if and only if there is a homomorphism of quiver 
representations from $V$ to $W$. The idea is to treat a map from $V$ to $W$ as 
variable, and then try to solve the equations which say that those variables 
define a homomorphism of representations. These equations are linear and 
homogeneous, so they have a solution if and only if a certain determinant 
vanishes; this determinant will be the Schofield invariant.

Consider the map $d^V_W$ that takes $( f^x \colon V^x \to W^x )$ to $( f^{t 
\alpha} \circ V^\alpha - W^\alpha \circ f^{s \alpha} \colon V^{s \alpha} \to W^{t 
\alpha} )$. 
(Here $x$ denotes a vertex of the quiver, $\alpha$ an arrow, $s \alpha$ its start, 
and $t \alpha$ its target.)
For a given $f$, $d^V_W(f)=0$ if and only if $f$ is a homomorphism of quiver 
representations $V \to W$. Thus, $s(V,W) := \det(d^V_W)$ vanishes if and only if 
there \emph{exists} a non-zero homomorphism $V \to W$. Since the existence of a 
homomorphism is basis-independent, we immediately see that the \emph{vanishing} of 
$s(V,W)$ is in fact $\GL(V) \times \GL(W)$-invariant. However, this does not 
immediately tell us that $s(V,W)$ itself is invariant, though it is close; 
Schofield showed that in fact $s(V,W)$ is $\SL(V) \times \SL(W)$-invariant. When 
we fix the dimension vectors of $V$ and $W$, but we think of \emph{both} $V$ and 
$W$ as defined by variables, we call $s(V,W)$ a Schofield \emph{pair} invariant. 
When we think of $V$ as given by actual values but $W$ as given by variables, we 
refer to $s^V(W) = s(V,W)$ as a Schofield invariant. It is these latter 
invariants, as $V$ ranges over all possible dimension vectors and all possible 
values, that linearly span the ring of semi-invariants for the dimension vector of 
$W$ \cite{DW00}.

Let us study the structure of the matrix $d^V_W$ in a bit more detail. For a 
vertex $x$ of a quiver $Q$, we let $V^x$ (resp., $W^x$) denote the vector space 
associated to $x$ in the representation $V$; for an arrow $\alpha$ we let 
$V^\alpha$ denote the corresponding matrix. We use $s \alpha$ to denote the 
``start'' of the arrow $\alpha$ and $t \alpha$ to denote the ``target'' of 
$\alpha$. We think of the matrix as acting on column vectors. The matrix $d^V_W$ 
then has row indices $(\alpha,i,\ell)$, where $\alpha$ is an arrow of $Q$, $i$ 
ranges over a basis for $W^{t \alpha}$ and $\ell$ ranges over a basis for $V^{s 
\alpha}$; it has column indices $(x,j,k)$ where $x$ ranges over vertices, $j$ 
ranges over a basis of $W^x$ and $k$ ranges over a basis of $V^x$. We refer to the 
set of rows of the form $(\alpha,*,*)$ as the $\alpha$ \emph{block-row}, and the 
set of columns of the form $(x,*,*)$ as the $x$ block-column. 

Now we determine the entries of $d^V_W$ precisely:

\begin{eqnarray*}
d^V_W(f)_{\alpha} & = & f^{t \alpha} \circ V^\alpha - W^\alpha \circ f^{s \alpha} 
\\
d^V_W(f)_{(\alpha,i,\ell)} & = & (f^{t \alpha} \circ V^\alpha)_{i,\ell} - 
(W^\alpha \circ f^{s \alpha})_{i,\ell} \\
 & = & \sum_m f^{t \alpha}_{i,m} V^\alpha_{m,\ell} - \sum_n W^\alpha_{i,n} f^{s 
 \alpha}_{n,\ell} \\
d^V_{W; (\alpha,i,\ell),(x,j,k)} & = & \delta_{t \alpha, x} \delta_{i,j} 
V^\alpha_{k,\ell} - \delta_{s \alpha, x} W^\alpha_{i,j} \delta_{\ell,k} \\
 & = & \begin{cases}
\delta_{i,j} V^\alpha_{k,\ell} & \text{ if } t \alpha = x \\
-\delta_{\ell,k} W^\alpha_{i,j} & \text{ if } s \alpha = x \\
\delta_{i,j} V^\alpha_{k,\ell} - \delta_{\ell,k} W^\alpha_{i,j} & \text{ if } s 
\alpha = t \alpha = x \\
0 & \text{ otherwise}
 \end{cases} \\
d^V_{W;\alpha;x} & = & \delta_{t \alpha, x} I_{(W^x)} \otimes (V^\alpha)^T - 
\delta_{s \alpha, x} W^\alpha \otimes I_{(V^x)}
\end{eqnarray*}

Note that each $W^\alpha$ only appears (though multiple times) in a single 
block-row and block-column, namely $(\alpha, s \alpha)$, and each $V^\alpha$ only 
appears in a single block-row and block-column, namely $(\alpha, t \alpha)$. Note 
that every time $V^\alpha$ appears, it appears transposed. From the preceding 
equation, we see that the $\alpha$ block-row is the unique set of $(\dim W^{t 
\alpha})(\dim V^{s \alpha})$ rows that contain all instances of the variables 
$W^\alpha, V^\alpha$, and the $x$ block-column is the unique set of $(\dim 
W^x)(\dim V^x)$ columns containing all instances of the variables $W^\alpha$ for 
all $\alpha$ such that $s \alpha = x$, and all instances of the variables 
$V^\alpha$ for all $\alpha$ such that $t \alpha = x$.

We now begin deriving some inequalities satisfied by $\npt(s(V,W))$. Let 
$\omega^\alpha_{i,j}$ be the exponent corresponding to the variable 
$W^\alpha_{i,j}$ ($\alpha \in E(Q)$, $i \in [\dim W^{t \alpha}]$, $j \in [\dim 
W^{s \alpha}]$), and $\nu^\alpha_{k,\ell}$ the exponent corresponding to the 
variable $V^\alpha_{k,\ell}$ ($k \in [\dim V^{t \alpha}]$, $\ell \in [\dim V^{s 
\alpha}]$). We try to be consistent in our usage of $i,j,k,\ell$ throughout. 

As with all Newton polytopes, we have:
\begin{align}
\omega^\alpha_{i,j} \geq 0 \qquad \nu^\alpha_{k,\ell} &\geq 0 &&  \forall \alpha, 
i, j,k,\ell \label{eqn:pos}
\end{align}

\textit{Blocks.} For each $\alpha \in E(Q)$, in each monomial there must be 
exactly as many entries chosen from the rows in the $\alpha$ block-row as the 
total number of rows in the block-row:
\begin{align}
\sum_{i,j} \omega^{\alpha}_{i,j} + \sum_{k,\ell} \nu^\alpha_{k,\ell} &= (\dim W^{t 
\alpha})(\dim V^{s \alpha}) && \forall \alpha \in E(Q) \label{eqn:blockrow}
\end{align}

Similarly, for each $x \in V(Q)$, in each monomial there must be exactly as many 
entries chosen from the columns in the $x$-th block-column as the total number of 
columns in that block-column:
\begin{align}
\sum_{\alpha : s \alpha = x} \sum_{i,j} \omega^{\alpha}_{i,j} + \sum_{\alpha : t 
\alpha = x} \sum_{k,\ell} \nu^{\alpha}_{k,\ell} &= (\dim W^x)(\dim V^x) && \forall 
x \in V(Q) \label{eqn:blockcol}
\end{align}

\textit{Mini-blocks.} By the \emph{$W$-mini-block-row} corresponding to 
$(\alpha,*,\ell)$ ($\ell \in [\dim V^{s \alpha}]$), we mean the unique set of 
$\dim W^{t \alpha}$ rows in which the $\ell$-th copy of $W^\alpha$ appears. By the 
\emph{$V$-mini-block-row} corresponding to $(\alpha,i,*)$ ($i \in [\dim W^{t 
\alpha}]$), we mean the unique set of $\dim V^{s \alpha}$ rows containing the 
$i$-th copy of $V^\alpha$. Note that the $W$-mini-block-rows and the 
$V$-mini-block-rows appear in a collated or product-like fashion: For each 
$\alpha$, the $(\alpha,i,*)$ $W$-mini-block-row intersects each $V$-mini-block-row 
$(\alpha,*,\ell)$ in exactly one row (namely $(\alpha,i,\ell)$), and vice versa. 
Similarly, by the \emph{$W$-mini-block-column} corresponding to $(x,*,k)$ ($k \in 
[\dim V^x]$), we mean the unique set of $\dim W^{x}$ columns in which the $k$-th 
copy of $W^{\alpha}$ appears for all $\alpha$ such that $s \alpha = x$.  By the 
\emph{$V$-mini-block-column} corresponding to $(x,j,*)$ ($j \in [\dim W^x]$) we 
mean the unique set of $\dim V^x$ columns in which the $j$-th copy of of 
$V^\alpha$ appears, for all $\alpha$ such that $t \alpha = x$.

Consider the $(\alpha,i,*)$ $V$-mini-block-row, i.e., the set of rows in which the 
$i$-th copy of $V^\alpha$ appears, which is the same as the set of rows that 
contain \emph{every} occurrence of the variables $W^\alpha_{i,*}$ (and no other 
$W$ variables). In each copy of $W^\alpha$, each monomial can pick at most one 
variable from the $i$-th row, and there are $\dim V^{s \alpha}$ copies of 
$W^\alpha$ on pairwise disjoint sets of rows ($W$-mini-block-rows), so each 
monomial can have degree at most $\dim V^{s \alpha}$ in the variables 
$W^\alpha_{i,*}$:
\begin{align}
\sum_j \omega^\alpha_{i,j} & \leq  \dim V^{s \alpha} && \forall \alpha \forall i 
\in [\dim W^{t \alpha}]. \label{eqn:Wrow_upper}
\end{align}
On the other hand, if too few elements from $W^\alpha_{i,*}$ are picked, then any 
such monomial will be forced to pick two elements of the $i$-th copy of $V^\alpha$ 
\emph{from the same column}, which is not allowed, so we also have:
\begin{align}
\sum_j \omega^\alpha_{i,j} & \geq \dim V^{s \alpha} - \dim V^{t \alpha} && \forall 
\alpha \forall i \in [\dim W^{t \alpha}] \label{eqn:Wrow_lower_fixed}
\end{align}

Additionally, in the $(\alpha,i,*)$ $V$-mini-block-row, in total each monomial 
must select exactly one variable from each of the $\dim V^{s \alpha}$ rows. The 
natural thing to do here would be to add in the degree of $V^\alpha$. However, in 
doing so we may overcount, since the $\nu$'s also include choices of $V$-variables 
that appear in other rows. So we get:
\begin{align}
\sum_j \omega^\alpha_{i,j} + \sum_{k,\ell} \nu^\alpha_{k,\ell} &\geq \dim V^{s 
\alpha}  && \forall \alpha \forall i \in [\dim W^{t \alpha}]. 
\label{eqn:Wrow_lower}
\end{align}
Despite the similarity of the preceding two inequalities, we note that they are in 
fact independent, as $\sum_{k,\ell} \nu^\alpha_{k,\ell}$ can be larger than $\dim 
V^{t \alpha}$ (e.g., when $\dim W^{t \alpha} > \dim V^{t \alpha}$) or smaller than 
$\dim V^{t \alpha}$ (e.g., when dimensions align so that a term may cover all of 
the $\alpha$ block rows by $W^\alpha$ variables, using none of the $V^\alpha$ 
variables).

For the mini-block-\emph{columns}, we can also get additional information about 
the mini-blocks by considering the ``complement,'' since there is only one 
$W^\alpha$ that is in the same row as any given $V^\alpha$. Given a 
$V$-mini-block-column $(x,j,*)$, we know that exactly $\dim V^x$ entries must get 
chosen from this $V$-mini-block-column. The entries in this $V$-mini-block-column 
are the $j$-th copies of those $V^\alpha$ such that $t \alpha = x$, as well as all 
of the appearances of the columns $W^\alpha_{*,j}$ when $s \alpha = x$. However, 
for each $\alpha$ such that $V^\alpha$ appears in this $V$-mini-block-column, the 
number of $V^\alpha$ entries chosen in this $V$-mini-block-column is complementary 
to the number of $W^\alpha_{j,*}$ entries (same $\alpha$, and the $j$ is 
purposefully the row index now) chosen from the copy of $W^\alpha$ in the $s 
\alpha$ block-column. So we get:
\begin{align}
\sum_{\alpha : s \alpha = x} \sum_i \omega^\alpha_{i,j} + \sum_{\alpha : t \alpha 
= x} \left( \dim V^{s \alpha} - \sum_{j'} \omega^\alpha_{j, j'} \right) &= \dim 
V^x && \forall x \forall j \in [\dim W^x] \label{eqn:Vminiblockcol}
\end{align}

The $V$-$W$ symmetric arguments (with appropriate transposes, etc.) then give:
\begin{align}
\sum_k \nu^\alpha_{k,\ell} & \leq \dim W^{t \alpha} && \forall \alpha \forall \ell 
\in [\dim V^{s \alpha}] \label{eqn:Vrow_upper} \\
\sum_{i,j} \omega^\alpha_{i,j} + \sum_k \nu^\alpha_{k,\ell} & \geq \dim W^{t 
\alpha} && \forall \alpha \forall \ell \in [\dim V^{s \alpha}] 
\label{eqn:Vrow_lower} \\
\dim W^{s \alpha} + \sum_k \nu^\alpha_{k,\ell} & \geq \dim W^{t \alpha} && \forall 
\alpha \forall \ell \in [\dim V^{s \alpha}] \label{eqn:Vrow_lower_fixed} \\
\sum_{\alpha : t \alpha = x} \sum_\ell \nu^\alpha_{k,\ell} + \sum_{\alpha : s 
\alpha = x} \left( \dim W^{t \alpha} - \sum_{k'} \nu^\alpha_{k', k} \right) &= 
\dim W^x && \forall x \forall k \in [\dim V^x] \label{eqn:Wminiblockcol}
\end{align}

In some cases, these equations are already sufficient, for example for the 
generalized Kronecker quivers. These equations may in fact suffice in general 
(with suitable modifications for degenerate cases, such as when certain dimensions 
are 1).

\begin{ex} \label{ex:Kronecker}
As an example, let us observe that these equations reduce to the equations 
(\ref{eqn:generic_semi-invariants}) for the generic semi-invariants of the 
$m$-Kronecker quiver for dimension vector $(n,n)$. In this case, we have two 
vertices, $x, y$, with $m$ arrows $x \to y$, say labelled $1, \dotsc, m$. We have 
$\dim W^x = \dim W^y = n$. There are $n(\dim V^x + \dim V^y)$ columns. Since every 
arrow goes from $x \to y$, every block-row has $(\dim V^x)(\dim W^y)$ rows, for a 
total of $mn(\dim V^x)$ rows. To get these two quantities to be equal, we must 
have $\dim V^x + \dim V^y = m \dim V^x$, or equivalently, $\dim V^y = (m-1) \dim 
V^x$. Let $d = \dim V^x$, so $\dim V^y = d(m-1)$. We then get a Schofield pair 
invariant of total degree $dmn$. However, since all the $W^{\alpha}$ appear in the 
$x$ block-column, which consists of exactly $dn$ columns, the $W$-degree of 
\emph{every} term of $s(V,W)$ is exactly $dn$. This is equivalent to 
(\ref{eqn:blockcol}) for the vertex $x$. This $d$ matches with the $d$ of 
(\ref{eqn:generic_semi-invariants}). 

Considering equation (\ref{eqn:Vminiblockcol}) for the block-column $y$, we find 
that there are no $\alpha$ such that $s \alpha = x$, so the equation reduces to
\begin{eqnarray*}
\sum_\alpha \left( \dim V^{s \alpha} - \sum_j \omega^\alpha_{i,j} \right) & = & 
\dim V^y \qquad (\forall i) \\
\end{eqnarray*}
Since $s \alpha = x$ for all $\alpha$, we have $\dim V^{s \alpha} = \dim V^x = d$ 
for all $\alpha$. Thus $\sum_{\alpha} \dim V^{s \alpha} = md$. Combining this with 
the above equality, and the fact that $\dim V^y = (m-1)d$, yields the second 
equation of (\ref{eqn:generic_semi-invariants}).

Considering equation (\ref{eqn:Vminiblockcol}) for the block-column $x$, we find 
that there are no $\alpha$ such that $t \alpha = x$, and the equation immediately 
becomes the third equation of (\ref{eqn:generic_semi-invariants}). 
\end{ex}

\subsubsection{Dynkin quivers.} The ``A--D--E'' Dynkin quivers are important 
because 
these are the only quivers of \emph{finite} representation type: they have only 
finitely many indecomposable representations. All other quivers have infinitely 
many. The Dynkin quiver of type $A_n$ is defined by having its underlying 
undirected graph be a line on $n$ vertices. $D_n$ is a line on $n-2$ vertices, 
with two additional vertices attached to one end. $E_n$ for $n=6,7,8$ (the only 
ones relevant to the preceding classification) is a path of length $n-1$, together 
with an additional vertex attached to the third vertex on the path. The 
classification statement above is independent of the orientation of the edges, but 
the invariant theory can change with a change in orientation, so we must take some 
care.

\begin{theorem} \label{thm:ADE}
For any of the ADE Dynkin quivers, with arbitrary orientation of arrows, the 
Newton degeneration of a generic semi-invariant with dimension vector $(n_1, 
\dotsc, n_k)$ and degree $d$ to an arbitrary face has determinantal complexity 
$\leq \poly(\sum n_i, d)$.
\end{theorem}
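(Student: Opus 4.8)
The plan is to carry out the strategy announced just before the statement: realise the generic semi-invariant as a single Schofield invariant $s^V(W)=\det(d^V_W)$ of controlled size, exhibit a system of defining (in)equalities for its Newton polytope all of which have small coefficient complexity, and then invoke Theorem~\ref{thm:small_coeff_complexity}. First I would fix a weight $\sigma$ with $\langle\sigma,\beta\rangle=0$ of total degree $d$, and use that for an A--D--E quiver $Q$ the Euler form is non-degenerate (its symmetrisation is the positive-definite Cartan matrix of the corresponding Dynkin diagram). Hence $\sigma$ determines a unique dimension vector $\gamma$ with $\sigma=-\langle\gamma,-\rangle$, and by \cite{DW00,schofieldSI} the space $SI(Q,\beta)_\sigma$ is spanned by the single Schofield invariant $s^V(W)$, where $W$ is the variable representation of dimension $\beta$ and $V$ is a general representation of dimension $\gamma$. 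A generic semi-invariant of dimension vector $\beta$ and degree $d$ is therefore, up to scalar, such an $s^V(W)$. Since $\deg_W s^V(W)$ grows at least linearly in $|\gamma|$, the hypothesis that the degree is $d$ forces $|\gamma|=\poly(|\beta|,d)$, so the matrix $d^V_W$ has $\sum_x(\dim W^x)(\dim V^x)\le|\beta|\cdot|\gamma|=\poly(|\beta|,d)$ rows and columns. In particular $s^V(W)$ already has a symbolic determinant (equivalently a weakly skew circuit) of size $\poly(|\beta|,d)$, and $\npt(s^V(W))$ lives in $\R^N$ with $N=\poly(|\beta|,d)$.

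Next I would record that $\npt(s^V(W))$ satisfies a system $\mathcal S$ of $\poly(|\beta|,d)$ linear (in)equalities in the exponent variables $\omega^\alpha_{i,j}$, all with coefficients in $\{0,\pm1\}$: namely the $\omega$-only specialisations of \eqref{eqn:pos}--\eqref{eqn:Wminiblockcol} (the block- and mini-block-counting relations, where the number of $V$-entries picked in each block is now determined by complementarity, $V$ being a fixed general representation), together with the handful of additional $\{0,\pm1\}$-inequalities needed in the degenerate cases flagged above (e.g.\ when some $\dim W^x$ or $\dim V^x$ equals $1$). Consequently $\coeff(\mathcal S)=\poly(|\beta|,d)$. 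Note that we make no attempt to enumerate the vertices of a face of $\npt(s^V(W))$---there can be exponentially many---and we do not need to: Theorem~\ref{thm:small_coeff_complexity} only requires a small-coefficient \emph{description} of the polytope.

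The crux is to show that, for an A--D--E quiver, the system $\mathcal S$ in fact \emph{defines} $\npt(s^V(W))$, i.e.\ every integer point satisfying $\mathcal S$ is the exponent vector of a non-vanishing term of $\det(d^V_W)$ for a general $V$. Here I would lean on the two features special to A--D--E quivers: (i) the underlying graph is a \emph{tree}, so there is at most one path between any two vertices---exactly the simplification that collapsed the $k$-subspace case (all paths of length $1$) to a bipartite $f$-perfect-matching polytope---and (ii) \emph{finite representation type}, so a general $V$ (and a general $W$) decomposes in a combinatorially prescribed way into the finitely many indecomposables, pinning $d^V_W$ down up to change of basis. Using (ii) I would put $V$ into a canonical form; then, imitating the generalized-Kronecker and $k$-subspace arguments but now tracing the unique paths through the tree, I would set up an auxiliary bipartite (or layered) graph whose integral generalized matchings/flows biject with the lattice points cut out by $\mathcal S$. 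The total unimodularity coming from the tree structure shows $\mathcal S$ cuts out an integral polytope, and a non-vanishing-minor argument for general $V$ shows each such lattice point really occurs in $\det(d^V_W)$. The branch vertex of the $D$ and $E$ quivers, and the dependence on the orientation of the arrows, require extra care; I expect the cleanest implementation is an induction on $|Q_0|$ that strips a leaf of $Q$ and relates $s^V(W)$ to a Schofield-type invariant of $Q$ minus that leaf. \emph{This step is the main obstacle}, and it is precisely where finite representation type is indispensable: for a non-Dynkin quiver the analogous polytope can have super-polynomial coefficient complexity (cf.\ the discussion after Theorem~\ref{tintro2}).

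Granting that $\npt(s^V(W))=\{x\in\R^N : x \text{ satisfies } \mathcal S\}$, every face $Q$ of it is obtained by promoting some subset of the inequalities of $\mathcal S$ to equalities, so $Q$ has coefficient complexity at most $\coeff(\mathcal S)=\poly(|\beta|,d)$ with respect to $\npt(s^V(W))$. Since $s^V(W)$ has a weakly skew circuit of size $\poly(|\beta|,d)$, Theorem~\ref{thm:small_coeff_complexity} produces a weakly skew circuit---equivalently a symbolic determinant---of size $\poly(|\beta|,d)$ for $s^V(W)|_Q$, which is the assertion. (If one instead reads ``generic semi-invariant of degree $d$'' as a generic element of the sum of the weight spaces of total degree $d$ with $\langle\sigma,\beta\rangle=0$, one writes it as $\sum_i c_i\,s^{V_i}(W)$ with generic $c_i$ and observes that a face of $\npt(\sum_i c_i\,s^{V_i}(W))$ meets each $\npt(s^{V_i}(W))$ in a face, reducing to the single-invariant case---provided the number of such weight spaces is $\poly(|\beta|,d)$.)
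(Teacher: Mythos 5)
Your high-level blueprint matches the paper: use Schofield invariants, show the Newton polytope is cut out by small-coefficient (in)equalities, and invoke Theorem~\ref{thm:small_coeff_complexity}. But your proposal has two genuine gaps.

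The larger gap is the one you flag yourself: you announce that ``the crux is to show that \ldots the system $\mathcal S$ in fact defines $\npt(s^V(W))$,'' call it ``the main obstacle,'' and then offer only a plan (canonical forms, a flow/matching bijection, total unimodularity from the tree, induction by stripping a leaf) rather than an argument. That crux is precisely what the paper proves as Lemma~\ref{lem:ADE}, and it is by far the most substantial piece of work in the proof: it requires a detailed case analysis (type $A$ by induction on leaves, type $D$ reduced to a careful hand check of the four orientations of $D_4$, type $E$ reduced to $D_4$ by inducting along all three long arms), including showing that inequalities (\ref{eqn:Wrow_upper})--(\ref{eqn:Wminiblockcol}) are, variously, tight, redundant, or forced to equalities depending on the orientation and local dimension data. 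Your plan is plausible in outline but does not substitute for this work; in particular, ``total unimodularity coming from the tree structure'' is not enough on its own (unimodularity gives integrality of the polytope, not that every lattice point is actually realized as an exponent of a nonvanishing term of $\det(d^V_W)$), and you give no argument for the latter beyond the phrase ``a non-vanishing-minor argument.''

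The second, more technical gap is the move from the $(\omega,\nu)$ description to an ``$\omega$-only'' description $\mathcal S$ with $\{0,\pm1\}$ coefficients. You suggest obtaining $\mathcal S$ as ``$\omega$-only specialisations of (\ref{eqn:pos})--(\ref{eqn:Wminiblockcol}), where the number of $V$-entries picked in each block is now determined by complementarity.'' But the fine-grained inequalities (\ref{eqn:Vrow_upper}), (\ref{eqn:Vrow_lower}), (\ref{eqn:Vrow_lower_fixed}), (\ref{eqn:Wminiblockcol}) constrain \emph{row- and column-sums} of $\nu^\alpha$ (e.g.\ $\sum_k \nu^\alpha_{k,\ell}$ for each fixed $\ell$), not just the block totals $\sum_{k,\ell}\nu^\alpha_{k,\ell}$; these cannot be eliminated by a single substitution, and Fourier--Motzkin elimination of the $\nu$ variables need not preserve $\{0,\pm1\}$ coefficients. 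The paper avoids this issue entirely: Lemma~\ref{lem:ADE} describes $\npt\bigl(s(V,W)\bigr)$ in the full $(\omega,\nu)$ space, and then, because a face $Q$ of the projected polytope $\npt(s^V(W))=\pi\bigl(\npt(s(V,W))\bigr)$ pulls back to a face $\pi^{-1}(Q)$ of $\npt(s(V,W))$ with the same defining functional (extended by zero on the $\nu$-coordinates), one has $s^V(W)|_Q=s(V,W)|_{\pi^{-1}(Q)}$ after specializing $V$. Theorem~\ref{thm:small_coeff_complexity} is then applied to $s(V,W)$, not to $s^V(W)$. You should adopt that projection trick; without it, your claim that $\coeff(\mathcal S)=\poly(|\beta|,d)$ is unjustified.

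Two smaller notes. First, your reduction of ``a generic semi-invariant'' to a single Schofield invariant via non-degeneracy of the Euler form and finite representation type is a reasonable refinement (the paper just cites that $s^V$ span and handles a generic $V$), but you should justify that for Dynkin quivers a generic $V$ of fixed dimension vector gives a well-defined $s^V$ up to scalar, and that a generic combination over weights reduces to the single-weight case as you sketch at the end. Second, the bound $|\gamma|=\poly(|\beta|,d)$ needs the explicit $W$-degree formula as in the paper's proof of Theorem~\ref{thm:ADE}; your assertion that ``$\deg_W s^V(W)$ grows at least linearly in $|\gamma|$'' should be replaced by that computation.
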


The key to the proof is the following lemma about Schofield \emph{pair} invariants:

\begin{lemma} \label{lem:ADE}
For any of the ADE Dynkin quivers with arbitrary orientation, Equations 
(\ref{eqn:pos})--(\ref{eqn:Wminiblockcol}) define the Newton polytope of any 
Schofield pair invariant $s(V,W)$.
\end{lemma}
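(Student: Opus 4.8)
The plan is to realize both $\npt(s(V,W))$ and the polytope $P$ defined by Equations~(\ref{eqn:pos})--(\ref{eqn:Wminiblockcol}) as linear images of a single bipartite perfect matching polytope, and to compare them there. The inequalities defining $P$ were derived above so as to be valid on $\npt(s(V,W))$, hence $\npt(s(V,W)) \subseteq P$ for any quiver, and only the reverse inclusion is at issue. Since the A-D-E quivers are trees --- in particular have no loops --- every entry of $d^V_W$ is $0$ or $\pm$ a single variable, so $s(V,W) = \det(d^V_W) = \sum_{\pi} \varepsilon(\pi)\,\vecx^{e(\pi)}$, where $\pi$ runs over transversals of the nonzero entries of $d^V_W$, $\varepsilon(\pi) \in \{+1,-1\}$, and $e(\pi)$ records how many times each variable is used by $\pi$. (We may assume $s(V,W)\not\equiv 0$, the only relevant case.) Let $B$ be the bipartite graph on the rows and columns of $d^V_W$ with an edge for each nonzero entry, let $\mathrm{PM}(B)$ be its perfect matching polytope (integral, with vertices the $e$-vectors of the transversals), and let $\pi_*$ be the linear ``variable-counting'' map from edge space onto $(\omega,\nu)$-space. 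Then $\npt(s(V,W)) \subseteq \pi_*(\mathrm{PM}(B))$, and since each defining inequality of $P$ was proved valid on every $e(\pi)$, also $\pi_*(\mathrm{PM}(B)) \subseteq P$. Hence it suffices to prove (i) $P \subseteq \pi_*(\mathrm{PM}(B))$, so that $\pi_*(\mathrm{PM}(B)) = P$, and (ii) every vertex of $P$ is the exponent vector of a monomial that does not cancel in $s(V,W)$; then $\npt(s(V,W)) \supseteq \mathrm{vertices}(P)$ and, by convexity, $\npt(s(V,W)) = P$.

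For (i) I would induct on the underlying tree of the quiver. Pick a leaf vertex $x$, with unique incident arrow $\alpha$; say $x = s\alpha$ (the case $x = t\alpha$ is symmetric). Because $x$ is a leaf, the $x$ block-column of $d^V_W$ consists exactly of the columns carrying $W^\alpha$, and it sits entirely inside the $\alpha$ block-row, so in any transversal the $x$ block-column is matched into the $\alpha$ block-row through $W^\alpha$ entries. Given a point of $P$, Equations~(\ref{eqn:blockrow}), (\ref{eqn:blockcol}), (\ref{eqn:Wrow_upper})--(\ref{eqn:Vminiblockcol}) at $x$ and $\alpha$ pin down the row and column sums of its $\omega^\alpha$-block, placing $\omega^\alpha$ in a transportation polytope with integral vertices; lifting $\omega^\alpha$ to a nonnegative integer matrix gives the $W^\alpha$-part of a partial matching, the remaining rows of the $\alpha$ block-row are filled by $V^\alpha$ entries, and what is left over is (after adjusting the dimension vectors according to $\sigma_\pm$) a point of the corresponding polytope for the Schofield pair invariant of the tree quiver $Q \setminus \{x\}$, to which the induction hypothesis applies. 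The tree structure is exactly what guarantees that this peeling introduces no cyclic consistency obstruction; this is the step I expect not to extend to quivers with cycles, and it is the heart of why the statement is special to the A-D-E case.

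For (ii), fix a vertex $v$ of $P = \pi_*(\mathrm{PM}(B))$. Then $\pi_*^{-1}(v) \cap \mathrm{PM}(B)$ is a face of $\mathrm{PM}(B)$, hence the perfect matching polytope of a subgraph $B_v \subseteq B$, and the coefficient of $\vecx^v$ in $s(V,W)$ equals $\sum_{\pi \in \mathrm{PM}(B_v)} \varepsilon(\pi)$. Using the layered block structure of $d^V_W$ for a tree quiver, I would show that $B_v$ splits into pieces on which $\varepsilon$ is constant --- equivalently, that the restriction of $\det(d^V_W)$ to the relevant submatrix is, up to a global sign, a non-intersecting-path count in the tree in the spirit of the Lindstr\"{o}m--Gessel--Viennot lemma --- so that this coefficient equals $\pm\,|\mathrm{PM}(B_v)| \neq 0$ and $v \in \npt(s(V,W))$. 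I expect (ii) to be the most delicate part to write out carefully, with (i)'s tree induction being the conceptual crux; the remaining ingredients --- validity of the inequalities, integrality of the matching and transportation polytopes, and the final chain $P = \mathrm{conv}(\mathrm{vertices}(P)) \subseteq \npt(s(V,W)) \subseteq P$ --- are routine.
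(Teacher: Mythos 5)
Your proposal takes a genuinely different route from the paper's. The paper argues directly with the inequalities, inducting along the underlying tree ($A_n$ from an end vertex, $E_n$ from each long end down to a $D_4$ base) and, for each shape and orientation, exhibiting explicit monomials of $s(V,W)$ that realize each tight inequality. You instead factor the whole problem through the bipartite perfect-matching polytope $\mathrm{PM}(B)$ of nonzero entries of $d^V_W$ and the ``variable-counting'' projection $\pi_*$, reducing the lemma to (i) $P = \pi_*(\mathrm{PM}(B))$ and (ii) non-cancellation at every vertex of $P$. This is a cleaner conceptual decomposition, and the sandwich $\npt(s(V,W)) \subseteq \pi_*(\mathrm{PM}(B)) \subseteq P$ and the observation that the defining inequalities were derived from transversal structure (hence hold on all of $\mathrm{PM}(B)$) are both correct.

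The gap is in (ii). You correctly identify that $\pi_*^{-1}(v)\cap\mathrm{PM}(B)$ is a face of a bipartite matching polytope, hence $\mathrm{PM}(B_v)$ for some subgraph, and that the coefficient of $\vecx^v$ is $\sum_{\pi\in\mathrm{PM}(B_v)}\varepsilon(\pi)$. But the claim that $\varepsilon$ is constant on $\mathrm{PM}(B_v)$ --- the Lindstr\"om--Gessel--Viennot--style sign-coherence for tree quivers --- is asserted, not argued, and it is precisely what keeps $P$ from being strictly larger than the true Newton polytope. Since the whole point of the lemma is to rule out cancellation eating a vertex of $P$, leaving (ii) as ``I would show\ldots'' is leaving the crux unproved; the paper handles this by explicitly producing a non-cancelling monomial case by case rather than proving a uniform sign lemma. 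A secondary imprecision: in (i), peeling a leaf source $x$ pins only the \emph{column} sums of the $\omega^\alpha$-block (via Equation~(\ref{eqn:Vminiblockcol}) at $x$); the row sums are merely \emph{bounded} by (\ref{eqn:Wrow_upper})--(\ref{eqn:Wrow_lower}) and are coupled to $\nu^\alpha$, so $\omega^\alpha$ does not sit in a transportation polytope in the usual sense. The underlying integrality and peeling plan is still plausible, but the bookkeeping of how the leaf's constraints hand off to the truncated quiver $Q\setminus\{x\}$ (the dimension-vector adjustment you gesture at) needs to be made explicit for the induction to close.
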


Let us first see how the theorem follows from the lemma, then return to prove the 
lemma:

\begin{proof}[Proof of Theorem~\ref{thm:ADE} from Lemma~\ref{lem:ADE}]
The Newton polytope of a \emph{generic} Schofield semi-invariant $s^V(W)$---that 
is, for generic $V$---is the same as the projection of $\npt(s(V,W))$ into the $W$ 
subspace. Let $\pi$ be this projection. The $\pi$-preimage of a face $Q$ of 
$\npt(s^V(W))$ is therefore a face of $\npt(s(V,W))$, and thus $s^V(W)|_{Q} = 
s(V,W)|_{\pi^{-1}(Q)}$. By Lemma~\ref{lem:ADE}, the coefficient complexity of 
$\npt(s(V,W))$ is bounded by a polynomial in $\{\dim V^i, n_i | i \in V(Q)\}$. If 
we can bound these quantities by $\poly(d,n_1,\dotsc,n_k)$, then 
Theorem~\ref{thm:small_coeff_complexity} immediately completes the proof.

To bound the size of $d^V_W$, we determine the $W$-degree of $s(V,W)$ for variable 
$V$. This is easily calculated, using the description of $d^V_W$ above, as 
$$\sum_{x \in V(Q)} \min\{(\dim V^x)(\dim W^x), \sum_{\alpha : s \alpha = x} (\dim 
W^{t \alpha})(\dim V^{s \alpha})\}.$$ 
Thus we see that each component of the dimension vector of $V$ is bounded by $d$ 
(better upper bounds are possible, but this will suffice), which is small enough 
for the preceding argument to go through. Thus we have proved the theorem for 
generic \emph{Schofield} invariants. Finally, as the Schofield invariants $s^V$ 
linearly span the semi-invariants \cite{DW00}, Theorem~\ref{thm:ADE} follows.
\end{proof}

And now we proceed to the proof of the key lemma.

\begin{proof}[Proof of Lemma~\ref{lem:ADE}]
\textit{Type $A_n$.} We proceed by induction on $n$. Note that the $n=1$ case is 
trivial (there are no arrows), and the $n=2$ case is the degenerate case of the 
Kronecker quiver with only a single arrow, which was handled in 
Section~\ref{sec:Kronecker} and Example~\ref{ex:Kronecker}. 

Suppose that one of the end vertices, $x$, is a source (the sink case is 
analogous, swapping the roles of $V$ and $W$ and transposing if needed). Call the 
unique outgoing arrow $\alpha \colon x \to y$. Then in the $(x,*,*)$ block column, 
only $W^\alpha$ appears. Then (\ref{eqn:Vminiblockcol}) gives that $\sum_i 
\omega^\alpha_{i,j} = \dim V^x$ for all $j \in \dim W^x$. Summing over $j$ then 
gives $\sum_{i,j} \omega^\alpha_{i,j} = \dim V^x \dim W^x$. Equation 
(\ref{eqn:blockrow}) then becomes $\dim V^x \dim W^x + \sum_{k,\ell} 
\nu^\alpha_{k,\ell} = \dim W^y \dim V^x$, or equivalently
$$
\sum_{k,\ell} \nu^\alpha_{k,\ell} = \dim V^x ( \dim W^y - \dim W^x).
$$
By the nonnegativity of $\nu^\alpha_{k,\ell}$ we see that there can be no 
nontrivial semi-invariants unless $\dim W^y \geq \dim W^x$; furthermore, if $\dim 
W^y = \dim W^x$, then the Schofield pair invariants do not involve $V^\alpha$ at 
all. It is not hard to see that in this case the Schofield pair invariant is just 
$\det(W^\alpha)^{\dim V^x}$ times the Schofield pair invariant $s_x(V,W)$ of the 
quiver one gets by deleting the vertex $x$. Then $\npt(s(V,W)) = ((\dim V^x) \cdot 
\npt(\det(W^\alpha))) \times \npt(s_x(V,W))$, where the scalar dot here represents 
scaling the polytope (technically this is a Minkowski sum of 
$\npt(\det(W^\alpha))$ with itself $\dim V^x$ times, but since Newton polytopes 
are, in particular, convex, this is the same as scaling up the polytope). The 
product here represents Cartesian product, since the constraints on the $W^\alpha$ 
are independent of the constraints on the remaining variables in this case. So the 
inequalities are the $\dim V^x$-scaled inequalities for $\npt(\det(W^\alpha))$ 
(which is just a rescaling of the perfect matching polytope for the complete 
bipartite graph), and those for $\npt(s_x(V,W))$, which are 
(\ref{eqn:pos})--(\ref{eqn:Wminiblockcol}), by induction.

Otherwise, $\dim W^y > \dim W^x$. In this case, (\ref{eqn:Wminiblockcol}) says 
that $\sum_k \nu^\alpha_{k,\ell} \geq \dim W^y - \dim W^x$. Summing these over all 
$\ell$ and combining with the equation above, we get that in fact $\sum_k 
\nu^\alpha_{k,\ell}$ is \emph{equal} to $\dim W^y - \dim W^x$, for each $\ell$. 

Now, we have two cases: either the other arrow incident on $y$ is oriented towards 
$y$ or away from $y$. 

Case 1: $\beta \colon z \to y$ is oriented towards $y$ (in particular, $y$ is a 
sink). In this case, the only entries that appear in the $y$-block-column are 
$V^\alpha$ and $V^\beta$. The equations for this block-column and the 
corresponding mini-block-columns are then exact, since there is no ``mixing of 
indices'' that would occur had there been both $V$'s and $W$'s in the same column. 
This allows us to easily link to the rest of the $d^V_W$ matrix and proceed 
inductively.

Case 2: $\beta \colon y \to z$ is oriented away from $y$. In this case, the 
entries that appear in the $y$-block-column are $V^\alpha$ and $W^\beta$. In this 
case, the equations for the $y$ block column and the corresponding 
mini-block-columns give constraints on the total degree of the $W^{\beta}$, and 
(\ref{eqn:Wminiblockcol}) links $V^\alpha$ (in the $y$-block column) to $V^\beta$ 
in the $z$-block column by an \emph{equality} (not inequality), again allowing us 
to easily proceed inductively to the rest of the matrix.

\textit{Type $D_n$.} The induction is in fact the same, starting from the ``long 
end,'' of the quiver. The base case, however, is now $D_4$. One orientation of the 
$D_4$ quiver gives the 3-subspace quiver, which was handled in 
Section~\ref{sksubspace}. So we must handle the other possible orientations. Let 
$u$ be the ``central'' vertex (degree 3), and $v_1, v_2, v_3$ the outer vertices 
of degree 1. Because of the $S_3$ symmetries of $D_4$, there are only four 
possible orientations up to symmetry, determined precisely by whether there are 
0,1,2, or 3 arrows pointing towards $u$. When there are 3 we have the 3-subspace 
quiver, so we need only handle the other three cases.

We use $\alpha_i$ to denote the arrow between $v_i$ and $u$ (whichever direction 
it is facing), for each $i=1,2,3$.

Case 0: All of the arrows are pointing \emph{away} from $u$. In this case, the 
$u$-block-column contains only $V$ blocks, and each $v_i$ block column contains 
only its corresponding $W$ blocks. As in the $A_n$ case, this means that $\dim 
W^{u} \geq \dim W^{v_i}$ for all $i=1,2,3$ is required to get any non-constant 
semi-invariants. If $\dim W^{v_i} = \dim W^u$ for some $i$, then $V^{\alpha_i}$ 
doesn't appear at all in the Schofield pair invariant, and the Schofield pair 
invariant is a power of $\det(W^{\alpha_i})$ times the Schofield pair invariant 
for the quiver representation gotten by removing the vertex $v_i$. But the 
remaining quiver in this case is an $A_3$ quiver, which is covered by the $A_n$ 
case above. 

So now we assume that $\dim W^{u} > \dim W^{v_i}$ for all $i$. Any term of 
$s(V,W)$ is therefore determined by: (a) picking exactly $\dim W^{v_i}$ 
entries---in distinct rows and columns---from each of the $\dim V^{v_i}$ 
occurrences of $W^{\alpha_i}$, (b) for each $\ell$, picking exactly $\dim W^u - 
\dim W^{v_i}$ entries from among the $V^{\alpha_i}_{*,\ell}$ that are in rows 
different from those entries picked in (a), and (c) ensuring that for each $k$, 
exactly $\dim W^u$ entries are chosen from among the $V^{\alpha_{*}}_{k,*}$. Our 
goal is to show that any vertex of the polytope defined by 
(\ref{eqn:pos})--(\ref{eqn:Wminiblockcol}) satisfies these conditions. 

Equation (\ref{eqn:Vminiblockcol}) for $(v_1, j, *)$ turns into $\sum_i 
\omega^{\alpha_1}_{i,j} = \dim V^{v_1}$, and similarly for $v_2$ and $v_3$. 
For $(u,j,*)$, (\ref{eqn:Vminiblockcol}) becomes $\sum_{i \in [3]} ( \dim V^{s 
\alpha_i} - \sum_{j'} \omega^{\alpha_i}_{j,j'}) = \dim V^u$. 

Equation (\ref{eqn:Wminiblockcol}) for $(u,*,k)$ turns into $\sum_{i \in [3]} 
\sum_\ell \nu^{\alpha_i}_{k,\ell} = \dim W^u$, which is precisely condition (c). 
For $(v_i, *, k)$ it becomes $\dim W^u - \sum_{k'} \nu^{\alpha_i}_{k',k} = \dim 
W^{v_i}$, or equivalently $\sum_k \nu^{\alpha_i}_{k,\ell} = \dim W^u - \dim 
W^{v_i}$ for all $\ell$ (that is, (\ref{eqn:Vrow_lower_fixed}) is already an 
equality for this quiver). 

For this quiver, (\ref{eqn:Vrow_lower}) is either redundant or automatically an 
equality. By the preceding paragraph, (\ref{eqn:Vrow_lower}) turns into 
$\sum_{i,j} \omega^{\alpha_1}_{i,j} \geq \dim W^{v_1}$, when in fact we already 
have that $\sum_{i,j} \omega^{\alpha_1}_{i,j} = \dim V^{v_1} \dim W^{v_1}$. So if 
$\dim W^{v_1} = 1$, then (\ref{eqn:Vrow_lower}) is automatically an equality, and 
if $\dim W^{v_1} > 1$ then it is redundant so we need not worry about using it for 
defining vertices. Similarly for $v_2, v_3$. 

For this quiver, we also see that (\ref{eqn:Vrow_upper}) is redundant, since 
(\ref{eqn:Vrow_lower_fixed}) is an equality. This leaves only the nonnegativity 
constraints (which are easily satisfied by equality by considering monomials not 
involving the given variable), and (\ref{eqn:Wrow_upper})--(\ref{eqn:Wrow_lower}).

Setting (\ref{eqn:Wrow_upper}) to an equality amounts to picking exactly on 
element from the $i$-th row of each copy of $W^{\alpha}$, and there are certainly 
monomials in the Schofield invariant that do this. Once this is done, 
(\ref{eqn:Wrow_lower}) and (\ref{eqn:Wrow_lower_fixed}) become redundant. 

Setting (\ref{eqn:Wrow_lower_fixed}) to an equality amounts to only picking 
elements of $W^{\alpha}$ from its $i$-th row in $\dim V^{v_1} - \dim V^u$ of the 
copies of $W^{\alpha}$, rather than from each copy. It is possible to find 
monomials that do this for several values $i$; when it is no longer possible, this 
is because there simply aren't enough rows to accommodate not picking from some of 
them. But this is ruled out by the relations needed between the dimensions of the 
$V$'s and $W$'s to make $d^V_W$ square. Once this is done, (\ref{eqn:Wrow_upper}) 
becomes redundant, and (\ref{eqn:Wrow_lower}) becomes $\sum_{k,\ell} 
\nu^\alpha_{k,\ell} \geq \dim V^{v_1}$. 

Setting (\ref{eqn:Wrow_lower}) to an equality can be achieved by considering only 
the terms in $s(V,W)$ where $V^{\alpha}$ is only taken from its $i$-th occurrences 
(equivalently, by zero-ing out all except the $i$-th occurrence of $V^{\alpha}$). 
This can be done as often as we like, so long as there are enough $V$'s left in 
the $u$-block-column. But when there are no longer enough left, we will have 
reached the empty polytope, again by the necessary relations between the 
dimensions of $V$'s and $W$'s.

Thus for any vertex defined by the above equations, there is a monomial in 
$s(V,W)$ with that exponent vector, as desired.

The remaining cases, although they at first seem cosmetically different, turn out 
to have essentially the same proof. The key fact that enables this is that in each 
$v_i$ block column, at most one matrix appears, albeit multiple times.

\textit{Type $E_n$.} The induction is essentially the same, except now we must 
induct starting from \emph{each} of the three ``long ends,'' until we again get 
down to a $D_4$ quiver as the base case. This was handled above, so we are done.
\end{proof}

This proof easily extends to any quiver that is a tree with at most one vertex of 
degree $> 2$. We expect that it should extend without much difficulty to arbitrary 
trees, and it may even extend to completely arbitrary quivers.

\section{Additional examples of Newton degenerations in 
\texorpdfstring{$\vp$}{VP}} \label{sadditional}
In this section, we give  additional examples of representation-theoretic symbolic 
determinants 
whose Newton degenerations can  be computed by symbolic determinants of polynomial 
size.
These examples suggest that 
explicit  families in $\nvpws \setminus \vpws$ 
have to be rather  delicate.

\subsection{Schur polynomials, and the self-replication phenomenon}
For an integer $\ell$, the elementary symmetric polynomial 
$e_\ell(\vecx)\in\Z[x_1, \dots, x_n]$ is $\sum_{1\leq i_1<i_2<\dots<i_\ell\leq 
n}x_{i_1}\cdot x_{i_2}\cdot \cdots \cdot x_{i_\ell}$. In particular, for $\ell>n$ 
or $\ell<0$, $e_\ell(\vecx)=0$; for $\ell=0$, $e_\ell=1$.

A partition $\alpha=(\alpha_1, \dots, 
\alpha_n)$ is an element  in $\N^n$ with $\alpha_1\geq \alpha_2\geq \dots\geq 
\alpha_n$. The conjugate of $\alpha$, denoted $\alpha'$, is a partition in 
$\N^m$ defined by setting $\alpha'_i=|\{j\in[n]\mid \alpha_j\geq i\}|$, where 
$m\geq \alpha_1$. 
Let $|\alpha|:=\sum_i\alpha_i$. 

The Schur polynomial $\schur_\alpha(\vecx)$ in $\Z[\vecx]$ can be defined by 
the Jacobi--Trudi formula as:
$\schur_\alpha(\vecx)=\det[e_{\alpha'_i-i+j}(\vecx)]_{i, 
j\in[n]}$ \cite{fulton}. As $e_\ell$ can be computed by a depth-$3$ arithmetic 
formula of size 
$O(\ell^2)$ (by Ben-Or; see \cite{NW97}), $\schur_\alpha$ can be computed by a 
weakly-skew circuit 
of 
size $\poly(|\alpha|, n)$. 

It is well-known that Schur polynomials are symmetric and homogeneous. For a 
partition $\beta$ such that  $|\beta|=|\alpha|$, the monomial $\vecx^\beta$ is in 
$\schur_\alpha$ if and only if $\alpha$ \emph{dominates} $\beta$; that is, for 
every $i\in[n]$, $\sum_{j=1}^i\alpha_j\geq \sum_{j=1}^i\beta_j$. It follows that 
the Newton polytope of $\schur_\alpha$ is the permutohedron with respect to  
$\alpha$, 
denoted $PH_\alpha$. It is defined as follows. For $\pi\in \sg_n$, let 
$\alpha^\pi=(\alpha_{\pi(1)}, \dots, \alpha_{\pi(n)})$. Then $PH_\alpha$ is the 
convex hull of $\alpha^\pi$'s, where $\pi$ ranges over all permutations in  
$\sg_n$. 

To determine the Newton degeneration of Schur polynomials, we need to understand 
faces of this permutohedron. Any face $Q$ of $PH_\alpha$ is determined by a 
sequence of nested subsets $\emptyset\subsetneq S_1\subsetneq \dots \subsetneq 
S_k=[n]$. For such a sequence, $Q$ is the convex hull of $\alpha^\pi$'s with 
$\pi\in 
\sg_n$ satisfying: for any $1\leq i<j\leq k$, and any $p\in S_i$ and $q\in S_j$, 
$\alpha_{\pi(p)}\leq \alpha_{\pi(q)}$. Thus, the face $Q$ is the Minkowski sum of 
$k$ permutohedra, and $\schur_\alpha|_Q$ is a product of Schur polynomials, which 
in turn can be computed by a weakly-skew arithmetic circuit of size 
$\poly(|\alpha|, n)$. This yields:

\begin{prop}
The Newton degeneration of a Schur polynomial to any face of its Newton polytope 
is a product of some Schur polynomials. 
\end{prop}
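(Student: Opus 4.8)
The plan is to turn the discussion preceding the proposition into a proof. Write $\alpha=(\alpha_1\geq\dots\geq\alpha_n)$, and recall from that discussion that $\npt(\schur_\alpha)=PH_\alpha$, and --- by the dominance characterization of the monomials of $\schur_\alpha$ together with the fact that $\schur_\alpha$ is symmetric --- that \emph{every} lattice point of $PH_\alpha$ is the exponent vector of a monomial of $\schur_\alpha$. A face $Q$ of $PH_\alpha$ is cut out by a flag $\emptyset\subsetneq S_1\subsetneq\dots\subsetneq S_k=[n]$; put $B_i=S_i\setminus S_{i-1}$ (with $S_0=\emptyset$), $b_i=|B_i|$ and $m_i=b_1+\dots+b_i$. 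Replacing the flag by its complementary reverse flag $S_i':=[n]\setminus S_{k-i}$ if necessary, we may assume that $Q$ is exactly the face of $PH_\alpha$ on which the linear functional $v\mapsto\sum_{j\in S_i}v_j$ attains its maximum value $\alpha_1+\dots+\alpha_{m_i}$ for every $i\in[k-1]$. A routine computation with these tight inequalities then identifies $Q$, inside $\R^n=\prod_i\R^{B_i}$, with the Cartesian product (equivalently, the Minkowski sum of the coordinate embeddings)
\[
  Q \;=\; PH_{\gamma^{(1)}}(\vecx_{B_1})\times\dots\times PH_{\gamma^{(k)}}(\vecx_{B_k}),
  \qquad \gamma^{(i)}:=(\alpha_{m_{i-1}+1},\dots,\alpha_{m_i}),
\]
where $PH_{\gamma^{(i)}}(\vecx_{B_i})$ is the permutohedron of the partition $\gamma^{(i)}$ in the coordinates indexed by $B_i$; in other words $\alpha$ is the concatenation $\gamma^{(1)}\gamma^{(2)}\cdots\gamma^{(k)}$ of consecutive runs of its parts.

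Next I would establish the polynomial identity $\schur_\alpha|_Q=\prod_{i=1}^k\schur_{\gamma^{(i)}}(\vecx_{B_i})$, which gives the proposition at once (and, via Jacobi--Trudi, a weakly-skew circuit of size $\poly(|\alpha|,n)$ for $\schur_\alpha|_Q$). Perform the substitution that replaces $x_j$ by $t^{i-1}x_j$ for every $j\in B_i$. By the description of $Q$ above, the $t$-weight $\sum_i(i-1)\bigl(\sum_{j\in B_i}e_j\bigr)$ of a monomial $\vecx^{\vece}$ of $\schur_\alpha$ is minimal exactly when $\vece$ lies on $Q$; hence, as in the proof of Theorem~\ref{tvpbar}, $\schur_\alpha|_Q$ is the coefficient of the lowest power of $t$ in the substituted polynomial. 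Now expand using the iterated coproduct formula for Schur polynomials,
\[
  \schur_\alpha(\vecx_{B_1},\dots,\vecx_{B_k})
  \;=\;
  \sum_{\emptyset=\mu^{(0)}\subseteq\mu^{(1)}\subseteq\dots\subseteq\mu^{(k)}=\alpha}
  \ \prod_{i=1}^k \schur_{\mu^{(i)}/\mu^{(i-1)}}(\vecx_{B_i}),
\]
the sum ranging over chains of partitions inside $\alpha$, where $\schur_{\lambda/\mu}(y_1,\dots,y_b)$ vanishes unless $\lambda/\mu$ has at most $b$ boxes in each column. Every monomial in the summand indexed by a chain $(\mu^{(\bullet)})$ carries the $t$-power $\sum_i(i-1)\bigl(|\mu^{(i)}|-|\mu^{(i-1)}|\bigr)=(k-1)|\alpha|-\sum_{l<k}|\mu^{(l)}|$, and, since Schur polynomials have nonnegative coefficients, the summands do not cancel. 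Thus the lowest power of $t$ is controlled by the chains with nonzero product --- i.e.\ with $\mu^{(i)}/\mu^{(i-1)}$ having at most $b_i$ boxes per column for all $i$ --- that maximize $\sum_{l<k}|\mu^{(l)}|$. I claim this maximum is attained only by the greedy chain $\mu^{(i)}=(\alpha_1,\dots,\alpha_{m_i})$, the top $m_i$ rows of $\alpha$: subject to the column bounds, the largest admissible $\mu^{(i)}$ at each step is $(\alpha_1,\dots,\alpha_{m_i})$, and then $\mu^{(i)}/\mu^{(i-1)}$ is a band of $b_i$ consecutive full rows of $\alpha$, hence a vertical translate of the straight shape $\gamma^{(i)}$ (in particular a skew shape with a nonzero Schur polynomial in $b_i$ variables). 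For such a translate $\schur_{\mu^{(i)}/\mu^{(i-1)}}=\schur_{\gamma^{(i)}}$, so the lowest-$t$ coefficient is exactly $\prod_{i=1}^k\schur_{\gamma^{(i)}}(\vecx_{B_i})$, as wanted.

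The only step with real content is the maximization claim: that the greedy chain is the \emph{unique} maximizer of $\sum_{l<k}|\mu^{(l)}|$ over chains of partitions contained in $\alpha$ subject to the column bounds. This should be short --- enlarging $\mu^{(i-1)}$ can only enlarge the set of admissible $\mu^{(i)}$, so the greedy choices are simultaneously optimal and forced. I expect this to be the main (and essentially only) nontrivial point; everything else --- that $PH_\alpha$ is the saturated Newton polytope of $\schur_\alpha$, the description of $PH_\alpha$ by its subset inequalities, the coproduct identity, and the translation-invariance of skew Schur polynomials --- is standard.
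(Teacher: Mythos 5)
Your proof is correct and is essentially a careful writing-out of the paper's sketch: both rest on identifying $\npt(\schur_\alpha)$ with the permutohedron $PH_\alpha$ and decomposing a face $Q$, via a flag of subsets, into a product of smaller permutohedra on disjoint coordinate blocks; you then prove the asserted factorization $\schur_\alpha|_Q = \prod_i \schur_{\gamma^{(i)}}(\vecx_{B_i})$ (which the paper merely states) via the $t$-degeneration from the proof of Theorem~\ref{tvpbar}, the iterated coproduct into skew Schur polynomials, and non-cancellation from Schur-positivity. One small repair is needed in your uniqueness step: the heuristic ``enlarging $\mu^{(i-1)}$ can only enlarge the set of admissible $\mu^{(i)}$'' is false as stated, since after enlarging $\mu^{(i-1)}$ one also needs $\mu^{(i-1)}\subseteq\mu^{(i)}$, which can exclude previously admissible $\mu^{(i)}$; instead, telescope the column bounds to get $(\mu^{(i)})'_c\leq b_1+\dots+b_i=m_i$ for \emph{every} admissible chain, hence $\mu^{(i)}\subseteq(\alpha_1,\dots,\alpha_{m_i})$ for each $i$ independently, so $\sum_{l<k}|\mu^{(l)}|$ is maximized precisely when each $\mu^{(l)}=(\alpha_1,\dots,\alpha_{m_l})$.
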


Thus the Newton polytope of a Schur polynomial has the \emph{self-replication} 
property: any face of this polytope
is a Minkowski sum of  smaller polytopes of the same kind. 
Another polynomial with such  self-replication property is the resultant, cf. 
Sturmfels 
\cite{Sturmfels}. 

\subsection{Monotone circuits}
An arithmetic circuit over $\Q$ or $\R$ is called \emph{monotone} if it uses only 
nonnegative field elements as constants. No cancellation can happen during the 
computation by
a monotone  circuit. This fact can be used to prove the following result.

\begin{theorem}\label{thm:monotone}
If $f\in\R[x_1, \dots, x_n]$ can be computed by a monotone arithmetic circuit of 
size $s$, then, for any face $Q$ of $\npt(f)$, $f|_Q$ can be computed by a 
monotone 
arithmetic circuit of size $\leq s$.
\end{theorem}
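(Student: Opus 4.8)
The plan is to track how a monotone circuit computes a given face restriction directly, by replacing every gate with the "face-truncated" version of the polynomial it computes. Fix a face $Q$ of $\npt(f)$, and let $H = \{v \in \R^n \mid \langle \veca, v \rangle = b\}$ be its supporting hyperplane, normalized so that $\langle \veca, \vece \rangle \geq b$ for every exponent vector $\vece$ appearing in $f$; thus $f|_Q$ collects exactly the terms $\alpha_\vece \vecx^\vece$ with $\langle \veca, \vece \rangle = b$. For a general polynomial $h = \sum_\vece c_\vece \vecx^\vece$, write $\mu(h) := \min\{\langle \veca, \vece\rangle \mid c_\vece \neq 0\}$ and let $\mathrm{in}(h) := \sum_{\langle\veca,\vece\rangle = \mu(h)} c_\vece \vecx^\vece$ be its initial form in the direction $\veca$; so $f|_Q = \mathrm{in}(f)$ (after checking $\mu(f)=b$, which holds since $Q$ is a genuine face).

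First I would record the behavior of $\mathrm{in}(\cdot)$ under the two circuit operations. For multiplication, $\mathrm{in}(h_1 h_2) = \mathrm{in}(h_1)\,\mathrm{in}(h_2)$ and $\mu(h_1 h_2) = \mu(h_1) + \mu(h_2)$, with \emph{no} monotonicity needed: the lowest $\veca$-degree part of a product is always the product of the lowest parts, because the minimum of $\langle\veca,\vece_1+\vece_2\rangle$ over the product support is achieved exactly on products of minimizing terms, and these cannot all cancel as a polynomial identity. For addition, $\mu(h_1+h_2) = \min(\mu(h_1),\mu(h_2))$ provided no cancellation destroys the lower-degree part — and here is where monotonicity enters: if both $h_1,h_2$ are computed by monotone subcircuits, then all their coefficients are nonnegative, so in $h_1+h_2$ no term cancels, hence $\mu(h_1+h_2)=\min(\mu(h_1),\mu(h_2))$ and $\mathrm{in}(h_1+h_2)$ equals $\mathrm{in}(h_1)$, $\mathrm{in}(h_2)$, or $\mathrm{in}(h_1)+\mathrm{in}(h_2)$ according to which of $\mu(h_1),\mu(h_2)$ is smaller (or whether they are equal).

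With these two facts in hand, the construction is straightforward: process the gates of the monotone circuit $C$ computing $f$ in topological order, and for each gate $g$ computing $h_g$ define a new gate $g'$ that will compute $\mathrm{in}(h_g)$, together with the scalar $\mu(h_g) \in \Z$ maintained out-of-band (it is not part of the circuit, just bookkeeping to decide which inputs to keep at $+$ gates). An input gate carrying a variable $x_i$ maps to itself with $\mu = a_i$; an input gate carrying a nonnegative constant $c$ maps to itself with $\mu = 0$ (or is the zero polynomial, handled trivially). A $\times$ gate maps to a $\times$ gate of the two truncated children, with $\mu$ the sum. A $+$ gate with children of $\mu$-values $m_1, m_2$ maps to: the truncated first child if $m_1 < m_2$; the truncated second child if $m_2 < m_1$; and a $+$ gate of the two truncated children if $m_1 = m_2$. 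The resulting circuit $C'$ computes $\mathrm{in}(f) = f|_Q$ at the old output gate, uses only constants already present in $C$ (hence is still monotone), and has at most as many gates as $C$ — in fact at a strict-inequality $+$ gate we discard one input entirely — giving size $\leq s$.

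The main obstacle, and the only place that is not pure bookkeeping, is justifying the no-cancellation claim at $+$ gates rigorously: one must argue that for \emph{every} subcircuit of a monotone circuit the polynomial computed has all nonnegative coefficients, so that $\mathrm{in}$ of a sum never loses its expected leading part. This is an easy induction on circuit structure (sums and products of polynomials with nonnegative coefficients again have nonnegative coefficients), but it is the crux — it is exactly the step that fails for general (non-monotone) circuits, matching the remark after Theorem~\ref{thm:small_coeff_complexity} that for arbitrary circuits one incurs a $\poly$ blow-up via Strassen's division elimination rather than getting the restriction for free. A minor point to dispatch along the way is the degenerate case where a gate computes the zero polynomial (so $\mu$ is undefined); such gates can be pruned from $C$ at the outset without affecting $f$.
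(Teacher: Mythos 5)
Your proposal is correct and takes essentially the same approach as the paper: both construct the restricted circuit $C'$ gate-by-gate, replacing each $\times$ gate with the product of truncated children and each $+$ gate with whichever child achieves the smaller minimum weight $\mu$ (or both when tied), with monotonicity invoked exactly at the $+$ gates to rule out cancellation of the initial form. The only cosmetic additions you make are the explicit remark that multiplicative gates need no monotonicity (since $\R[\vecx]$ is a domain) and the pruning of zero subcircuits, neither of which changes the argument.
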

\begin{proof}
Let $C$ be the a monotone arithmetic circuit of size $s$ computing $f$. Since $C$ 
is monotone, each subcircuit is also monotone and computes a polynomial with 
nonnegative coefficients.

Let $Q$ be defined by the supporting hyperplane 
$\langle\veca, \vecx\rangle=b$. Therefore, a monomial $\vecx^\vece$ in $f$ is 
present in $f|_Q$ if and only if $\langle \veca, \vece\rangle$ achieves the 
minimum $b$ among all monomials in $f$. 

We now transform $C$ to a monotone circuit $C'$ computing $f'$. The resulting $C'$ 
will be of the same structure of $C$, except that some edges may be removed. In 
particular, the gate sets of $C$ and $C'$ are the same: for a gate $v$ in $C$, we 
shall use $v'$ to denote the correspondent of $v$ in $C'$. We also use $f_v$ to 
denote the polynomial computed at the gate $v$. The goal is to ensure that, for 
any 
gate $v'$, in $C'$, $f_{v'}$ consists of those monomials achieving the minimum 
along 
$\veca$ among all monomials in $f_v$. 

This is achieved by induction on the depth. As the base case, we keep all the 
leaves of $C$ to be the leaves of $C'$.

Now we proceed to  the inductive step. By the induction hypothesis, for every gate 
$v'$ of depth $\leq d$, $f_{v'}$ consists of those monomials achieving the  
minimum along 
$\veca$ among all monomials in $f_v$. Let $w$ be a gate at depth $d+1$ in $C$, 
with two children $v$ and $u$. Suppose in $f_v$ (resp. $f_u$) the monomials 
achieve the minimum $b_v$ (resp. $b_u$) along $\veca$. If $w$ is labeled with $+$ 
and
 $f_w=f_v+f_u$, then we set $f_{w'}=f_{v'}$ if $b_v<b_u$, $f_{w'}=f_{u'}$ if 
$b_v>b_u$, and $f_{w'}=f_{v'}+f_{u'}$ if $b_u=b_v$. If $w$ is labeled with 
$\times$ and  $f_w=f_v \times  f_u$, then we set $f_{w'}=f_{v'}\times f_{u'}$. 
Since no cancellation can
happen, it follows from the induction hypothesis that
$f_{w'}$ consists of those monomials achieving the minimum along $\veca$ 
among all monomials in $f_w$.
\end{proof}

\begin{remark}
\begin{enumerate}
\item The resulting circuit $C'$ in the proof of Theorem~\ref{thm:monotone}
preserves  most of the structural properties of $C$. In particular, if $C$ is 
weakly-skew,
 then $C'$ is also weakly-skew.
\item The preceding  proof does not  work in the presence of cancellations. For 
example, when $w$ is labelled with $+$, we have $f_w=f_v+f_u$, but $f_{w'}$ may 
not be  $f_{v'}$,  $f_{u'}$, or $f_{v'}+f_{u'}$ due to cancellations. 
\end{enumerate}
\end{remark}

\subsubsection{Trace monomials}\label{subsubsec:trace}
We now apply Theorem~\ref{thm:monotone} to classical matrix invariants.

Let $\F[x_{i,j}^{(k)}]$ be the ring of polynomials in the variables 
$x_{i,j}^{(k)}$, where $i, j\in[n]$, 
and $k\in [m]$. Let
$X_k=(x_{i,j}^{(k)})$ be an $n \times n$ variable matrix whose $(i,j)$-th entry is 
$x_{i,j}^{(k)}$.
The classical matrix invariants are those polynomials in  $\F[x_{i,j}^{(k)}]$ that 
are invariant under the conjugation action 
of $\SL(n, \F)$. Under this action, $A \in \SL(n, \F)$ 
 sends $(X_1, \dots, X_m)$ to $(AX_1A^{-1}, \dots, AX_mA^{-1})$.  
The matrix invariants are semi-invariants of the quiver with a single vertex and 
$m$ self-loops.

Important examples of matrix invariants are the 
polynomials of the form $\Tr(X_{i_1}\cdot \cdots\cdot X_{i_\ell})$,
$i_j\in[m]$, called 
the {\em trace monomials}. By the first fundamental theorem of matrix 
invariants \cite{Pro76},  every matrix invariant is a linear 
combination of trace monomials in  characteristic zero.

Suppose $\F=\Q$ or $\R$. Then any
Newton degeneration of a trace monomial can be computed by a 
weakly skew arithmetic circuit of size $\poly(n, \ell)$. This follows from 
Theorem~\ref{thm:monotone}, since  a trace
monomial can be computed by a monotone weakly skew arithmetic circuit of size 
$\poly(n, \ell)$.

\section*{Acknowledgment}
We thank the anonymous reviewers for careful 
reading 
and suggestions that greatly helped to improve the writing of the paper. During this work, J.\ A.\ G.\ was supported by an SFI Omidyar Fellowship, K.\ D.\ M. by the NSF grant CCF-1017760, and Y.\ Q.\ by the Australian Research Council DECRA DE150100720.

\bibliographystyle{plain}
\bibliography{references}

\end{document}